\documentclass[journal]{IEEEtran}
\def\BibTeX{{\rm B\kern-.05em{\sc i\kern-.025em b}\kern-.08em
    T\kern-.1667em\lower.7ex\hbox{E}\kern-.125emX}}
\bibliographystyle{ieeetr}
\usepackage{epsfig,epstopdf,graphicx,psfrag,amsmath,cases}
\usepackage{latexsym,amssymb,amsmath,epsfig,algorithm,amsthm}
\usepackage{algorithmic}
\usepackage{color}
\usepackage{url}
\usepackage{amsthm}
\usepackage{scrtime}
\usepackage{bm}
\usepackage{cancel}
\usepackage{hyperref}
\usepackage{bbding}
\usepackage{multicol}
\usepackage{graphicx}
\usepackage{amsthm}
\usepackage{cite}
\usepackage{array}
\usepackage[utf8]{inputenc}
\usepackage[english]{babel}
\usepackage{setspace}
\usepackage{subcaption}
\addto\captionsenglish{}
\usepackage[font=small]{caption}

\begin{document}
\title{Robust and Secure Sum-Rate Maximization for Multiuser MISO Downlink Systems with Self-sustainable IRS}

\author{\IEEEauthorblockN{Shaokang Hu}}

\author{Shaokang~Hu,~\IEEEmembership{Student Member,~IEEE,}
           ~Zhiqiang~Wei,~\IEEEmembership{Member,~IEEE,}~Yuanxin~Cai,~\IEEEmembership{Student Member,~IEEE,}~Chang~Liu,~\IEEEmembership{Member,~IEEE,}~Derrick~Wing~Kwan~ Ng,~\IEEEmembership{Fellow,~IEEE}
           and~Jinhong~Yuan,~\IEEEmembership{Fellow,~IEEE}\thanks{This work has been presented in part at the IEEE Global Communications Conference, 2020\cite{hu2020sum}.} \thanks{The work of Z. Wei was supported by Alexander von Humboldt Foundation. The work of C. Liu was supported by the National Natural Science Foundation of China under Grant 61801082. D. W. K. Ng is supported by funding from the UNSW Digital Grid Futures Institute, UNSW, Sydney, under a cross-disciplinary fund scheme and by the Australian Research Council's Discovery Project (DP210102169). The work was partially supported by the Australian Research Council (ARC) Discovery Projects under Grant DP190101363 and by the ARC Linkage Project under Grant LP170101196.
           }
           \thanks{
           S. Hu, Y. Cai, C. Liu, D. W. K. Ng, and J. Yuan are with the School of Electrical Engineering and Telecommunications, University of New South Wales, Sydney, NSW 2052, Australia (e-mail: shaokang.hu@unsw.edu.au; yuanxin.cai@unsw.edu.au; chang.liu19@unsw.edu.au; w.k.ng@unsw.edu.au; j.yuan@unsw.edu.au).
           }
           \thanks{C. Liu also with the National Key Laboratory of Science and Technology on Communications, University of Electronic Science and Technology of China, Chengdu, China.
           }
           \thanks{Z. Wei is with the Institute for Digital Communications (IDC), Friedrich-Alexander University Erlangen-Nuremberg, Germany (e-mail: zhiqiang.wei@fau.de).
           }
           }

{\newtheorem{Thm}{Theorem}}
\newtheorem{theorem}{Theorem}
\newtheorem{Lem}{Lemma}
\newtheorem{Cor}{Corollory}
\newtheorem{Def}{Definition}
\newtheorem{Exam}{Example}
\newtheorem{Alg}{Algorithm}
\newtheorem{Prob}{Problem}
\newtheorem{Proof}{Proof}
\newtheorem{Remark}{Remark}
\newcommand{\abs}[1]{\lvert#1\rvert}
\newcommand{\norm}[1]{\lVert#1\rVert}

\maketitle
\begin{abstract}
This paper investigates robust and secure multiuser multiple-input single-output (MISO) downlink communications assisted by a self-sustainable intelligent reflection surface (IRS), which can simultaneously reflect and harvest energy from the received signals. We study the joint design of beamformers at an access point (AP) and the phase shifts as well as the energy harvesting schedule at the IRS for maximizing the system sum-rate. The design is formulated as a non-convex optimization problem taking into account the wireless energy harvesting capability of IRS elements, secure communications, and the robustness against the impact of channel state information (CSI) imperfection. Subsequently, we propose a computationally-efficient iterative algorithm to obtain a suboptimal solution to the design problem. In each iteration, $\mathcal{S}$-procedure and the successive convex approximation are adopted to handle the intermediate optimization problem. Our simulation results unveil that: 1) there is a non-trivial trade-off between the system sum-rate and the self-sustainability of the IRS; 2) the performance gain achieved by the proposed scheme is saturated with a large number of energy harvesting IRS elements; 3) an IRS equipped with small bit-resolution discrete phase shifters is sufficient to achieve a considerable system sum-rate of the ideal case with continuous phase shifts.
\end{abstract}
%
\section{Introduction}
\IEEEPARstart{T}{he} sixth-generation (6G) networks are expected to serve as a key enabler for realizing the future intelligent digital society in 2030, offering superior communication services compared with the current fifth-generation (5G) networks, such as ultra-high data rate, high energy efficiency, large global coverage, and highly secure communications \cite{zhang20196g}. To satisfy these requirements, various technologies have been proposed in recent years. In particular, massive multiple-input multiple-output (MIMO), millimeter wave (mmWave) communications, and ultra-dense networks (UDNs) are prominent candidates. Although these technologies are able to enhance the wireless network coverage and capacity to a certain extent, they generally incur increased hardware costs, energy consumption, and signal processing complexity, due to the aggressive deployment of large-scale antennas and active nodes in the systems \cite{wu2017overview,shafi20175g}. As such, it is still a fundamental challenge to achieve a sustainable capacity growth of future wireless networks in terms of the communication cost and energy efficiency.

To fulfill the stringent requirements of 6G networks\cite{zhang20196g}, the emerging intelligent reflecting surface (IRS)-assisted wireless communications \cite{wu2019intelligent} have received an unprecedented upsurge of attentions recently. Specifically, an IRS consists of a large number of low-cost passive reflection elements which can independently reflect the incident electromagnetic wave with controlled phase shifts. By intelligently adapting the phase shift of each IRS element to the conditions of communication channels, the reflected signals can be coherently combined at the desired receivers. As such, IRSs are able to reshape the signal propagation and to establish a favorable communication environment for achieving various purposes, such as enhancing the physical layer security, harnessing multiple access interference, and improving the quality of communication in terms of both spectral efficiency and energy efficiency\cite{wu2020intelligent}. For example, \cite{wu2019intelligent} confirmed that IRS-assisted communication systems can extend the signal coverage compared with conventional systems where the IRS is absent. Furthermore, the results in \cite{wu2019weighted} showed that the introduction of an IRS can significantly improve both the achievable system data rate and the total harvested power in simultaneous wireless information and power transfer (SWIPT) systems.

 Despite the fruitful results in the literature, e.g. \cite{liu2021deep,yu2021smart,9145224,wu2019intelligent,wu2020intelligent,abeywickrama2020intelligent,pan2020multicell,zhang2020joint}, most of the works idealistically assumed that the energy consumption of the IRS is negligible. However, typical power consumption values of each phase shifter are $1.5$, $4.5$, $6$, and $7.8$ mW for $3$-, $4$-, $5$-, and $6$-bit resolution phase shifters, respectively \cite{mendez2016hybrid,ribeiro2018energy,huang2018energy,huang2019reconfigurable}. Considering a hundred of IRS elements, the operational power of IRS with $3$-bit phase shifters becomes $0.15$ W which is comparable to the radiated communication power. In fact, IRSs are expected to be battery- or grid- powered \cite{wu2019towards}. However, powering IRSs via conventional powerline networks would not only increase the implementation cost, but would also decrease the flexibility of IRS deployment. On the other hand, battery-powered IRS are usually equipped with a limited energy storage leading to restrictive operational lifetime that creates a bottleneck in communication networks. Moreover, manually replacing batteries of IRS may be costly or even impossible due to environmental hazards. As a result, the amalgamation of energy harvesting with IRSs is a promising alternative for providing self-sustainable and uninterrupted communication services.
 In practice, the availability of major conventional energy harvesting resources, such as wind, geothermal, and solar, are usually limited by the weather and energy source locations. As a result, harvesting energy from radio frequency (RF) in wireless communication systems is more reliable and suitable than that from natural sources, due to the controllability of wireless power transfer (WPT)\cite{clerckx2018fundamentals}.
To enable uninterrupted
communication services, a self-sustainable IRS enabled by WPT was considered in \cite{lyu2020optimized,zou2020wireless} for improving the system sum-rate and capacity, respectively.
 However, \cite{lyu2020optimized} focused on the resource allocation design of a single-antenna AP and the obtained results cannot be directly extended to the case adopting a multi-antenna transmitter. On the other hand, although a multi-antenna AP system was considered in \cite{zou2020wireless}, the obtained results were developed for the case of a single receiver system, which are not applicable to systems serving with multiple receivers.
Moreover, all IRS elements in \cite{lyu2020optimized,zou2020wireless} are forced to adopt the same operating mode in every time instant, which is a strictly suboptimal setting decreasing the flexibility in resource allocation. Furthermore, both \cite{lyu2020optimized,zou2020wireless} ideally assumed a linear energy harvesting model for the conversion of the scavenged RF energy to the direct current (DC) power for simplicity. Yet, the conventional linear energy harvesting model fails to capture the non-linear features of practical energy harvesting circuits \cite{le2008efficient,guo2012improved} and resource allocation designs based on the simplified linear model can lead to severe resource allocation mismatches and performance degradation. Additionally, \cite{lyu2020optimized,zou2020wireless} assumed the availability of continuous phase shifters at IRS which can hardly implemented in particle due to high hardware implementation cost.
On the other hand, to facilitate effective energy harvesting, transmitters in SWIPT systems usually increase the power of the information-carrying signals, which may also increase the susceptibility to eavesdropping due to the higher potential of information leakage, which is not considered in \cite{lyu2020optimized,zou2020wireless}. Thus, communication security is a fundamental issue in wireless-powered IRS-assisted systems.

Conventionally, security provisioning methods rely on cryptographic techniques applied at the upper layers of wireless networks, which require high computational complexity leading to a large amount of energy consumption. Fortunately, different from cryptographic techniques, physical layer security is computationally-efficient and effective to safe guard communications by exploiting the inherent randomness of wireless channels \cite{sun2019physical}. In particular, there are several methods to enhance the physical layer security in wireless networks, e.g. cooperative jamming\cite{dong2009cooperative} and artificial-noise (AN)-aided beamforming \cite{sun2018robust}.
Indeed, exploiting AN or jamming signals in wireless networks is a double-edged sword. On the one hand,  it can facilitate secure communications by degrading the quality of eavesdroppers' channels. On the other hand, it may also cause inevitable interference to legitimate users. As a remedy, the application of IRS and AN has been investigated to guarantee secure communications, e.g.
\cite{shen2019secrecy,cui2019secure}. In practice, the degrees of freedom (DoF) offered by the IRS can provide a better interference management which alleviates the negative impacts caused by AN transmission. Indeed, to unleash the potential gain brought by the IRS, accurate channel state information (CSI) is needed for effective beamforming design. However, numerous literature, e.g. \cite{shen2019secrecy,cui2019secure}, over optimistically assumed that perfect CSI of eavesdropper channels is available at the AP which can hardly be obtained in practice. Especially, potential eavesdroppers may not interact with the AP and acquiring perfect CSI of the eavesdropping channel at the AP is challenging. In other words, adopting the results from \cite{shen2019secrecy,cui2019secure} for robust beamforming design may increase the risk of security breach even if an IRS is deployed. To this end, considering imperfect CSI of eavesdroppers, a secure IRS-assisted wireless network for maximizing the system sum-rate in the presence of multiple users and eavesdroppers was presented in \cite{yu2019robust}. Nevertheless, the design in \cite{lyu2020optimized,zou2020wireless,yu2019robust} still required the availability of perfect CSI of IRS communication channels. Unfortunately, due to the nearly passive implementation of the IRS, obtaining perfect CSI for the IRS related links is challenging, if not impossible. Also, the authors in \cite{zhou2020robust} showed that when the joint transmitter precoder and IRS phase shifts design ignores the existence of possible CSI errors, existing non-robust designs, e.g. \cite{shen2019secrecy,cui2019secure}, cannot always guarantee the required target rate which jeopardizes the system performance significantly. As a result, an efficient robust beamforming design to strike a balance between the system sum-rate and the IRS self-sustainability for secure communication is desired which has not been reported in the literature, yet.

{Motivated by the aforementioned observations, we consider a robust and secure IRS-assisted multiuser MISO downlink wireless system in the presence of multiple single-antenna potential eavesdroppers. In particular, by exploiting the large number of IRS elements, our design allows a portion of the IRS elements to reflect signals from the AP and the remaining to harvest energy for supporting the energy consumption of the IRS elements.}
To maximize the system sum-rate, we jointly optimize the beamforming for information signal and AN, the discrete phase shifts of the IRS, and the harvesting schedule at the IRS.
{The
resource allocation design is formulated as a non-convex mixed-integer optimization problem takes into account the imperfect CSI of all wireless channels, the non-linearity of energy harvesting, and the communication security quality of service (QoS) requirement}.
To address the design problem at hand, this paper proposes a computationally-efficient iterative suboptimal algorithm.
{In particular, we first propose a series of transformations to resolve the coupling between the optimization variables which paves the way for the development of a computationally efficient iterative algorithm based on the successive convex approximation (SCA) and semi-definite relaxation (SDR). Unlike the conventional alternation optimization (AO) method \cite{lyu2020optimized,zou2020wireless,abeywickrama2020intelligent,wu2019intelligent}, which can be easily trapped in inefficient stationary points over iterations in practice, our proposed algorithm is able to jointly optimize all these optimization variables in each iteration enjoying a higher system sum-rate.}
Our results not only show the non-trivial trade-off between the system sum-rate and the self-sustainability of IRS, but also unveil the impact of limited bit resolution of the IRS phase shifters on the system performance. In addition, our results reveal that deploying self-sustainable IRS can significantly enhance the physical layer security in wireless communication systems.

\emph{Notations}: The scalars, vectors, and matrices are represented by lowercase letter $x$, boldface lowercase letter $\mathbf{x}$, and boldface uppercase letter $\mathbf{X}$, respectively. $\mathbb{B}^{N \times M}$, $\mathbb{R}^{N \times M}$, and $\mathbb{C}^{N \times M}$ denote the space of $N \times M$ matrices with binary, real, and complex entries, respectively.  $\mathbb{H}^N$ denotes the set of all $N\times N$ Hermitian matrices. The modulus of a complex-valued scalar is denoted by $|\cdot|$.  The Euclidean norm, spectral norm, nuclear norm, and Frobenius norm of a matrix are denoted by $\|\cdot\|$, $\|\cdot\|_2$, $\|\cdot\|_*$, and $\|\cdot\|_{\mathrm{F}}$, respectively. The transpose, conjugate transpose, conjugate, expectation, rank, and trace of a matrix are denoted as $(\cdot)^{\mathrm{T}}$, $(\cdot)^{\mathrm{H}}$, $(\cdot)^{*}$, $\mathbb{E}\{\cdot\}$, $\mathrm{Rank}(\cdot)$, and $\mathrm{Tr(\cdot)}$, respectively. The maximum eigenvalue of a matrix is denoted by ${\lambda}_{\max}(\cdot)$ and the eigenvector associated with the maximum eigenvalue is denoted by $\bm{\lambda}_{\max}(\cdot)$. $[x]^+$ represents $\max\{0,x\}$. $\mathbf{X}\succeq\mathbf{0}$ and $\mathbf{X}\preceq\mathbf{0}$ mean that matrix $\mathbf{X}$ is positive semi-definite and negative semi-definite, respectively. $\mathrm{Re}(\cdot)$ stands for the real part of a complex number. $\mathrm{diag(\mathbf{x})}$ denotes a diagonal matrix with its diagonal elements given by vector $\mathbf{x}\in \mathbb{C}^{N\times1}$ and $\mathrm{vec}(\mathbf{X})$ results in a column vector obtained by sequentially stacking of the columns of matrix $\mathbf{X}$. Operator $[\mathbf{X}]_{nm}$ returns the element at the $n$-th row and the $m$-th column of the matrix $\mathbf{X}$. $j$ denotes the imaginary unit. For a continuous function $f(\mathbf{X})$, $\nabla_{\mathbf{X}}f(\cdot)$ represents the gradient of $f(\cdot)$ with respect to matrix $\mathbf{X}$. The distribution of a circularly symmetric complex Gaussian (CSCG) random variable with mean $\mu$ and variance $\sigma^2$ is denoted by $\mathcal{CN}(\mu, \sigma^2)$ and $\sim$ stands for ``distributed as''. $\mathbf{I}_N$ denotes an $N\times N$ identity matrix. $\otimes$ stands for the Kronecker product. 
\section{System Model}
\subsection{Signal Model}
\begin{figure}[t] 
  \centering
  \includegraphics[width=3.4in]{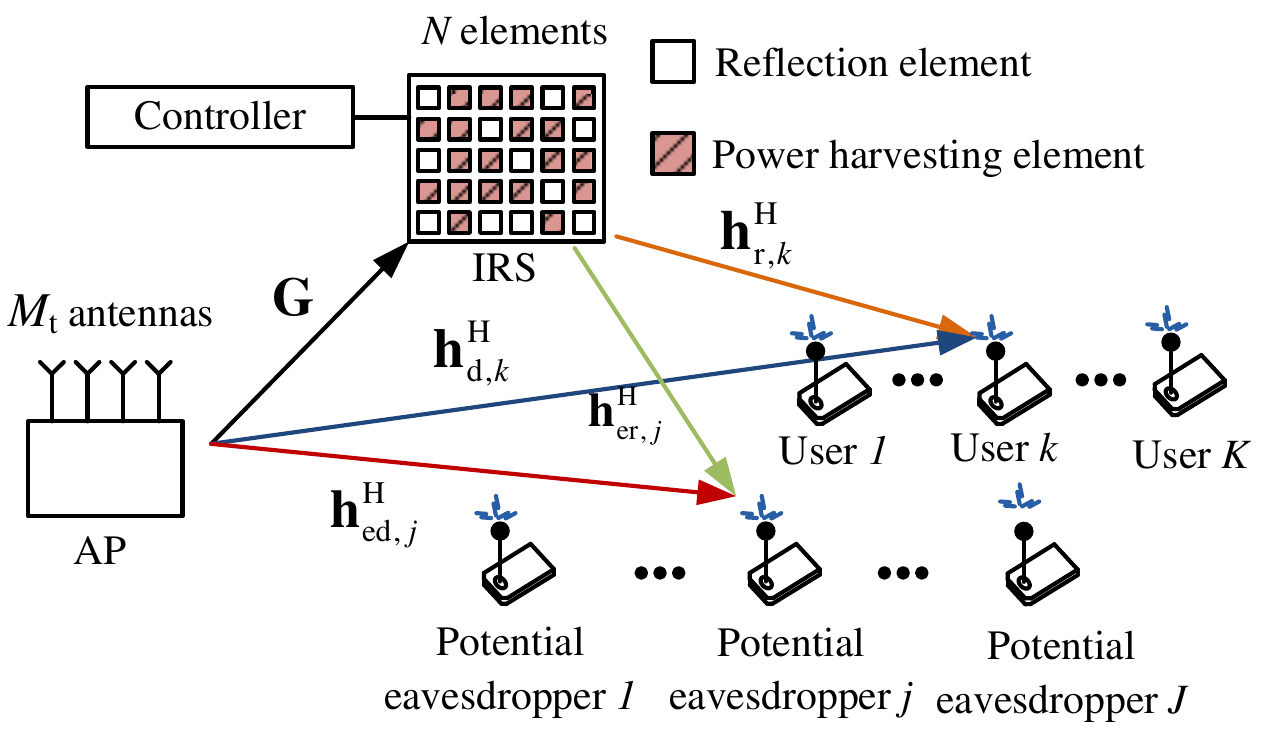}  
  \caption{\hspace{-1mm} {A downlink wireless communication system assisted by a self-sustainable IRS.}}
  \label{system_model} 
\end{figure}

As shown in Fig. \ref{system_model}, this paper considers a downlink MISO system assisted by a wireless-powered IRS. An AP equipping with $M_{\mathrm{t}}>1$ antennas transmits $K>1$ independent data streams to $K$ single-antenna users simultaneously, denoted by a set $ \mathcal{K} = \{1, \ldots ,K\} $. There are $J$ potential eavesdroppers equipped with a single-antenna each, denoted by a set $\mathcal{J} = \{1,\ldots, J\}$. We assume that all the eavesdroppers work independently and $M_{\mathrm{t}}> J$ holds to guarantee communication security, as commonly adopted in the literature\cite{sun2016multi,zhao2015joint,zhang2016secure}. Besides, the IRS panel consists of $N$ IRS elements, denoted by a set $\mathcal{N} = \{1, \ldots ,N\} $. The reflection matrix of the IRS is denoted as $\bm{\Theta}\in\mathbb{C}^{N\times N}$ and its details will be discussed in Section \ref{Section:IRS_MODEL}.

This paper assumes a quasi-static flat fading channel model. The baseband equivalent channels from the AP to the IRS (AP-IRS), from the IRS to the $k$-th user (IRS-user), from the AP to the $k$-th user (AP-user), from the IRS to the $j$-th potential eavesdropper (IRS-Eve), and from the AP to the $j$-th potential eavesdropper (AP-Eve) are denoted by $\bm{\mathbf{G}}\in\mathbb{C}^{N \times M_{\mathrm{t}}}$, $\mathbf{h}_{\mathrm{r},k}\in\mathbb{C}^{N \times 1}$, $\mathbf{h}_{\mathrm{d},k}\in\mathbb{C}^{M_{\mathrm{t}}\times 1}$, $\mathbf{h}_{\mathrm{re},j}\in\mathbb{C}^{N \times  1}$ and $\mathbf{h}_{\mathrm{ed},j}\in\mathbb{C}^{M_{\mathrm{t}}\times 1}$, respectively.
The transmitted signal from the AP is given by
\begin{align}\label{tx_signal}
    \mathbf{x} = \sum_{k\in\mathcal{K}}\mathbf{w}_k x_k+\mathbf{z},
\end{align}
where $\mathbf{w}_k\in\mathbb{C}^{M_{\mathrm{t}}\times1}$ is the precoding vector for the $k$-th user, $x_k\sim \mathcal{CN}(0,1)$, $\forall k \in \mathcal{K}$, with $\mathbb{E}\{|x_k|^2\} = 1$, is the modulated data symbol for the $k$-th user, and $\mathbf{z}\in\mathbb{C}^{M_{\mathrm{t}}\times 1}$ denotes the AN vector generated by the AP to deliberately combat the eavesdroppers. Specifically, $\mathbf{z}$ is a CSCG vector with $\mathbf{z}\sim\mathcal{CN}(\mathbf{0},\mathbf{Z})$, where $\mathbf{Z}\succeq\mathbf{0}$ is the covariance matrix of the AN vector. Note that the AN is assumed to be unknown to both the legitimate users and the potential eavesdroppers, which is to be optimized by the proposed algorithm to facilicate communication security provisioning and energy harvesting. We assume that the AP has a total transmit power $P_{\max}$, i.e., $\mathbb{E}\{\|\mathbf{x}\|^2\} = \sum_{k\in\mathcal{K}}\|\mathbf{w}_k\|^2+\mathrm{Tr}(\mathbf{Z})\le P_{\max}$.
In the system, each user receives signals via two links, i.e.,  from the indirect path via the IRS (AP-IRS-user) and the direct AP-user channel. Similarly, each potential eavesdropper also receives signals via two links, i.e., from the indirect path via the IRS (AP-IRS-Eve) and the direct AP-Eve channel. Thus, the signals received at the $k$-th user and at the $j$-th eavesdropper are given by\footnote{For a small-cell network with a cell radius of $200$ meters, the delay between the propagation path reflected by the IRS and the direct path is typically around $1$ $\mathrm{\mu s}$, which is much shorter than a symbol duration, e.g. $70$ $\mathrm{\mu s}$ in Long-Term Evolution systems\cite{arunabha2010fundamentals}. Thus, the potential inter-symbol interference caused by the two paths is not considered in \eqref{y}.}
\begin{align}
    &y_k \!\!=\!\! \Big(\!\mathbf{h}_{\mathrm{d},k}^\mathrm{H}\!+\!\mathbf{h}_{\mathrm{r},k}^\mathrm{H}\bm{\Theta}\mathbf{G}\!\Big)\!\Big(\!\sum_{i\in \mathcal{K}}\!\mathbf{w}_i x_i \!+\! \mathbf{z}\Big) +n_k, \forall k\in\mathcal{K},\text{ and}\label{y}\\
    &y_{\mathrm{eve},j}\!\! =\!\! \Big(\!\mathbf{h}_{\mathrm{ed},j}^\mathrm{H}\!+\!\mathbf{h}_{\mathrm{re},j}^\mathrm{H}\bm{\Theta}\mathbf{G}\!\Big)\!(\!\sum_{i\in \mathcal{K}}\!\mathbf{w}_i x_i \!+\! \mathbf{z})\!+\!{n}_{\mathrm{eve},j},\forall k\!\in\!\mathcal{K}, j\!\in\!\mathcal{J},\notag
\end{align}
respectively. Variables $n_k \sim \mathcal{CN}(0,\sigma_k^2)$ and ${n}_{\mathrm{eve},j}\sim \mathcal{CN}(0,\sigma_{\mathrm{eve},j}^2)$ denote the background noise at the $k$-th user with noise power $\sigma_k^2$ and at the $j$-th eavesdropper with noise power $\sigma_{\mathrm{eve},j}^2$, respectively.
Accordingly, the received signal-to-interference-plus-noise ratio (SINR) at the $k$-th user, $\forall k \in \mathcal{K}$, is given by
\begin{align}
    \mathrm{SINR}_k\!\!
     =\!\!\frac{|(\mathbf{h}_{\mathrm{d},k}^\mathrm{H}+\mathbf{h}_{\mathrm{r},k}^\mathrm{H}\bm{\Theta}\mathbf{G})\mathbf{w}_k|^2}
     {\sum\limits_{i\neq k}^K\!|\!(\mathbf{h}_{\mathrm{d},k}^\mathrm{H}\!\!+\!\mathbf{h}_{\mathrm{r},k}^\mathrm{H}\bm{\Theta}\mathbf{G})\mathbf{w}_i|^2
     \!\!+\!
    |(\mathbf{h}_{\mathrm{d},k}^\mathrm{H}\!\!+\!\mathbf{h}_{\mathrm{r},k}^\mathrm{H}\bm{\Theta}\mathbf{G})\mathbf{z}|^2\!\!+\!\sigma_k^2}\text{.}
\end{align}
The achievable rate (bits/s/Hz) of the $k$-th user is given by $R_k = \log_2\left(1+\mathrm{SINR}_k\right),\forall k \in \mathcal{K}\text{.}$
For security provisioning, we adopt the worst-case assumption on the eavesdropping capabilities of potential eavesdroppers. Specifically, the potential eavesdroppers are assumed to have unlimited computational resources such that they can cancel all the multiuser interference before decoding the desired information\cite{yu2019robust}. Therefore, the channel capacity between the AP and the $j$-th potential eavesdropper for decoding the signal of the $k$-th legitimate user is given by
\begin{align}
C_{k,j}\!\!=\!\log_2\Big(1\!+\!\frac{|(\mathbf{h}_{\mathrm{ed},j}^\mathrm{H}+\mathbf{h}_{\mathrm{re},j}^\mathrm{H}\bm{\Theta}\mathbf{G})\mathbf{w}_k|^2}{|(\mathbf{h}_{\mathrm{ed},j}^\mathrm{H}+\mathbf{h}_{\mathrm{re},j}^\mathrm{H}\bm{\Theta}\mathbf{G}) \mathbf{z}|^2+\sigma^2_{\mathrm{eve},j}}\Big), \forall k, j.
\end{align}
Hence, the achievable secrecy rate between the AP and the $k$-th legitimate user is
$R_{\mathrm{s},k}\!\!=\!\!\Big[R_{k}\!\!-\!\!\underset{j\in\mathcal{J}}{\max}\{C_{k,j}\}\Big]^+$,
and the system sum secrecy rate is given by $R_{\mathrm{s}}\!\!=\!\!\sum_{k\in\mathcal{K}}R_{\mathrm{s},k}$\cite{yu2019robust,9045989}.
\subsection{IRS Model}\label{Section:IRS_MODEL}
In this paper, we consider a realistic model of the IRS to achieve self-sustainability, which contains discrete phase shifters and a non-linear energy\footnote{In this paper, the unit of energy consumption is Joule-per-second. Therefore, the terms ``power'' and ``energy'' are interchangeable.} harvesting circuit.
\begin{figure}[t] 
  \centering
  \includegraphics[width=3.4in]{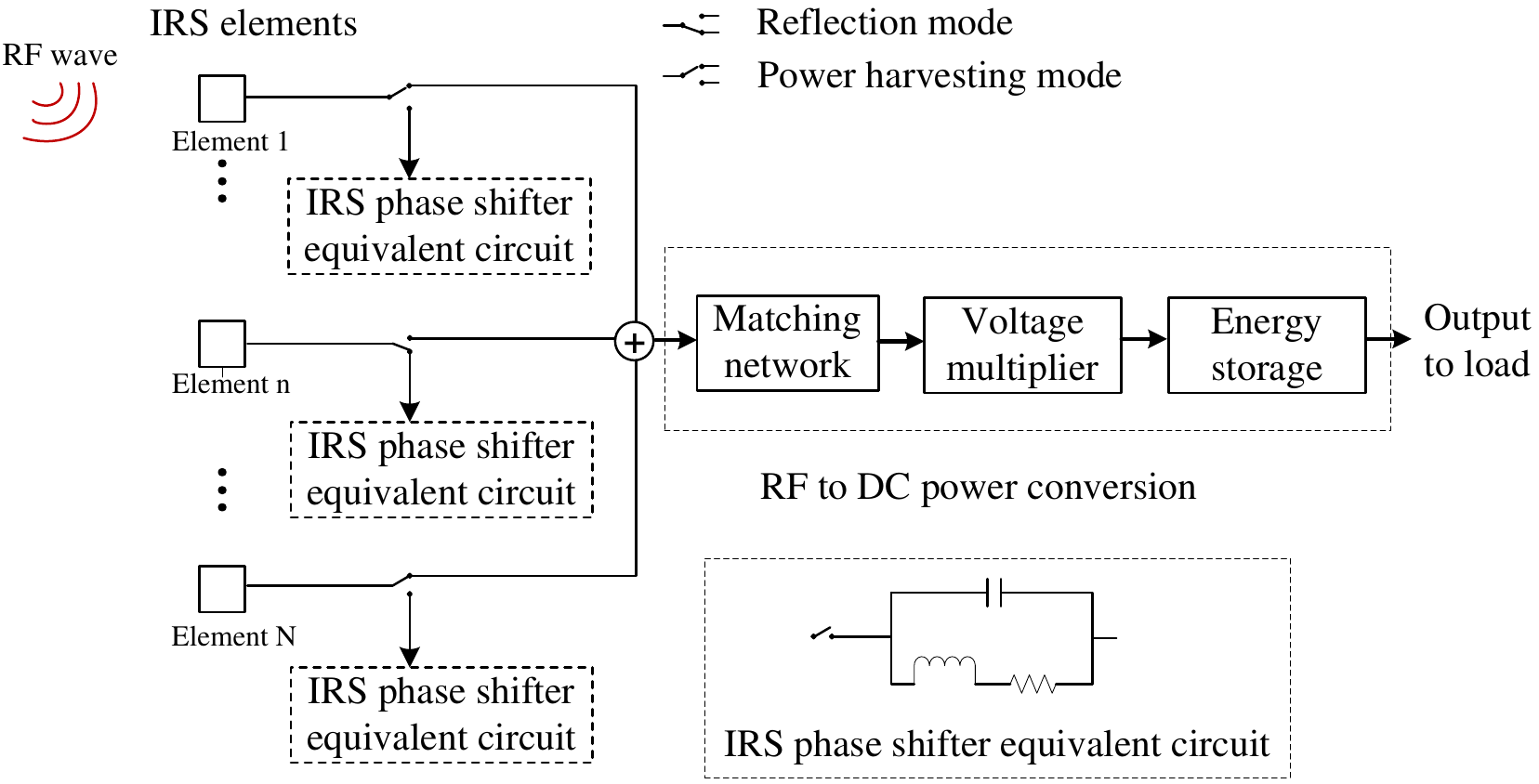}  
  \caption{{Energy harvesting block diagram of the self-sustainable IRS.}}
  \label{EH_circuit} 
\end{figure}
\subsubsection{Phase Shift Model}
The reflection matrix is defined as $\bm{\Theta}= \mathbf{A}\mathbf{\Phi}$, where $\mathbf{A}\in\mathbb{R}^{N\times N}$ is an IRS mode selection matrix and $\mathbf{\Phi} \in \mathbb{C}^{N\times N}$ is a diagonal matrix.
Matrix $\mathbf{A}= \mathrm{diag}(\alpha_{1},\ldots,\alpha_{n},\ldots,$ $\alpha_{N})\in \mathbb{B}^{N\times N}$, $\forall n \in \mathcal{N}$, and $\alpha_{n}\in\{0,1\}$ is an IRS mode selection variable which is defined as:
\begin{align}
    \alpha_{n} = \left\{
    \begin{array}{ll}
      \hspace{-2mm}  1, &\hspace{-2mm} \text{Reflection mode at IRS element} \,n\text{,}\\
      \hspace{-2mm}  0, &\hspace{-2mm} \text{Energy harvesting mode at IRS element} \,n\text{.}
    \end{array}
\right.
\end{align}
A simplified equivalent circuit of a self-sustainable IRS is shown in Fig. \ref{EH_circuit}. Specifically, several positive-intrinsic-negative (PIN) diodes are embedded in each IRS element such that the IRS elements can be switched to different modes \cite{wu2019towards}, i.e., energy harvesting mode or reflection modes\footnote{{Note that the IRS is equipped with a controller which switches the operation mode of the IRS elements between the reflection mode and power harvesting mode\cite{wu2020intelligent}.}}.
{In particular, IRS elements in reflection mode reflect all impinging signal waveforms, while the elements in energy harvesting mode harvest all the received energy carried by the signals. Once an IRS element is in reflection mode, all the received signals at that element are reflected and it cannot harvest any energy. Likewise, the elements operating in power harvesting mode do not reflect any received signal. Note that the mode selection policy will be optimized by the proposed algorithm to balance the number of energy harvesting and reflection IRS elements.}
The diagonal matrix $\mathbf{\Phi}$ is defined as $\mathbf{\Phi} =$$ \mathrm{diag}($$\beta_1 e^{j\theta_1}, $$ \ldots,$$\beta_n  e^{j\theta_n}$$ ,\ldots,\beta_N  e^{j\theta_N}) $$\in \mathbb{C}^{N\times N}$ with a phase shift $\theta_n \in [0,2\pi)$ and an amplitude coefficient $\beta_n \in [0,1], \forall n \in \mathcal{N}$. 
We assume that a discrete phase shifter is adopted in each IRS element and the phase shift interval $[0,2\pi)$ is uniformly quantized, i.e., 
$
\theta_n \in \mathcal{F} = \mathcal{\mathfrak{}} \Big\{0, \ldots,\bigtriangleup\theta,\ldots, \bigtriangleup\theta(B-1)\Big\}, \forall n \in \mathcal{N},
$
where $\mathcal{F}$ is a set of phase shifts, $\bigtriangleup\theta = 2\pi/B$, $B=2^b$ is the total number of realizable phase shift levels, and $b$ is the given constant bit resolution. For practical implementation of the IRS, the reflection coefficient $\beta_n $  is fixed to be $1$ in this paper, as commonly adopted in the literature, e.g.  \cite{wu2019intelligent,wu2019weighted,wu2019beamforming,huang2019reconfigurable}. {For the reflection mode, by varying the biasing voltage of each phase shifter via a direct current feeding line, the imping signals at each IRS element are reflected with some controlled phase shifts \cite{wu2019towards}.} {In particular, the power consumption of each IRS reflection element is related to its bit resolution \cite{mendez2016hybrid,ribeiro2018energy,huang2018energy,huang2019reconfigurable}. Thus, we denote $P_{\mathrm{IRS}}(b)$ as the amount of power consumed by each $b$-bit resolution IRS reflection element.} 
\subsubsection{Non-linear Energy Harvesting Model}

{As for the energy harvesting mode, the signals go through the radio frequency (RF)-to-direct current (DC) power conversion circuit, which consists of three parts, i.e., a matching network, a voltage multiplier, and an energy storage. In particular, the matching network enables the maximum power transmission from the antenna to a voltage multiplier.
    Also, the voltage multiplier converts the incident RF power into DC power and an energy storage guarantees the smoothed DC power delivery to the load\footnote{{Note that other advanced hardware circuit implementations of IRS are beyond the scope of the paper and interested readers may refer to \cite{abdelhalem2013rf,hameed2014hybrid} for more details.}} \cite{nintanavongsa2012design}.}
    The power consumed by the RF to DC power conversion circuit is denoted by $P_{\mathrm{c}}>0$, which is a constant value depending on the detailed circuit specifications, e.g. resistance, capacitance, and inductance. The total received RF power at the IRS is given by
\begin{align}
P_{\mathrm{PR}}=\mathbb{E}\Big(\|\mathbf{A}_{\mathrm{EH}}\Big(\mathbf{G}\Big(\sum_{k\in\mathcal{K}}\mathbf{w}_kx_k+\mathbf{z}\Big)+\mathbf{n}_a\Big)\|^2\Big),\label{y_eh}
\end{align}
where $ \mathbf{A}_{\mathrm{EH}}=\mathbf{I}_N-\mathbf{A}$ is the energy harvesting binary-valued matrix. Vector $\mathbf{n}_a\sim\mathcal{CN}(\mathbf{0},\sigma_a^2\mathbf{I}_N)$ is the thermal noise at the IRS with per IRS element noise power $\sigma_a^2$. In general, there are two tractable models for characterizing the property of RF-based energy harvesting in the literature, i.e., the linear model and the non-linear model\cite{clerckx2018fundamentals}.
However, the linear model fails to capture the characteristics of the inherent non-linearity of practical energy harvesting circuits\cite{boshkovska2017resource}. Motivated by this, we adopt a more realistic non-linear energy harvesting model proposed in \cite{boshkovska2017resource} at the IRS and it is given by
\begin{align}
P_{\mathrm{PH}}=\frac{\Psi - M_{\mathrm{P}}\Omega}{1-\Omega},
 \end{align}
 where $\Omega=\frac{1}{1+\exp(aq)}$ and $\Psi=\frac{M_{\mathrm{P}}}{1+\exp\Big(-a(P_{\mathrm{PR}}-q)\Big)}$ is a sigmoidal function whose input is the received RF power, $P_{\mathrm{PR}}$. Parameter $M_{\mathrm{P}}$ is a nonnegative constant which determines the maximum harvestable power. Also, constants $a>0$ and $q>0$ capture the joint effects of resistance, capacitance, and circuit sensitivity of the energy harvesting circuit. Note that parameters $M_{\mathrm{p}}$, $a$, and $q$ can be obtained easily via some standard curve fitting approach once the schematic of the adopted energy harvesting circuit is fixed.
\subsection{Channel State Information}
 In the literature, vaious approaches have been proposed for CSI estimation of individual links and/or cascaded links, e.g. \cite{chen2019channel,7106496}. To isolate the robust resource allocation design from any specific channel estimation design details, e.g. \cite{chen2019channel,hu2019two}, we consider the bounded CSI error model\footnote{This model guarantees the performance even for the worst case of channel estimation error, as long as $\|\Delta\mathbf{G}_{\mathrm{cu},k}\|_{\mathrm{F}}\leq\rho_{\mathrm{cu},k}$.} \cite{7106496} for the cascaded AP-IRS-user channels, the AP-IRS channel, and the direct AP-user channels. Specifically, we denote the cascaded AP-IRS-user channel for the $k$-th user as
\begin{align}
\mathbf{G}_{\mathrm{cu},k}=&\mathrm{diag}(\mathbf{h}_{\mathrm{r},k}^{\mathrm{H}})\mathbf{G} = \hat{\mathbf{G}}_{\mathrm{cu},k}+\Delta\mathbf{G}_{\mathrm{cu},k},\forall k\label{channelGcu:model}\\
            \text{ with } \mathbf{\Upsilon}_{\mathrm{cu},k} \overset{\Delta}{=} &\Big\{\!\Delta\mathbf{G}_{\mathrm{cu},k} \in\mathbb{C}^{N\times M_{\mathrm{t}}}\!\!:\|             \Delta\mathbf{G}_{\mathrm{cu},k}\|_{\mathrm{F}}\!\leq\!\rho_{\mathrm{cu},k}\Big\},\forall k,\notag
\end{align}
where $\hat{\mathbf{G}}_{\mathrm{cu},k},\forall k,$ is the estimation of the corresponding channel $\mathbf{G}_{\mathrm{cu},k}$. The channel estimation error of $\mathbf{G}_{\mathrm{cu},k}, \forall k$, is denoted by $\Delta\mathbf{G}_{\mathrm{cu},k}$. The continuous set $\mathbf{\Upsilon}_{\mathrm{cu},k},\forall k,$ collects all possible channel estimation errors, while constant $\rho_{\mathrm{cu},k},\forall k,$ denotes the maximum value of the norm of the CSI estimation error $\Delta\mathbf{G}_{\mathrm{cu},k},\forall k$.

Since the positions of AP and IRS are usually fixed, the AP-IRS channel, $\mathbf{G}$, is generally quasi-static\cite{hu2019two}. Therefore, the channel matrix, $\mathbf{G}$, can be estimated in a large timescale via existing algorithms, e.g. a dual-link pilot transmission scheme and a coordinate descent-based AP-IRS channel estimation algorithm\cite{hu2019two}. As a result, to capture the CSI imperfectness of $\mathbf{G}$, we express the AP-IRS channel as
\begin{align}
\mathbf{G}\!\!=\!\!\hat{\mathbf{G}}\!+\!\Delta\mathbf{G}
            \text{ with }\mathbf{\Upsilon}_{\mathrm{G}}\! \overset{\Delta}{=}\! \Big\{\!\Delta\mathbf{G} \!\in\!\mathbb{C}^{N\times M_{\mathrm{t}}}\!\!:\!\!\| \Delta\mathbf{G}\|_{\mathrm{F}}\!\leq\!\rho_{\mathrm{G}}\!\Big\},\label{ChannelG:error}
\end{align}
where $\hat{\mathbf{G}}$ is the estimation of channel $\mathbf{G}$, $\Delta\mathbf{G}$ represents the estimation error of $\mathbf{G}$, $\mathbf{\Upsilon}_{\mathrm{G}}$ is a continuous set which collects all the possible channel estimation errors, and $\rho_{\mathrm{G}}$ is the maximum value of norm of the  the estimation error $\Delta\mathbf{G}$. Generally, we have $\rho_{\mathrm{cu},k}\geq\rho_{\mathrm{G}},\forall k$.

Moreover, the direct AP-user channel can be estimated by exploiting conventional uplink pilot transmission \cite{hu2019two}. Similarly, we denote the direct AP-user channel for the $k$-th user as
\begin{align}
            \mathbf{h}_{\mathrm{d},k} =& \hat{\mathbf{h}}_{\mathrm{d},k}+\Delta\mathbf{h}_{\mathrm{d},k}\label{channelhd:model}\\
            \text{ with }\mathbf{\Upsilon}_{\mathrm{d},k} \overset{\Delta}{=}& \Big\{\Delta\mathbf{h}_{\mathrm{d},k} \in\mathbb{C}^{M_{\mathrm{t}}\times 1}:\| \Delta\mathbf{h}_{\mathrm{d},k}\|_2\leq\rho_{\mathrm{d},k}\Big\},\forall k,\notag
\end{align}
where $\hat{\mathbf{h}}_{\mathrm{d},k}$ is the estimate of corresponding channel $\mathbf{h}_{\mathrm{d},k}$ and the corresponding channel estimation error is $\Delta\mathbf{h}_{\mathrm{d},k}$. The continuous set $\mathbf{\Upsilon}_{\mathrm{d},k}$, collects all the possible channel estimation errors, while constant $\rho_{\mathrm{d},k}$, is the maximum value of the norm of the CSI estimation error.
Similarly, the $j$-th cascaded AP-IRS-Eve channel\footnote{Note that the CSI of active eavesdroppers can be estimated by their transmissions, while the CSI for passive eavesdroppers can be estimated through the local oscillator power leaked from the eavesdroppers' receiver RF front end\cite{mukherjee2012detecting}.} and the $j$-th direct AP-Eve channel are denoted as
$\mathbf{G}_{\mathrm{ce},j}=\mathrm{diag}(\mathbf{h}_{\mathrm{re},j}^{\mathrm{H}})\mathbf{G} = \hat{\mathbf{G}}_{\mathrm{ce},j}+\Delta\mathbf{G}_{\mathrm{ce},j}$ with
            $\mathbf{\Upsilon}_{\mathrm{ce},j} \overset{\Delta}{=}  \Big\{\Delta\mathbf{G}_{\mathrm{ce},j} \in\mathbb{C}^{N\times M_{\mathrm{r}}}:\|             \Delta\mathbf{G}_{\mathrm{ce},j}\|_{\mathrm{F}}\leq\rho_{\mathrm{ce},j}\Big\},\forall j$,
and $\mathbf{h}_{\mathrm{ed},j} =  \hat{\mathbf{h}}_{\mathrm{ed},j}+\Delta\mathbf{h}_{\mathrm{ed},j}$ with $\mathbf{\Upsilon}_{\mathrm{ed},j} \overset{\Delta}{=} \Big\{\Delta\mathbf{h}_{\mathrm{ed},j} \in\mathbb{C}^{M_{\mathrm{r}}\times 1}:\| \Delta\mathbf{h}_{\mathrm{ed},j}\|_2\leq\rho_{\mathrm{ed},j}\Big\},\forall j,$
respectively. $\hat{\mathbf{G}}_{\mathrm{ce},j},\Delta\mathbf{G}_{\mathrm{ce},j},\mathbf{\Upsilon}_{\mathrm{ce},j},\rho_{\mathrm{ce},j}$, $\hat{\mathbf{h}}_{\mathrm{ed},j},\Delta\mathbf{h}_{\mathrm{ed},j},\mathbf{\Upsilon}_{\mathrm{ed},j}$, and $\rho_{\mathrm{ed},j}$ are defined in the similar manner as the AP-IRS-user link, $\mathbf{G}_{\mathrm{cu},k}$, and the AP-user link, $\mathbf{h}_{\mathrm{d},k}$. Generally, we have $\rho_{\mathrm{ce},j}\geq\rho_{\mathrm{G}}, \forall j$.
\section{Problem Formulation}
{In this paper, we aim to maximize the system sum-rate while maintaining the self-sustainability of the IRS by jointly designing the precoding vector $\{\mathbf{w}_k\}_{k\in\mathcal{K}}$, the AN covariance matrix $\mathbf{Z}$, the mode selection indicators $\{\alpha_n\}_{n\in\mathcal{N}}$, and the discrete phase shifters $\{\theta_n\}_{n\in\mathcal{N}}$.} {The joint design can be formulated as the following optimization problem\footnote{{The considered problem formulation serves as a generalized optimization framework that subsumes various existing designs as subcases, e.g. robust precoding \cite{zhou2020robust}, secure precoding \cite{cui2019secure}, or energy-harvesting IRS \cite{lyu2020optimized}, etc.}}}:
\begin{align}
&\underset{\mathbf{w}_k,\,\mathbf{Z},\, \alpha_n,\,\theta_n}{\mathrm{maximize}} \,\, \sum_{k\in\mathcal{K}}\underset{\Delta\mathbf{h}_{\mathrm{d},k},\Delta\mathbf{G}_{\mathrm{cu},k} }{\min}\Big\{\log_2(1+\mathrm{SINR}_k)\Big\}  \label{proposed_formulation_origion} \\
&\mathrm{s.t.}\,\,\mathrm{C1}\hspace{-1mm}: \sum_{k\in\mathcal{K}}\|\mathbf{w}_k\|^2 + \mathrm{Tr}(\mathbf{Z}) \leq P_{\max},\,\,\mathrm{C2}\hspace{-1mm}: \theta_n \in \mathcal{F}, \forall n \in \mathcal{N}, \notag\\
&\hspace{6mm}\mathrm{C3}\hspace{-1mm}:  \sum_{ n \in \mathcal{N}} \alpha_{n}P_{\mathrm{IRS}}(b)+P_{\mathrm{c}} \leq \underset{\Delta\mathbf{G}}{\min}\big\{P_{\mathrm{PH}}\big\},\notag\\
&\hspace{6mm}\mathrm{C4}\hspace{-1mm}: \alpha_{n}\in \{0,1\},\forall n, \,\,\mathrm{C5}\hspace{-1mm}:\!\! \underset{\Delta\mathbf{h}_{\mathrm{ed},j},\Delta\mathbf{G}_{\mathrm{ce},j}}{\max}\! \! \{C_{k,j}\}\! \leq\! \tau_{k,j}, \forall k,j.\notag
\end{align}
The objective function in \eqref{proposed_formulation_origion} maximizes the minimum value of all the possible sum-rate due to CSI errors. Constraint $\mathrm{C1}$ ensures that the transmit power at the AP does not exceed its maximum transmit power budget $P_{\max}$. Constraint $\mathrm{C2}$ specifies that the phase shift of a $b$-bit resolution IRS reflecting element can only be selected from a discrete set $\mathcal{F}$. {Constraint $\mathrm{C3}$ indicates that the total power consumed at the IRS should not exceed its total harvested power\footnote{{If the harvested energy by the IRS is not even enough to supply one IRS reflection element, the proposed problem \eqref{proposed_formulation_origion} can be solved by ignoring constraint C3 and the related IRS cascaded links, which is essentially the case of beamforming design without the IRS.}} from the AP, $P_{\mathrm{PH}}$, while taking into account the imperfect CSI knowledge.} Constraint $\mathrm{C4}$ is imposed to guarantee that each IRS element can only operate in either reflection mode or energy harvesting mode. Constant $\tau_{k,j}$ in $\mathrm{C5}$ is the maximum tolerable information leakage to the $j$-th potential eavesdropper for wiretapping the signal transmitted to the $k$-th legitimate user considering the CSI error in the AP-Eve channel and the cascaded AP-IRS-Eve link. In particular, $\mathrm{C5}$ can guarantee that the system secrecy rate $R_\mathrm{s}$ is bounded from below, i.e. $R_\mathrm{s}\geq \sum_{k\in\mathcal{K}}\Big[R_{\mathrm{s},k}-\tau_{k,j}\Big]^+$. 
\begin{Remark}
 The proposed formulation in \eqref{proposed_formulation_origion} for maximizing the system sum-rate taking into account the information leakage constraints offers a higher flexibility in resources allocation than that of directly maximizing the system secrecy rate as in e.g. \cite{zhao2015robust,9045989}. In particular, the proposed problem formulation can control the levels of secrecy performance of individual eavesdropper via adjusting parameters $\tau_{k,j},\forall k,j$, which is a more flexible formulation for determining the levels of communication securities for heterogeneous practical applications and has been commonly adopted in the literature\cite{yu2019robust,8333737,7106496,5963524}.
 \end{Remark}
\section{Solution Of The Optimization Problem} \label{solution}
\begin{figure}[t] 
  \centering
  \includegraphics[width=3.4in]{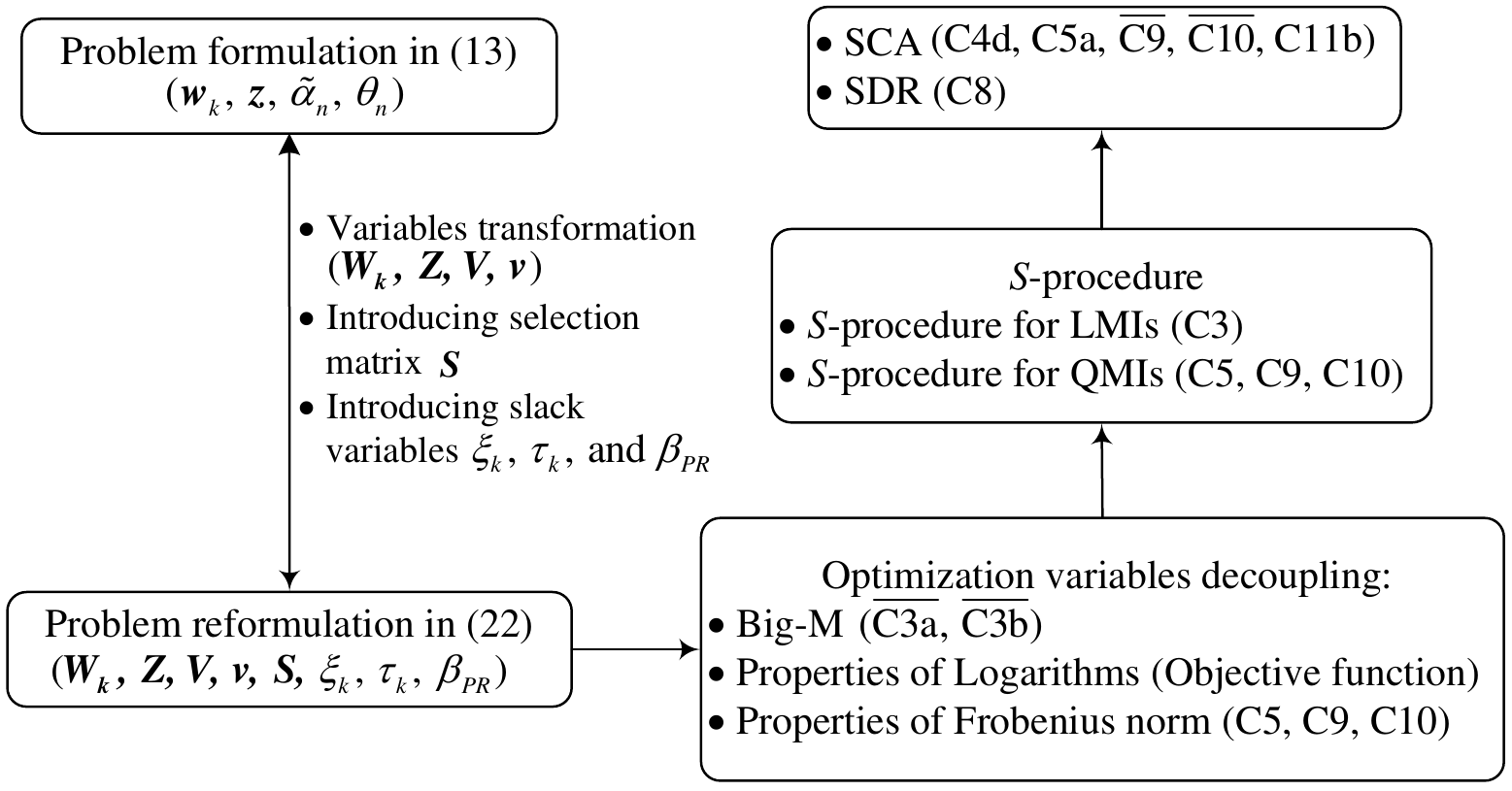}
  \caption{A flow chart of the proposed algorithm.}
  \label{flowchart2}
\end{figure}
The formulated problem is non-convex due to the coupling between variables $\mathbf{w}_k$, $\mathbf{Z}$, $\theta_n$, and $\alpha_n$ in constraints $\mathrm{C3}$, $\mathrm{C5}$, and the objective function in \eqref{proposed_formulation_origion}, the discrete phase shift in constraint $\mathrm{C2}$, and the binary variable $\alpha_n$ in constraint $\mathrm{C4}$. Besides, there are infinitely many possibilities in the objective function, constraint $\mathrm{C3}$, and constraint $\mathrm{C5}$, due to the CSI uncertainties in continuous variable sets. In general, solving such a problem optimally requires the use of the exhaustive search method which is computationally intractable even for a moderate size system. As a compromise approach, we aim to design a computationally-efficient suboptimal algorithm to obtain an effective solution.
Fig. \ref{flowchart2} summarizes the procedure of the proposed problem solving methodology. In order to address the optimization problem without using the AO method, we firstly introduce a set of slack variables and reformulate the problem in \eqref{proposed_formulation_origion} to problem in \eqref{optimization_problem_transformed}, which paves the way for decomposing the coupling variables $\{\mathbf{w}_k$, $\mathbf{Z}, \theta_n,$ $\alpha_n\}$. Secondly, we adopt the big-M reformulation to decouple the coupling between the discrete variables, i.e., the mode selection matrix, $\mathbf{S}$, and the continuous variables, i.e., precoder, $\mathbf{W}_k$, and AN matrix, $\mathbf{Z}$, in $\overline{\mathrm{C3a}}$ and $\overline{\mathrm{C3b}}$. Besides, decomposing the coupling between the continuous variables in the objective function and constraints $\mathrm{C5},\mathrm{C9}$, and $\mathrm{C10}$ requires exploiting the properties of logarithm and Frobenius norm, respectively. Thirdly, after decomposing all these coupling variables, the $\mathcal{S}$-procedure is adopted to tackle the infinite number of constraints caused by imperfect CSI in $\mathrm{C3}$, $\mathrm{C5}$, $\mathrm{C9}$, and $\mathrm{C3}$. Fourthly, we exploit the successive convex approximation (SCA) to deal with all functions in difference of convex (D.C.) form, i.e.,$\mathrm{C4d}$, $\mathrm{C5a}$, $\overline{\mathrm{C9}}$, $\overline{\mathrm{C10}}$, and $\mathrm{C11b}$. Lastly, we apply the semi-definite relaxation (SDR) method to tackle the rank-one constraint, i.e., $\mathrm{C8}$, for the precoder design. As such, a computationally efficient suboptimal solution can be obtained.
\subsection{Problem Transformation}
To facilitate the design of discrete IRS phase shifts, we first handle the coupling of model selection variables and phase shifts matrix, $\mathbf{A}\mathbf{\Phi}$, in the objective function, constraint $\mathrm{C3}$, and constraint $\mathrm{C5}$. To this end, we define an augmented mode selection matrix $\tilde{\mathbf{A}}$ to replace $\mathbf{A}\mathbf{\Phi}$. $\tilde{\mathbf{A}} = \mathrm{diag}(\tilde{\bm{\alpha}})=\mathrm{diag}(\tilde{\alpha}_1,$$ \ldots, \tilde{\alpha}_n, \ldots, $$\tilde{\alpha}_N)$, where $\tilde{\alpha}_n\in \tilde{\mathcal{F}} =\{0,e^{j0},$$\ldots,e^{j\bigtriangleup\theta},\ldots, $$\,e^{j\bigtriangleup\theta(B-1)}\}$ is the new mode selection optimization variable for the $n$-th IRS element and $\tilde{\mathcal{F}}$ is the generalized mode selection set with $B+1$ modes. When $\tilde{\alpha}_{n} = 0$, the $n$-th IRS element is in the energy harvesting mode, otherwise it is in the reflection mode. Then, we can express the received desired signal at the $k$-th user as
\begin{align}
&|(\mathbf{h}_{\mathrm{d},k}^\mathrm{H}+\mathbf{h}_{\mathrm{r},k}^\mathrm{H}\tilde{\mathbf{A}}\mathbf{G})\mathbf{w}_k|^2\notag\\
=&2\mathrm{Re}\{\tilde{\mathbf{v}}^\mathrm{H}\!\mathbf{G}_{\mathrm{cu},k}\!\mathbf{W}_k\mathbf{h}_{\mathrm{d},k}\!\}\!+\!
\tilde{\mathbf{v}}^\mathrm{H}\mathbf{G}_{\mathrm{cu},k}\!\mathbf{W}_k\mathbf{G}_{\mathrm{cu},k}^\mathrm{H}
\tilde{\mathbf{v}}\!+\!\mathbf{h}_{\mathrm{d},k}^\mathrm{H}\!\mathbf{W}_k\mathbf{h}_{\mathrm{d},k}\notag\\
=&\mathrm{Tr}(\mathbf{v}^\mathrm{H}\mathbf{F}_k^{\mathrm{H}}\mathbf{W}_k\mathbf{F}_k\mathbf{v})=\mathrm{Tr}(\mathbf{F}_k^{\mathrm{H}}\mathbf{W}_k\mathbf{F}_k\mathbf{V}),
\end{align}
where $\mathbf{v}=$$[\tilde{\mathbf{v}}^{\mathrm{H}},v]^{\mathrm{H}}$, $|v|^2=1$, $\tilde{\mathbf{v}}=[\tilde{\alpha}_1,  \ldots, \tilde{\alpha}_N]^{\mathrm{H}}$, and $\mathbf{F}_k = \begin{bmatrix}\mathbf{G}_{\mathrm{cu},k}^\mathrm{H} &\mathbf{h}_{\mathrm{d},k}\end{bmatrix}$.
Then, by defining  $\mathbf{F}_{\mathrm{eve},j} = \begin{bmatrix}\mathbf{G}_{\mathrm{ce},j}^\mathrm{H} &\mathbf{h}_{\mathrm{ed},j}\end{bmatrix}$, $\mathbf{W}=\mathbf{w}\mathbf{w}^\mathrm{H}$, and $\mathbf{V}=\mathbf{v}\mathbf{v}^\mathrm{H}$, $\mathrm{SINR}_k$ in the objective function, $C_{k,j}$ in constraints C5, constraints C3, and constraints C4 in \eqref{proposed_formulation_origion} can be equivalently rewritten as
\begin{align}
    &\mathrm{SINR}_k
    \!\! =\!\frac{\mathrm{Tr}(\mathbf{F}_k^{\mathrm{H}}\mathbf{W}_k\mathbf{F}_k\!\mathbf{V})}{\sum\limits_{i\neq k}\!\!\mathrm{Tr}(\mathbf{F}_k^{\mathrm{H}}\mathbf{W}_i\mathbf{F}_k\!\mathbf{V}\!)\!+\!
    \mathrm{Tr}(\mathbf{F}_k^{\mathrm{H}}\mathbf{Z}\mathbf{F}_k\mathbf{V})\!+\!\sigma_k^2},\forall k,\label{SINR_new}\\
&C_{k,j}=\log_2\Big(1+\frac{\mathrm{Tr}(\mathbf{F}_{\mathrm{eve},j}^{\mathrm{H}}\mathbf{W}_k\mathbf{F}_{\mathrm{eve},j}\mathbf{V})}{\mathrm{Tr}(\mathbf{F}_{\mathrm{eve},j}^{\mathrm{H}}\mathbf{Z}\mathbf{F}_{\mathrm{eve},j}\mathbf{V})+\sigma^2_{\mathrm{eve},j}}\Big),\forall k,j\text{,}\notag\\
&\mathrm{C3}\hspace{-1mm}:\sum_{n=1}^N |{v^*_n}|P_{\mathrm{IRS}}(b)\hspace{-1mm}+\hspace{-1mm}P_{\mathrm{c}} \leq\underset{\Delta\mathbf{G}}{\min}\{  P_{\mathrm{PH}}\}\text{, and }\notag\\
 &\mathrm{C4}\hspace{-1mm}:v^*_{n} \in \tilde{\mathcal{F}} =\{0,e^{j0}, e^{j\bigtriangleup\theta},\ldots, e^{j\bigtriangleup\theta(B-1)}\},\forall n\in \mathcal{N} , \notag
\end{align}
respectively, where $v_n$ is the $n$-th element of vector $\mathbf{v}$.

Next, to handle the discrete optimization variable $\tilde{\alpha}_{n}$ in C3 and C4, we further introduce a binary mode selection optimization variable $s_{i,n} \in \mathbb{B}, \forall i \in \mathcal{I}= \{1,\ldots,B+1\}, n\in \mathcal{N}$. $s_{i,n} \in \mathbf{S}$, where $\mathbf{S}\in \mathbb{B}^{(B+1)\times N}$ is a mode selection binary matrix. In particular, $s_{i,n} = 1$ indicates that the $i$-th mode is selected for the $n$-th element. Otherwise, $s_{i,n}=0$. Thus, constraint C4 can be equivalently represented as:
\begin{align}
&\mathrm{C4a}\hspace{-1mm}:\sum_{i\in\mathcal{I}}s_{i,n} \le 1, \forall n \text{,  }\mathrm{C4b}\hspace{-1mm}:v_{n} = \sum_{i\in\mathcal{I}}s_{i,n}f_i^{*}, \forall n\in \mathcal{N},\label{constraint:c4c}\\
&\text{and }\mathrm{C4c}\hspace{-1mm}:s_{i,n} \in \{0,1\}, \forall i,n,\notag
\end{align}
where $f_i$ is the $i$-th element of the generalized mode selection set $\tilde{\mathcal{F}}$ defined in \eqref{constraint:c4c}.
Furthermore, to tackle the binary constraint on $s_{i,n}$ in \eqref{constraint:c4c}, we equivalently transform constraint $\mathrm{C4c}$ into the following two continuous constraints:
\begin{align}
\!\!\mathrm{C4d}\hspace{-1mm}: s_{i,n}\!-\!s_{i,n}^2\leq0,\forall i,n\text{, and } \mathrm{C4e}\hspace{-1mm}: 0\!\leq\! s_{i,n}\!\leq\! 1,\forall i,n.\label{C4dC4e}
\end{align}
Meanwhile, the first row of mode selection binary matrix $\mathbf{S}$, i.e., $[s_{1,1},\ldots,s_{1,n},\ldots,s_{1,N}]$, represents the activity statuses of all the IRS elements on energy harvesting mode.
Therefore, the total received RF power at the IRS in \eqref{y_eh} can be equivalently rewritten as follows:
\begin{align}
\!\!\!\!P_{\mathrm{PR}}\!\!
=\!\!\!\!\sum_{n\in\mathcal{N}}\!\!\mathrm{Tr}\Big(\!\mathbf{T}_n\mathbf{G}(\!\sum_{k\in\mathcal{K}}\!\mathbf{W}_k\!\!+\!
\mathbf{Z})\mathbf{G}^{\mathrm{H}}\mathbf{T}_n^{\mathrm{H}}\!\Big) s_{1,n}\!\!+\!\sigma_a^2\!\!\sum_{n\in\mathcal{N}}\!\!s_{1,n},\label{PpR}
\end{align}
 where $\mathbf{T}_n=\mathrm{diag}(\mathbf{t}_n)$ is a $N\times N$ matrix, $\mathbf{t}_n$ is a $N\times 1$ vector with one on the $n$-th element and zeros elsewhere, i.e., $\mathbf{t}_n=[\underbrace{0\ldots}_{1\text{ to }n-1} 1\underbrace{\ldots 0}_{n+1 \text{ to } N}]^{\mathrm{T}}$. As a result, constraint $\mathrm{C3}$ can be equivalently rewritten as
\begin{align}
\mathrm{\overline{C3}}\hspace{-1mm}: &  \Big(N- \sum_{n=1}^N (s_{1,n})\Big)P_{\mathrm{IRS}}(b)\!+\!P_{\mathrm{c}}\leq\underset{\Delta\mathbf{G}}{\min}\{  P_{\mathrm{PH}}\}\Longleftrightarrow \label{constraint:barc3a}
\end{align}
\begin{align}
\mathrm{\overline{C3a}}\hspace{-1mm}:&\hspace{-1mm}-\hspace{-2mm}\sum_{n=1}^N\!\! s_{1,n}\hspace{-1mm}+\hspace{-1mm}e^{-a(\beta_{\mathrm{PR}}-q)}\Big\{N\hspace{-1mm}+\hspace{-1mm}P_{\mathrm{c}}\hspace{-1mm}+\hspace{-1mm}\frac{M_{\mathrm{P}}\Omega}{(1\hspace{-1mm}
-\hspace{-1mm}\Omega)P_{\mathrm{IRS}}(b)}\!\!- \hspace{-1mm}\sum_{n=1}^N\!\!s_{1,n}\Big\}\hspace{-1mm}
\notag\\&\leq\hspace{-1mm}\frac{ M_{\mathrm{P}}\hspace{-1mm}-\hspace{-1mm}P_{\mathrm{c}}\hspace{-1mm}-\hspace{-1mm}P_{\mathrm{IRS}}(b)N}{P_{\mathrm{IRS}}(b)},\notag\\
\mathrm{\overline{C3b}}\hspace{-1mm}:& \beta_{\mathrm{PR}}\leq  \underset{\Delta\mathbf{G}}{\min}\Big\{\sum_{n\in\mathcal{N}}\mathrm{Tr}(\mathbf{T}_n\mathbf{G}(
\sum_{k\in\mathcal{K}}\mathbf{W}_k+\mathbf{Z})\mathbf{G}^{\mathrm{H}}\mathbf{T}_n^{\mathrm{H}}) s_{1,n}\Big\}\notag\\
&+\!\sigma_a^2\!\!\sum_{n\in\mathcal{N}}s_{1,n},\label{constraint:barc3b}
\end{align}
where $\beta_{\mathrm{PR}}$ is an auxiliary optimization variable to replace $P_{\mathrm{PR}}$.

Since $\log_2 ( \cdot )$ is an increasing function, we can establish a lower bound of the objective function in \eqref{proposed_formulation_origion} by applying the following commonly adopted inequality\cite{6638604,6626322}:
\begin{align}
&\!\!\!\!\!\min_{\substack{\Delta\mathbf{G}_{\mathrm{cu},k},\\\Delta\mathbf{h}_{\mathrm{d},k}}}\hspace{-1mm}\Big\{\!\!
\log_2\Big(\!1\!+\!\!\frac{\mathrm{Tr}(\mathbf{F}_k^{\mathrm{H}}\mathbf{W}_k\mathbf{F}_k\mathbf{V})}{\!\!\sum\limits^K_{i\neq k}\mathrm{Tr}(\mathbf{F}_k^{\mathrm{H}}\mathbf{W}_i\mathbf{F}_k\!\mathbf{V}\!)\!\!+\!\!
    \mathrm{Tr}(\mathbf{F}_k^{\mathrm{H}}\mathbf{Z}\mathbf{F}_k\!\mathbf{V}\!)\!+\!\sigma^2_k}\Big)\Big\}\label{objective_lowerbound}\\
\geq&\!\log_2\!\!\Big(\!1\!\!+\!\!\frac{\underset{\substack{\Delta\mathbf{G}_{\mathrm{cu},k},\Delta\mathbf{h}_{\mathrm{d},k}}}{\min}\Big\{\mathrm{Tr}(\mathbf{F}_k^{\mathrm{H}}\mathbf{W}_k\mathbf{F}_k\mathbf{V})\Big\}}
{\!\!\underset{\substack{\Delta\mathbf{G}_{\mathrm{cu},k},\\\Delta\mathbf{h}_{\mathrm{d},k}}}{\max}\Big\{\!\!\sum\limits^K_{i\neq k}\mathrm{Tr}(\mathbf{F}_k^{\mathrm{H}}\mathbf{W}_i\mathbf{F}_k\mathbf{V})\!\!+\!\!
    \mathrm{Tr}(\mathbf{F}_k^{\mathrm{H}}\mathbf{Z}\mathbf{F}_k\mathbf{V})\!\!+\!\!\sigma^2_k\Big\}}\Big).\notag
\end{align}
By replacing the objective function in \eqref{proposed_formulation_origion} with its lower bound \eqref{objective_lowerbound}, a performance lower bound of problem \eqref{proposed_formulation_origion} can be obtained by solving the following optimization problem:
\begin{align}
&\underset{\substack{\mathbf{W}_k\in\mathbb{H}^{M_{\mathrm{t}}}, \mathbf{Z}\in\mathbb{H}^{M_{\mathrm{t}}},\\\mathbf{V}\in\mathbb{H}^{N+1},\mathbf{v}, \mathbf{S},\xi_k,\iota_k,\beta_{\mathrm{PR}}}}{\mathrm{maximize}} \,\, \sum_{k\in\mathcal{K}} \log_2\Big(1+\frac{\xi_k}{\iota_k+\sigma_k^2}\Big)  \label{optimization_problem_transformed1} \\
\mathrm{s.t.}\,\,
&\mathrm{\overline{C3a}},\mathrm{\overline{C3b}},\mathrm{C4a},\mathrm{C4b},\mathrm{C4d},\mathrm{C4e},\mathrm{C5}\notag,\\
&\mathrm{C1}\hspace{-1mm}: \sum_{k\in\mathcal{K}}\mathrm{Tr}(\mathbf{W}_k)+\mathrm{Tr}(\mathbf{Z})\leq P_{\max},\mathrm{C6}\hspace{-1mm}: \mathbf{Z}\succeq\mathbf{0},\notag\\
&\mathrm{C7}\hspace{-1mm}: \mathbf{W}_k\succeq\mathbf{0}, \forall k,\hspace{2mm}\mathrm{C8}\hspace{-1mm}: \mathrm{Rank}(\mathbf{W}_k)\leq1, \forall k,\notag\\
&
\mathrm{C9}\hspace{-1mm}: \xi_k\hspace{-1mm}+ \hspace{-1mm}\underset{\Delta\mathbf{G}_{\mathrm{cu},k},\Delta\mathbf{h}_{\mathrm{d},k}}{\min}\hspace{-1mm}\Big\{
\hspace{-1mm}-\hspace{-1mm}\mathrm{Tr}(\mathbf{F}_k^{\mathrm{H}}\mathbf{W}_k\mathbf{F}_k\mathbf{V})\Big\}\hspace{-1mm}\leq\hspace{-1mm} 0,\forall k, \notag \\
&\mathrm{C10}\hspace{-1mm}:  \hspace{-2mm}\underset{\substack{\Delta\mathbf{G}_{\mathrm{cu},k}, \\ \Delta\mathbf{h}_{\mathrm{d},k}}}{\max}
\hspace{-1mm}\Big\{\hspace{-1mm}\sum\limits^K_{i\neq k}\!\mathrm{Tr}(\mathbf{F}_k^{\mathrm{H}}\mathbf{W}_i\mathbf{F}_k\mathbf{V}\!)\!\!+ \!\!  \mathrm{Tr}(\mathbf{F}_k^{\mathrm{H}}\mathbf{Z}\mathbf{F}_k\mathbf{V})\!\Big\} \!\!\leq\!\!\iota_k ,\forall k,\notag\\
&\mathrm{C11}\hspace{-1mm}:\mathbf{V}=\mathbf{v}\mathbf{v}^{\mathrm{H}},\notag
\end{align}
where $\xi_k$ and $\iota_k$ are two slack variables. Moreover, the non-convex constraint in $\mathrm{C11}$ in \eqref{optimization_problem_transformed1} can be replaced by the following two constraints \cite{moore1997rank}:
\begin{align}
\!\!\!\!\mathrm{C11a}\!: \mathbf{Y}=\begin{bmatrix} \mathbf{V} &&\mathbf{v}\\
\mathbf{v}^{\mathrm{H}} &&1\end{bmatrix}\succeq\mathbf{0} \text{ and }\mathrm{C11b}\!: \mathrm{Rank}(\mathbf{Y})\!=\!1.
\end{align}
Therefore, the optimization problem in \eqref{optimization_problem_transformed1} is equivalent to the following problem:
\begin{align}
&\underset{\substack{\mathbf{W}_k\in\mathbb{H}^{M_{\mathrm{t}}}, \mathbf{Z}\in\mathbb{H}^{M_{\mathrm{t}}},\\ \mathbf{Y}\in\mathbb{H}^{N+2},\mathbf{V}\in\mathbb{H}^{N+1},\mathbf{v}, \mathbf{S},\xi_k,\iota_k,\beta_{\mathrm{PR}}}}{\mathrm{maximize}} \,\, \sum_{k\in\mathcal{K}} \log_2\Big(1+\frac{\xi_k}{\iota_k+\sigma_k^2}\Big)  \label{optimization_problem_transformed} \\
&\mathrm{s.t.}\,\,\mathrm{C1},\mathrm{\overline{C3a}},
\mathrm{\overline{C3b}},\mathrm{C4a},\mathrm{C4b},\mathrm{C4d},
\mathrm{C4e},\mathrm{C5}-\mathrm{C11b}.\notag
\end{align}

\subsection{Optimization Variables Decoupling}
For the optimization problem in \eqref{optimization_problem_transformed}, the non-convexity arises from constraints $\mathrm{\overline{C3a}}$, $\mathrm{\overline{C3b}}$, $\mathrm{C5}$, $\mathrm{C9}$, $\mathrm{C10}$, and the objective function due to severely coupled variables. In particular, there are three types of variable couplings, i.e., the fractional form of two continuous variables in SINR, the multiplication between a binary and a continuous variable, and the multiplication between $\mathbf{W}_k$, $\mathbf{Z}$, and $\mathbf{V}$, e.g. $\mathrm{Tr}(\mathbf{F}_k^{\mathrm{H}}\mathbf{Z}\mathbf{F}_k\mathbf{V})$ and $\mathrm{Tr}(\mathbf{F}_k^{\mathrm{H}}\mathbf{W}_k\mathbf{F}_k\mathbf{V})$. To address the first type of coupling, we rewrite the $\mathrm{SINR}$ in \eqref{optimization_problem_transformed} in its equivalent form of D.C. form\cite{yu2019robust}:
\begin{align}
\underset{\substack{\mathbf{W}_k\in\mathbb{H}^{M_{\mathrm{t}}}, \mathbf{Z}\in\mathbb{H}^{M_{\mathrm{t}}},\\\mathbf{Y}\in\mathbb{H}^{N+2},\mathbf{V}\in\mathbb{H}^{N+1},\mathbf{v}, \\ \mathbf{S},\xi_k,\iota_k,\beta_{\mathrm{PR}}}}{\mathrm{maximize}} \,\, \sum_{k\in\mathcal{K}} \log_2\Big(1+\frac{\xi_k}{\iota_k+\sigma_k^2}\Big)\hspace{-2mm}\notag\\ \Leftrightarrow
\underset{\substack{\mathbf{W}_k\in\mathbb{H}^{M_{\mathrm{t}}}, \mathbf{Z}\in\mathbb{H}^{M_{\mathrm{t}}},\\\mathbf{Y}\in\mathbb{H}^{N+2},\mathbf{V}\in\mathbb{H}^{N+1},\mathbf{v}, \\ \mathbf{S},\xi_k,\iota_k,\beta_{\mathrm{PR}}}}{\mathrm{maximize}} \,\,N_{\mathrm{obj}}(\xi_k,\iota_k)-D_{\mathrm{obj}}(\iota_k), \label{objective_ND}
\end{align}
where $N_{\mathrm{obj}}(\xi_k,\iota_k) = \sum_{k\in\mathcal{K}} \log_2(\xi_k+\iota_k+\sigma_k^2) \text{ and }
D_{\mathrm{obj}}(\iota_k)= \sum_{k\in\mathcal{K}} \log_2(\iota_k+\sigma_k^2)$
are two functions which are both concave with respect to $\xi_k$ and $\iota_k$, respectively.

Then, by introducing two auxiliary optimization variables $U_n^{\mathrm{ES}}$, $U_{n}^{\mathrm{BS}}$, we adopt the big-M reformulation\cite{sun2018robust} to convert the logical constraints $\overline{\mathrm{C3a}}$ and $\overline{\mathrm{C3b}}$ to a set of equivalent constraints as follows:
\begin{align}
&\mathrm{ \overline{\overline{C3a}}}\hspace{-1mm}:\hspace{-1mm}-\hspace{-1mm}\sum_{n=1}^N (s_{1,n})\hspace{-1mm}+\hspace{-1mm}e^{\Big(\hspace{-1mm}-a(\beta_{\mathrm{PR}}-q)\Big)}\hspace{-1mm}\Big\{\hspace{-1mm}N\hspace{-1mm}+\hspace{-1mm}P_{\mathrm{c}}\hspace{-1mm}+\hspace{-1mm}\frac{M_{\mathrm{P}}\Omega}{(1\hspace{-1mm}-\hspace{-1mm}\Omega)P_{\mathrm{IRS}}(b)}\Big\}\hspace{-1mm}
\hspace{3mm}\notag\\
&\hspace{7mm} -\sum_{n=1}^N U_n^{\mathrm{ES}}\hspace{-1mm}\leq\hspace{-1mm}\frac{ M_{\mathrm{P}}\hspace{-1mm}-\hspace{-1mm}P_{\mathrm{c}}\hspace{-1mm}-\hspace{-1mm}P_{\mathrm{IRS}}(b)N}{P_{\mathrm{IRS}}(b)},\label{C3aoverlineoverline}\\
&\mathrm{\overline{\overline{C3b}}}\hspace{-1mm}: \beta_{\mathrm{PR}}\hspace{-1mm}\leq\hspace{-2mm}\sum_{n\in\mathcal{N}}\hspace{-1mm}U_{n}^{\mathrm{BS}}
\hspace{-1mm}+\hspace{-1mm}\sigma_a^2\sum_{n=1}^{N}s_{1,n},
\mathrm{\overline{C3c}}\hspace{-1mm}:U_n^{\mathrm{ES}}\leq e^{\Big(-a(\beta_{\mathrm{PR}}-q)\Big)}, \notag\\
& \mathrm{\overline{C3d}}\hspace{-1mm}:  U_n^{\mathrm{ES}}\geq0, \forall n,\mathrm{\overline{C3e}}\hspace{-1mm}:U_n^{\mathrm{ES}} \leq s_{1,n}M_{\mathrm{Big}}, \forall n,\notag\\
&
\mathrm{\overline{C3f}}\hspace{-1mm}:  {U}_{n}^{\mathrm{BS}} \leq\underset{\Delta\mathbf{g}_n}{\min}\Big\{\mathrm{Tr}(\mathbf{T}_n\mathbf{G}(\sum_{k\in\mathcal{K}}\mathbf{W}_k+\mathbf{Z})\mathbf{G}^{\mathrm{H}}\mathbf{T}_n^{\mathrm{H}}) \Big\}, \forall n,\notag\\
&\mathrm{\overline{C3g}}\hspace{-1mm}: {U}_{n}^{\mathrm{BS}}\geq0,\forall n \text{, and }\mathrm{\overline{C3h}}\hspace{-1mm}: {U}_{n}^{\mathrm{BS}}\leq s_{1,n}M_{\mathrm{Big}}, \forall n,\notag
\end{align}
where $M_{\mathrm{Big}}\gg1$ is a sufficiently large constant number and $\Delta\mathbf{g}_n$$ =$$ \mathrm{vec}(\mathbf{T}_n\Delta\mathbf{G})$ denotes the channel estimation error for the channel between the AP and the $n$-th IRS element.

Note that according to the definition of $\mathbf{T}_n$ in \eqref{PpR}, the term $\mathrm{Tr}(\mathbf{T}_n\mathbf{G}(\sum_{k\in\mathcal{K}}\mathbf{W}_k+\mathbf{Z})\mathbf{G}^{\mathrm{H}}\mathbf{T}_n^{\mathrm{H}}) s_{1,n},\forall n$ in \eqref{constraint:barc3b} only depends on the $n$-th row of the channel matrix $\mathbf{G}$, which corresponds to the channel vector between the AP and the $n$-th IRS element. Since the channel uncertainty of the AP to IRS elements links are assumed independent with each other, we have $\underset{\Delta\mathbf{G}}{\min}\Big\{\sum_{n\in\mathcal{N}}\mathrm{Tr}(\mathbf{T}_n\mathbf{G}(\sum_{k\in\mathcal{K}}\mathbf{W}_k+\mathbf{Z})\mathbf{G}^{\mathrm{H}}\mathbf{T}_n^{\mathrm{H}}) s_{1,n}\Big\}\Leftrightarrow \sum_{n\in\mathcal{N}}\underset{\Delta\mathbf{g}_n}{\min}\Big\{\mathrm{Tr}(\mathbf{T}_n\mathbf{G}(\sum_{k\in\mathcal{K}}\mathbf{W}_k+\mathbf{Z})\mathbf{G}^{\mathrm{H}}\mathbf{T}_n^{\mathrm{H}}) s_{1,n}\Big\}$.

Lastly, we adopt the transformation method in \cite{yu2020power} to decompose the coupled variables between $\{\mathbf{W}_k,\mathbf{V}\}$ and $\{\mathbf{Z},\mathbf{V}\}$ in the objective function and constraints $\mathrm{C5}$, $\mathrm{C9}$, and $\mathrm{C10}$ in \eqref{optimization_problem_transformed}. For illustration, we take constraint $\mathrm{C9}$ in \eqref{optimization_problem_transformed} as an example. The coupled terms on the left hand side of constraint $\mathrm{C9}$ can be rewritten to two separated variables as follows:
\begin{align}
&-\mathrm{Tr}(\mathbf{W}_k\mathbf{F}_k\mathbf{V}\mathbf{F}_k^{\mathrm{H}})\\
&
=\frac{1}{2}\|\mathbf{W}_k\!\!-\!\!\mathbf{F}_k\!\mathbf{V}\mathbf{F}_k^{\mathrm{H}}\!\|^2_{\mathrm{F}}\!\!-\!\!
\frac{1}{2}\mathrm{Tr}(\!\mathbf{W}_k^\mathrm{H}\mathbf{W}_k)\!\!-\!\!\frac{1}{2}
\mathrm{Tr}(\!\mathbf{F}_k\!\mathbf{V}^{\mathrm{H}}\mathbf{F}_k^{\mathrm{H}}\mathbf{F}_k\!\mathbf{V}\mathbf{F}_k^{\mathrm{H}}\!).\notag
\end{align}
Therefore, constraint $\mathrm{C9}$ can be rewritten as the following equivalent form without couplings:
\begin{align}
\mathrm{C9}\hspace{-1mm}:&\,\,\xi_k\hspace{-1mm}+\hspace{-1mm} \underset{\substack{\Delta \mathbf{h}_{\mathrm{d},k},  \\ \Delta\mathbf{G}_{\mathrm{cu},k}}}{\min}\hspace{-1mm}\Big\{\frac{\|\mathbf{W}_k\hspace{-1mm}-\hspace{-1mm}\mathbf{F}_k\mathbf{V}\mathbf{F}_k^{\mathrm{H}}\|^2_{\mathrm{F}}
\hspace{-1mm}
-\hspace{-1mm}\mathrm{Tr}(\mathbf{F}_k\mathbf{V}^{\mathrm{H}}\mathbf{F}_k^{\mathrm{H}}\mathbf{F}_k\mathbf{V}\mathbf{F}_k^{\mathrm{H}})}{2}
\Big\}\hspace{-1mm}\notag\\&-\hspace{-1mm}\frac{1}{2}\mathrm{Tr}(\mathbf{W}_k^\mathrm{H}\mathbf{W}_k)\leq 0,\forall k.
\end{align}
By using the same method, constraints $\mathrm{C5}$ and $\mathrm{C10}$ can be rewritten as
\begin{align}
\mathrm{C5}\hspace{-1mm}:&\hspace{-1mm}\underset{\substack{\Delta \mathbf{h}_{\mathrm{ed},j}, \\ \Delta\mathbf{G}_{\mathrm{ce},j}}}{\min}\Big[\frac{1}{2}\|\mathbf{W}_k+\mathbf{F}_{\mathrm{eve},j}\mathbf{V}\mathbf{F}_{\mathrm{eve},j}^{\mathrm{H}}\|^2_{\mathrm{F}}\hspace{-1mm}-\hspace{-1mm}\frac{1}{2}\mathrm{Tr}(\mathbf{W}_k^{\mathrm{H}}\mathbf{W}_k)
\notag\\&-
\frac{1}{2}\mathrm{Tr}(\mathbf{F}_{\mathrm{eve},j}\mathbf{V}^{\mathrm{H}}\mathbf{F}_{\mathrm{eve},j}^{\mathrm{H}}\mathbf{F}_{\mathrm{eve},j}\mathbf{V}\mathbf{F}_{\mathrm{eve},j}^{\mathrm{H}})\notag\\
&+(2^{\tau_{k,j}}-\hspace{-1mm}1)\Big\{\frac{1}{2}\|\mathbf{Z}-\mathbf{F}_{\mathrm{eve},j}\mathbf{V}\mathbf{F}_{\mathrm{eve},j}^{\mathrm{H}}\|^2_{\mathrm{F}}-\frac{1}{2}\mathrm{Tr}(\mathbf{Z}^{\mathrm{H}}\mathbf{Z})
\notag\\-
&\frac{1}{2}\mathrm{Tr}(\mathbf{F}_{\mathrm{eve},j}\mathbf{V}^{\mathrm{H}}\mathbf{F}_{\mathrm{eve},j}^{\mathrm{H}}\mathbf{F}_{\mathrm{eve},j}\mathbf{V}\mathbf{F}_{\mathrm{eve},j}^{\mathrm{H}}) \Big\}\Big]\notag\\
&-(2^{\tau_{k,j}}-1)\sigma^2_{\mathrm{eve},j}\leq0, \forall k,j\text{, and}
\end{align}
\begin{align}
\mathrm{C10}\hspace{-1mm}:&\hspace{-2mm} \underset{\Delta \mathbf{h}_{\mathrm{d},k},\Delta\mathbf{G}_{\mathrm{cu},k}}{\max}\Big\{\sum_{i\neq k}\{\frac{1}{2}\|\mathbf{W}_i+\mathbf{F}_k\mathbf{V}\mathbf{F}_k^{\mathrm{H}}\|^2_{\mathrm{F}}\notag\\
&-\frac{K}{2}\mathrm{Tr}(\mathbf{F}_k\mathbf{V}^{\mathrm{H}}\mathbf{F}_k^{\mathrm{H}}\mathbf{F}_k\mathbf{V}\mathbf{F}_k^{\mathrm{H}})
\}+ \frac{1}{2}\|\mathbf{Z}+\mathbf{F}_k\mathbf{V}\mathbf{F}_k^{\mathrm{H}}\|^2_{\mathrm{F}}
\Big\}\notag\\
&-\sum_{i\neq k}\frac{1}{2}\mathrm{Tr}(\mathbf{W}_i^\mathrm{H}\mathbf{W}_i)-\frac{1}{2}\mathrm{Tr}(\mathbf{Z}^\mathrm{H}\mathbf{Z}) \leq\iota_k ,\forall k,
\end{align}
 respectively.
\subsection{$\mathcal{S}$-Procedure}
On the other hand, due to the CSI uncertainty, there are infinity possibilities for constraints $\mathrm{\overline{C3f}}$, $\mathrm{C5}$, $\mathrm{C9}$, and $\mathrm{C10}$. As a result, we adopt the $\mathcal{S}$-procedure\cite{boyd2004convex,luo2004multivariate} to tackle these issues.
Firstly, to handle the challenge in constraints $\mathrm{\overline{C3f}}$, we apply the following lemma to convert $\mathrm{\overline{C3f}}$ into a finite number of linear matrix inequalities (LMIs).
\begin{Lem}\label{Lemma:1}($\mathcal{S}$-Procedure\cite{boyd2004convex}): Let a function $f_m(\mathbf{x}), m\in \{1,2\}$, be defined as
\begin{align}
f_m(\mathbf{x})=\mathbf{x}^{\mathrm{H}}\mathbf{A}_m\mathbf{x}+2\mathrm{Re}\{\mathbf{b}_m^{\mathrm{H}}\mathbf{x}\}+c_m,
\end{align}
where $\mathbf{A}_m \in \mathbb{H}^N, \mathbf{b}_m\in \mathbb{C}^{N\times 1}$, and $c_m\in \mathbb{R}$. Then, the implication $f_1(\mathbf{x})\geq0\Longrightarrow f_2(\mathbf{x})\geq0$ holds if and only if there exists an $\epsilon \geq0$ such that
\begin{align}
\begin{bmatrix}
\mathbf{A}_2 & \mathbf{b}_2 \\
\mathbf{b}_2^{\mathrm{H}} & c_2
\end{bmatrix}-\epsilon\begin{bmatrix}
\mathbf{A}_1 & \mathbf{b}_1 \\
\mathbf{b}_1^{\mathrm{H}} & c_1
\end{bmatrix}\succeq\mathbf{0}
\end{align}
provided that there exists a point $\hat{\mathbf{x}}$ such that $f_k(\hat{\mathbf{x}})<0$.\end{Lem}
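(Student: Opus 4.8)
The plan is to prove the stated equivalence in the two directions separately, where the ``if'' part is immediate and all the real work sits in the ``only if'' part. As a common device I would homogenize the two quadratics: with $\tilde{\mathbf{x}} = [\mathbf{x}^{\mathrm{H}},1]^{\mathrm{H}} \in \mathbb{C}^{N+1}$ and
\begin{align}
\mathbf{M}_m = \begin{bmatrix} \mathbf{A}_m & \mathbf{b}_m \\ \mathbf{b}_m^{\mathrm{H}} & c_m \end{bmatrix} \in \mathbb{H}^{N+1}, \quad m\in\{1,2\},
\end{align}
each function reads $f_m(\mathbf{x}) = \tilde{\mathbf{x}}^{\mathrm{H}}\mathbf{M}_m\tilde{\mathbf{x}}$, and the matrix inequality in the statement is exactly $\mathbf{M}_2 - \epsilon\mathbf{M}_1 \succeq \mathbf{0}$.

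For sufficiency, suppose such an $\epsilon \geq 0$ exists. Then for every $\mathbf{x}$,
\begin{align}
f_2(\mathbf{x}) - \epsilon f_1(\mathbf{x}) = \tilde{\mathbf{x}}^{\mathrm{H}}(\mathbf{M}_2 - \epsilon\mathbf{M}_1)\tilde{\mathbf{x}} \geq 0,
\end{align}
so that $f_2(\mathbf{x}) \geq \epsilon f_1(\mathbf{x})$; whenever $f_1(\mathbf{x}) \geq 0$, nonnegativity of $\epsilon$ forces $f_2(\mathbf{x}) \geq 0$. This is pure substitution and raises no difficulty.

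For necessity I would proceed in three steps. First, I extend the hypothesized implication from the affine slice $\{\tilde{\mathbf{x}} : \tilde{x}_{N+1} = 1\}$ to the full homogeneous cone: for directions $\mathbf{y}$ with $\mathbf{y}^{\mathrm{H}}\mathbf{A}_1\mathbf{y} > 0$ the point $t\mathbf{y}$ eventually satisfies $f_1 \geq 0$ as $t \to \infty$, so a leading-order argument gives $\mathbf{y}^{\mathrm{H}}\mathbf{A}_2\mathbf{y} \geq 0$, and the regularity point $\hat{\mathbf{x}}$ settles the boundary directions $\mathbf{y}^{\mathrm{H}}\mathbf{A}_1\mathbf{y}=0$ by a perturbation-and-limit argument. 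Second, I invoke the hidden convexity of quadratic forms: the joint range
\begin{align}
\mathcal{R} = \big\{(\tilde{\mathbf{x}}^{\mathrm{H}}\mathbf{M}_1\tilde{\mathbf{x}},\, \tilde{\mathbf{x}}^{\mathrm{H}}\mathbf{M}_2\tilde{\mathbf{x}}) : \tilde{\mathbf{x}} \in \mathbb{C}^{N+1}\big\} \subseteq \mathbb{R}^2
\end{align}
is a convex cone, which is precisely the Toeplitz--Hausdorff theorem applied to the numerical range of $\mathbf{M}_1 + j\mathbf{M}_2$. The extended implication says $\mathcal{R}$ is disjoint from the convex region $\{(t_1,t_2): t_1 \geq 0,\ t_2 < 0\}$, so a separating hyperplane --- which may be taken through the origin since $\mathcal{R}$ is a cone --- yields weights $(\mu_1,\mu_2)$ with $\mu_1\tilde{\mathbf{x}}^{\mathrm{H}}\mathbf{M}_1\tilde{\mathbf{x}} + \mu_2\tilde{\mathbf{x}}^{\mathrm{H}}\mathbf{M}_2\tilde{\mathbf{x}} \geq 0$ for all $\tilde{\mathbf{x}}$, i.e.\ $\mu_1\mathbf{M}_1 + \mu_2\mathbf{M}_2 \succeq \mathbf{0}$, and with $\mu_1 \leq 0 \leq \mu_2$; here the regularity point is used once more to exclude the degenerate case $\mu_2 = 0$ (which would force $\mathbf{M}_1 \preceq \mathbf{0}$), so that setting $\epsilon = -\mu_1/\mu_2 \geq 0$ gives $\mathbf{M}_2 - \epsilon\mathbf{M}_1 \succeq \mathbf{0}$. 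Third, I note that this positive semidefiniteness is equivalent to $f_2(\mathbf{x}) - \epsilon f_1(\mathbf{x}) \geq 0$ for all $\mathbf{x}$ by the standard fact that an inhomogeneous Hermitian quadratic is everywhere nonnegative iff its $(N+1)\times(N+1)$ representing matrix is positive semidefinite. An equivalent packaging replaces the separation step by strong Lagrangian duality for the single-constraint program $\inf_{\mathbf{x}}\{f_2(\mathbf{x}): f_1(\mathbf{x}) \geq 0\}$, whose zero duality gap under the regularity condition is itself a consequence of the same joint-range convexity.

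I expect the joint-range convexity (the Toeplitz--Hausdorff / Dines phenomenon) to be the main obstacle, since it is the one ingredient genuinely specific to quadratics; the limiting argument that carries the implication to the homogeneous cone, the role of the regularity point, and the matrix-representation equivalence are all routine once convexity is available. As the result is classical, in the paper I would simply cite the established $\mathcal{S}$-procedure \cite{boyd2004convex} rather than reproduce this chain.
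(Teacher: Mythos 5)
The paper does not prove this lemma at all: it is quoted verbatim as a known result with a pointer to \cite{boyd2004convex}, which is also what you say you would do in the end. Your sketch of the actual proof is the standard one (homogenize via $\tilde{\mathbf{x}}=[\mathbf{x}^{\mathrm{H}},1]^{\mathrm{H}}$, extend the implication to the homogeneous cone by a scaling/limiting argument, invoke convexity of the joint range of two Hermitian forms, separate it from $\{t_1\geq 0,\ t_2<0\}$ by a hyperplane through the origin, and use the regularity point to rule out a vanishing multiplier on $\mathbf{M}_2$), and it is correct as far as it goes; the hidden-convexity step you isolate is indeed the only nontrivial ingredient, and in the complex Hermitian setting it is exactly the Toeplitz--Hausdorff theorem as you say. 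One caveat worth flagging: the regularity condition as printed in the paper, ``there exists $\hat{\mathbf{x}}$ with $f_k(\hat{\mathbf{x}})<0$,'' is a sign slip carried over from the $\leq 0$ convention of \cite{boyd2004convex}; with the $\geq 0$ convention used here the correct Slater-type condition is $f_1(\hat{\mathbf{x}})>0$, and that is the form your argument actually relies on, both to settle the boundary directions $\mathbf{y}^{\mathrm{H}}\mathbf{A}_1\mathbf{y}=0$ and to exclude $\mu_2=0$ (since $\mathbf{M}_1\preceq\mathbf{0}$ would contradict $f_1(\hat{\mathbf{x}})=\tilde{\hat{\mathbf{x}}}^{\mathrm{H}}\mathbf{M}_1\tilde{\hat{\mathbf{x}}}>0$). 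So your proposal is consistent with the paper's (citation-only) treatment, supplies the proof the paper omits, and implicitly corrects the misstated hypothesis.
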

To apply \textbf{Lemma \ref{Lemma:1}} to constraint $\mathrm{\overline{C3f}}$, we rewrite the channel uncertainty \eqref{ChannelG:error} and reformulate constraint $\mathrm{\overline{C3f}}$. In particular,
\begin{align}
&\Delta\mathbf{g}_n^{\mathrm{H}}\Delta\mathbf{g}_n\leq\rho_{\mathrm{G}}^2/N
\Rightarrow \\& \Delta\mathbf{g}_n^{\mathrm{H}}\bm{\mathcal{W}}\Delta\mathbf{g}_n+2\mathrm{Re}\{\hat{\mathbf{g}}_n^{\mathrm{H}}\bm{\mathcal{W}}\Delta\mathbf{g}_n\}+\hat{\mathbf{g}}_n^{\mathrm{H}}\bm{\mathcal{W}}\hat{\mathbf{g}}_n-{U}_{n}^{\mathrm{BS}}\geq0, \forall n,\notag
\end{align}
holds if and only if there exist $\epsilon^{\mathrm{C3f}}_{n}\geq0,\forall n$, such that the following LMI constraints hold:
\begin{align}
&\hspace{-2mm}\mathrm{\overline{\overline{C3fa}}}\hspace{-1mm}: {\mathcal{S}}_{\mathrm{C3f},n}(\bm{\mathcal{W}},s_{1,n},\epsilon^{\mathrm{C3f}}_{n}, {U}_{n}^{\mathrm{BS}})\\&
\hspace{5mm}\overset{\Delta}{=}\hspace{-1mm}
\begin{bmatrix}
 -\epsilon^{\mathrm{C3f}}_{n}\mathbf{I}_{NM_{\mathrm{t}}}&\mathbf{0}\\[-2mm]
 {0}&\hspace{-3mm}-{U}_{n}^{\mathrm{BS}}+\epsilon^{\mathrm{C3f}}_{n}\frac{\rho_{\mathrm{G}}^2}{N}
 \end{bmatrix}+\mathbf{O}_{\hat{\mathbf{g}}_n}^{\mathrm{H}}\bm{\mathcal{W}}\mathbf{O}_{\hat{\mathbf{g}}_n}\succeq\mathbf{0}, \forall n,\notag\\
&\hspace{-2mm}\mathrm{\overline{\overline{C3fb}}}\hspace{-1mm}:\epsilon^{\mathrm{C3f}}_{n}\geq0,\forall n,
\end{align}
where $\hat{\mathbf{g}}_n$$ =$$ \mathrm{vec}(\mathbf{T}_n\hat{\mathbf{G}})$, $\bm{\mathcal{W}}$$=$$\mathbf{I}_{N}\otimes (\sum_{k\in\mathcal{K}}\mathbf{W}_k+ \mathbf{Z})$, and $\mathbf{O}_{\hat{\mathbf{g}}_n} = \begin{bmatrix}\mathbf{I}_{NM_t} &\hat{\mathbf{g}}_n \end{bmatrix}$.

Then, for constraints $\mathrm{C5}$, $\mathrm{C9}$,  and $\mathrm{C10}$, there are an infinitely number of quadratic matrix inequalities (QMIs)\cite{sun2016multi,luo2004multivariate} due to the CSI uncertainty set. To overcome the problem, we first introduce slack optimization variables $\Big\{\mathbf{\Psi}_{\mathrm{C9},k},\forall k\Big\}$, $\Big\{\mathbf{\Psi}_{\mathrm{C10},k},\forall k\Big\}$, and $\Big\{\mathbf{\Psi}_{\mathrm{eve},j},\forall j\Big\}$ to replace the $\mathbf{F}_k\mathbf{V}\mathbf{F}_k^{\mathrm{H}}$ term in constraints $\mathrm{C9}$ and $\mathrm{C10}$ and the $\mathbf{F}_{\mathrm{eve},j}\mathbf{V}\mathbf{F}_{\mathrm{eve},j}^{\mathrm{H}}$ term in constraint $\mathrm{C5}$. Hence, constraints   $\mathrm{C5}$, $\mathrm{C9}$,  and $\mathrm{C10}$ can be equivalently rewritten as:
\begin{align}
&\mathrm{C5a}\hspace{-1mm}:\frac{1}{2}\|\mathbf{W}_k+\mathbf{\Psi}_{\mathrm{eve},j}\|^2_{\mathrm{F}}-\frac{1}{2}\mathrm{Tr}(\mathbf{W}_k^{\mathrm{H}}\mathbf{W}_k)
\label{constriant:c5ac9c10}\\
&\hspace{8mm}-2^{\tau_{k,j}-1}\mathrm{Tr}(\mathbf{\Psi}_{\mathrm{eve},j}^{\mathrm{H}}\mathbf{\Psi}_{\mathrm{eve},j})\notag\\
&\hspace{8mm}+(2^{\tau_{k,j}}-1)\Big\{\frac{1}{2}\|\mathbf{Z}-\mathbf{\Psi}_{\mathrm{eve},j} \|^2_{\mathrm{F}}-\frac{1}{2}\mathrm{Tr}(\mathbf{Z}^{\mathrm{H}}\mathbf{Z})\Big\}
\notag\\
&\hspace{8mm}-(2^{\tau_{k,j}}-1)\sigma^2_{\mathrm{eve},j}\leq0,\forall k,j,\notag\\
&\mathrm{C5b}\hspace{-1mm}: \mathbf{\Psi}_{\mathrm{eve},j}\succeq \underset{\Delta\mathbf{G}_{\mathrm{ce},j},\Delta\mathbf{h}_{\mathrm{ed},j}}{\max} \Big\{\mathbf{F}_{\mathrm{eve},j}\mathbf{V}\mathbf{F}_{\mathrm{eve},j}^{\mathrm{H}}\Big\},\forall j,\notag\\
&\overline{\mathrm{C9}}\hspace{-1mm}:\xi_k+ \frac{1}{2}\|\mathbf{W}_k-\mathbf{\Psi}_{\mathrm{C9},k}\|^2_{\mathrm{F}}
-\frac{1}{2}\mathrm{Tr}(\mathbf{\Psi}_{\mathrm{C9},k}^{\mathrm{H}}\mathbf{\Psi}_{\mathrm{C9},k})
\notag\\
&\hspace{8mm}-\frac{1}{2}\mathrm{Tr}(\mathbf{W}_k^\mathrm{H}\mathbf{W}_k)\leq 0,\forall k,\notag\\
&\overline{\mathrm{C10}}\hspace{-1mm}:  \sum_{i\in\mathcal{K}/ \{k\}}\Big\{\frac{1}{2}\|\mathbf{W}_i+\mathbf{\Psi}_{\mathrm{C10},k}\|^2_{\mathrm{F}}\Big\}-\frac{K}{2}\mathrm{Tr}(\mathbf{\Psi}_{\mathrm{C10},k}^{\mathrm{H}}\mathbf{\Psi}_{\mathrm{C10},k})
\notag\\
&\hspace{8mm}+ \frac{1}{2}\|\mathbf{Z}+\mathbf{\Psi}_{\mathrm{C10},k}\|^2_{\mathrm{F}}-\sum_{i\in\mathcal{K}/ \{k\}}\frac{1}{2}\mathrm{Tr}(\mathbf{W}_i^\mathrm{H}\mathbf{W}_i)\notag\\
&\hspace{8mm}-\frac{1}{2}\mathrm{Tr}(\mathbf{Z}^\mathrm{H}\mathbf{Z}) \leq\iota_k ,\forall k, \notag\\
&\mathrm{C12}\hspace{-1mm}:\mathbf{\Psi}_{\mathrm{C9},k}\hspace{-1mm}\preceq \hspace{-1mm} \underset{\Delta\mathbf{G}_{\mathrm{cu},k},\Delta\mathbf{h}_{\mathrm{d},k}}{\min}\{\mathbf{F}_k\mathbf{V}\mathbf{F}_k^{\mathrm{H}}\},\forall k\text{, and }
\notag\\
&\mathrm{C13}\hspace{-1mm}:\mathbf{\Psi}_{\mathrm{C10},k}\hspace{-1mm}\succeq \hspace{-1mm} \underset{\Delta\mathbf{G}_{\mathrm{cu},k},\Delta\mathbf{h}_{\mathrm{d},k}}{\max}\{\mathbf{F}_k\mathbf{V}\mathbf{F}_k^{\mathrm{H}}\},\forall k,\label{constriant:c13}\notag
\end{align}
respectively. Note that constraints $\mathrm{C5b},\mathrm{C12}$, and $\mathrm{C13}$ still involve an infinite number of inequality constraints. To circumvent this difficulty, we convert the infinity number of constraints $\mathrm{C5b},\mathrm{C12}$, and $\mathrm{C13}$ to an equivalent form with only a finite number of constraints by applying \textbf{Lemma 2}.
\begin{Lem}\label{Lemma:QMI}
(Generalized S-procedure\cite{luo2004multivariate}): Let $f(\mathbf{x})=\mathbf{X}^{\mathrm{H}}\mathbf{D}\mathbf{X}+\mathbf{X}^{\mathrm{H}}\mathbf{B}+\mathbf{B}^{\mathrm{H}}\mathbf{X}+\mathbf{E}$ and $\mathbf{J}\succeq\mathbf{0}$. For some $\epsilon\geq0, f(\mathbf{X})\succeq\mathbf{0},\forall \mathbf{X}\in\Big\{\mathbf{X}|\mathrm{Tr}(\mathbf{J}\mathbf{X}\mathbf{X}^{\mathrm{H}})\leq1\Big\}$, is equivalent to
\begin{align}
\begin{bmatrix}
\mathbf{E}&\mathbf{B}^{\mathrm{H}}\\
\mathbf{B}&\mathbf{D}
\end{bmatrix}-\epsilon\begin{bmatrix}
\mathbf{I}&\mathbf{0}\\
\mathbf{0}&-\mathbf{J}
\end{bmatrix}\succeq\mathbf{0}.
\end{align}
\end{Lem}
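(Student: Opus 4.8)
The plan is to reduce \textbf{Lemma \ref{Lemma:QMI}} to the scalar $\mathcal{S}$-procedure of \textbf{Lemma \ref{Lemma:1}} by means of a ``bordered vector''. Writing $\mathbf{x}=\begin{bmatrix}\mathbf{y}\\ \mathbf{w}\end{bmatrix}$, $\mathbf{M}_0=\begin{bmatrix}\mathbf{E}&\mathbf{B}^{\mathrm{H}}\\ \mathbf{B}&\mathbf{D}\end{bmatrix}$, and $\mathbf{M}_1=\begin{bmatrix}\mathbf{I}&\mathbf{0}\\ \mathbf{0}&-\mathbf{J}\end{bmatrix}$, and setting $q_0(\mathbf{x})=\mathbf{x}^{\mathrm{H}}\mathbf{M}_0\mathbf{x}$ and $q_1(\mathbf{x})=\mathbf{x}^{\mathrm{H}}\mathbf{M}_1\mathbf{x}$, a direct expansion gives the two identities $q_0(\mathbf{x})=\mathbf{y}^{\mathrm{H}}f(\mathbf{X})\mathbf{y}$ whenever $\mathbf{w}=\mathbf{X}\mathbf{y}$, and $q_1(\mathbf{x})=\|\mathbf{y}\|^2-\mathbf{w}^{\mathrm{H}}\mathbf{J}\mathbf{w}$. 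Since the target LMI is exactly $\mathbf{M}_0-\epsilon\mathbf{M}_1\succeq\mathbf{0}$, i.e. the homogeneous ($\mathbf{b}=\mathbf{0}$, $c=0$) instance of \textbf{Lemma \ref{Lemma:1}}, it suffices to prove that the robust matrix inequality ``$f(\mathbf{X})\succeq\mathbf{0}$ on $\{\mathbf{X}:\mathrm{Tr}(\mathbf{J}\mathbf{X}\mathbf{X}^{\mathrm{H}})\leq1\}$'' is equivalent to the scalar implication $q_1(\mathbf{x})\geq0\Rightarrow q_0(\mathbf{x})\geq0$ for all $\mathbf{x}$. \textbf{Lemma \ref{Lemma:1}} then converts this implication into the existence of $\epsilon\geq0$ with $\mathbf{M}_0-\epsilon\mathbf{M}_1\succeq\mathbf{0}$, its Slater condition holding trivially via $\mathbf{x}=\begin{bmatrix}\mathbf{y}\\ \mathbf{0}\end{bmatrix}$ with $\mathbf{y}\neq\mathbf{0}$.

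I would then establish the equivalence in two steps, each resting on one elementary fact. For the direction ``robust inequality $\Rightarrow$ implication'', take any $\mathbf{x}=\begin{bmatrix}\mathbf{y}\\ \mathbf{w}\end{bmatrix}$ with $q_1(\mathbf{x})\geq0$, i.e. $\mathbf{w}^{\mathrm{H}}\mathbf{J}\mathbf{w}\leq\|\mathbf{y}\|^2$. For $\mathbf{y}\neq\mathbf{0}$ the rank-one lift $\mathbf{X}=\mathbf{w}\mathbf{y}^{\mathrm{H}}/\|\mathbf{y}\|^2$ satisfies $\mathbf{X}\mathbf{y}=\mathbf{w}$ and $\mathrm{Tr}(\mathbf{J}\mathbf{X}\mathbf{X}^{\mathrm{H}})=\mathbf{w}^{\mathrm{H}}\mathbf{J}\mathbf{w}/\|\mathbf{y}\|^2\leq1$, so it is feasible and the first identity gives $q_0(\mathbf{x})=\mathbf{y}^{\mathrm{H}}f(\mathbf{X})\mathbf{y}\geq0$. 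For the converse, fix any feasible $\mathbf{X}$ and any $\mathbf{y}$, and set $\mathbf{w}=\mathbf{X}\mathbf{y}$; using $\mathbf{J}\succeq\mathbf{0}$ and the eigenvalue--trace bound $\mathbf{y}^{\mathrm{H}}\mathbf{X}^{\mathrm{H}}\mathbf{J}\mathbf{X}\mathbf{y}\leq\lambda_{\max}(\mathbf{X}^{\mathrm{H}}\mathbf{J}\mathbf{X})\|\mathbf{y}\|^2\leq\mathrm{Tr}(\mathbf{X}^{\mathrm{H}}\mathbf{J}\mathbf{X})\|\mathbf{y}\|^2=\mathrm{Tr}(\mathbf{J}\mathbf{X}\mathbf{X}^{\mathrm{H}})\|\mathbf{y}\|^2\leq\|\mathbf{y}\|^2$ shows $q_1(\mathbf{x})\geq0$, whence the implication yields $\mathbf{y}^{\mathrm{H}}f(\mathbf{X})\mathbf{y}=q_0(\mathbf{x})\geq0$; as $\mathbf{y}$ is arbitrary, $f(\mathbf{X})\succeq\mathbf{0}$.

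The one delicate point, and the step I expect to be the main obstacle, is the boundary case $\mathbf{y}=\mathbf{0}$ in the first direction, where the rank-one lift degenerates: here $q_1\geq0$ forces $\mathbf{w}^{\mathrm{H}}\mathbf{J}\mathbf{w}=0$, and I would recover $q_0\geq0$ by a continuity argument, approximating $\begin{bmatrix}\mathbf{0}\\ \mathbf{w}\end{bmatrix}$ by $\begin{bmatrix}\mathbf{y}_t\\ \mathbf{w}\end{bmatrix}$ with $\mathbf{y}_t\to\mathbf{0}$, each of which still satisfies $q_1=\|\mathbf{y}_t\|^2\geq0$ and hence $q_0\geq0$ by the generic case, and then passing to the limit. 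The genuinely non-trivial content---that the scalar implication is \emph{lossless}, i.e. that a single multiplier $\epsilon$ certifies it---is inherited entirely from \textbf{Lemma \ref{Lemma:1}} (ultimately the convexity of the joint numerical range of two complex Hermitian forms); accordingly, in the paper I would simply cite \cite{luo2004multivariate} and present only the reduction above.
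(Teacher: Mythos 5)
Your proposal is mathematically sound, but note that the paper itself offers no proof of this lemma at all: it is imported verbatim as a black box from \cite{luo2004multivariate}, so there is no in-paper argument to match against. What you supply is therefore a genuinely different (and more self-contained) route: you derive the matrix-valued $\mathcal{S}$-procedure from the scalar complex $\mathcal{S}$-procedure of \textbf{Lemma 1} via the bordered vector $\mathbf{x}=[\mathbf{y};\mathbf{w}]$ and the rank-one lift $\mathbf{X}=\mathbf{w}\mathbf{y}^{\mathrm{H}}/\|\mathbf{y}\|^2$. I checked the two pivotal identities ($q_0(\mathbf{x})=\mathbf{y}^{\mathrm{H}}f(\mathbf{X})\mathbf{y}$ when $\mathbf{w}=\mathbf{X}\mathbf{y}$, and $\mathrm{Tr}(\mathbf{J}\mathbf{X}\mathbf{X}^{\mathrm{H}})=\mathrm{Tr}(\mathbf{X}^{\mathrm{H}}\mathbf{J}\mathbf{X})\geq\lambda_{\max}(\mathbf{X}^{\mathrm{H}}\mathbf{J}\mathbf{X})$, valid because $\mathbf{X}^{\mathrm{H}}\mathbf{J}\mathbf{X}\succeq\mathbf{0}$), the feasibility of the lift, the continuity patch for $\mathbf{y}=\mathbf{0}$ (where $\mathbf{J}\succeq\mathbf{0}$ forces $\mathbf{w}^{\mathrm{H}}\mathbf{J}\mathbf{w}=0$), the strict-feasibility point for \textbf{Lemma 1}, and the fact that appending a zero row and column to the homogeneous forms does not affect positive semidefiniteness; all of these go through. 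The original reference proves the result directly via a rank-one decomposition of positive semidefinite matrices and semidefinite-programming duality for the quadratic matrix inequality, whereas your reduction pushes all of the losslessness onto the two-form complex $\mathcal{S}$-lemma that the paper already states. What your route buys is transparency and economy (one cited lemma instead of two, and an explicit logical dependence between them); what the citation route buys is brevity and consistency with how the result is normally quoted in this literature. Either is acceptable here; if included, your argument should be flagged as a proof of the lemma rather than a restatement of the one in \cite{luo2004multivariate}.
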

Let us take constraint $\mathrm{C12}$ as an example. After substituting \eqref{channelGcu:model} and \eqref{channelhd:model} into \eqref{constriant:c13}, for $\Delta\mathbf{F}_k\in$$\Big\{\Delta\mathbf{F}_k|\mathrm{Tr}(\frac{1}{\rho_{\mathrm{cu},k}^2+\rho_{\mathrm{d},k}^2}\Delta\mathbf{F}_k\Delta\mathbf{F}_k^{\mathrm{H}})\leq 1\}$, constraint $\mathrm{C12}$ is expressed as
\begin{align}
\mathrm{C12}\hspace{-1mm}: & \Delta\mathbf{F}_k\mathbf{V}\Delta\mathbf{F}_k^{\mathrm{H}}+\Delta\mathbf{F}_k\mathbf{V}\hat{\mathbf{F}}_k^{\mathrm{H}}+
\hat{\mathbf{F}}_k\mathbf{V}\Delta\mathbf{F}_k^{\mathrm{H}}\notag\\
&+\hat{\mathbf{F}}_k\mathbf{V}\hat{\mathbf{F}}_k^{\mathrm{H}}-\mathbf{\Psi}_{\mathrm{C9},k}\succeq\mathbf{0},\forall k,
\end{align}
where $\hat{\mathbf{F}}_k=\begin{bmatrix}\hat{\mathbf{G}}_{\mathrm{cu},k}^{\mathrm{H}}&\hat{\mathbf{h}}_{\mathrm{d},k}\end{bmatrix}$, $\forall k$ and $\Delta\mathbf{F}_k=\begin{bmatrix}\Delta\mathbf{G}_{\mathrm{cu},k}^{\mathrm{H}}&\Delta\mathbf{h}_{\mathrm{d},k}\end{bmatrix}$, $\forall k$. Then, by applying Lemma \ref{Lemma:QMI}, constraint $\mathrm{C12}$ is equivalently transformed as
\begin{align}
&\overline{\mathrm{C12a}}\hspace{-1mm}: {\mathcal{S}}_{\mathrm{C12}}(\mathbf{V},\epsilon^{\mathrm{\overline{C12}}}_{k})=\hspace{-1mm}\\
&
\begin{bmatrix}\hat{\mathbf{F}}_k\mathbf{V}\hat{\mathbf{F}}_k^{\mathrm{H}}\hspace{-1mm}-\hspace{-1mm}\mathbf{\Psi}_{\mathrm{C9},k}\hspace{-1mm}-\hspace{-1mm}\epsilon^{\mathrm{\overline{C12}}}_{k}\mathbf{I}_{M_\mathrm{t}}
&&\hat{\mathbf{F}}_k\mathbf{V}\\
\mathbf{V}\hat{\mathbf{F}}_k^{\mathrm{H}}&&\hspace{-9mm}\mathbf{V}\hspace{-1mm}+\hspace{-1mm}\epsilon^{\mathrm{\overline{C12}}}_{k}\frac{1}{\rho_{\mathrm{cu},k}^2+\rho_{\mathrm{d},k}^2}\mathbf{I}_{N+1}\end{bmatrix}\!\!\succeq\!\!\mathbf{0}, \forall k,\notag\\
&\overline{\mathrm{C12b}}\hspace{-1mm}: \epsilon^{\mathrm{\overline{C12}}}_{k}\geq0,\forall k.\notag
\end{align}
Similarly, constraints $\mathrm{C5b}$ and $\mathrm{C13}$ can be equivalently represented as
\begin{align}
&\overline{\mathrm{C5b}}\hspace{-1mm}: {\mathcal{S}}_{\mathrm{C5b}}(\mathbf{V},\epsilon^{\mathrm{\overline{C5b}}}_{j})=\hspace{-1mm}\\
&
\begin{bmatrix}\!-\!\hat{\mathbf{F}}_{\mathrm{eve},j}\!\mathbf{V}\hat{\mathbf{F}}_{\mathrm{eve},j}^{\mathrm{H}}\hspace{-1mm}+\hspace{-1mm}\mathbf{\Psi}_{\mathrm{eve},j}\hspace{-1mm}-\hspace{-1mm}\epsilon^{\mathrm{\overline{C5b}}}_{j}\mathbf{I}_{\mathrm{M}_t}
&&\!-\hat{\mathbf{F}}_{\mathrm{eve},j}\mathbf{V}\notag\\
-\mathbf{V}\hat{\mathbf{F}}_{\mathrm{eve},j}^{\mathrm{H}}&&\hspace{-8mm}\!\!\!\!\frac{\epsilon^{\mathrm{\overline{C5b}}}_{j}}{\rho_{\mathrm{ce},j}^2+\rho_{\mathrm{ed},j}^2}\mathbf{I}_{N+1}\hspace{-1mm}-\hspace{-1mm}\mathbf{V}\end{bmatrix}\!\succeq\!\mathbf{0}, \forall j,\\
&\overline{\mathrm{C5c}}\hspace{-1mm}: \epsilon^{\mathrm{\overline{C5b}}}_{j}\geq0, \forall j,\notag\\
&\overline{\mathrm{C13a}}\hspace{-1mm}: \bm{\mathcal{S}}_{\mathrm{C13}}(\mathbf{V},\epsilon^{\mathrm{\overline{C13}}}_{k})\hspace{-1mm}=
\hspace{-1mm}\notag\\
&\begin{bmatrix}-\hat{\mathbf{F}}_k\mathbf{V}\hat{\mathbf{F}}_k^{\mathrm{H}}\hspace{-1mm}+\hspace{-1mm}\mathbf{\Psi}_{\mathrm{C10},k}-\epsilon^{\mathrm{\overline{C13}}}_{k}\mathbf{I}_{\mathrm{M}_t}
&&-\hat{\mathbf{F}}_k\mathbf{V}\notag\\
-\mathbf{V}\hat{\mathbf{F}}_k^{\mathrm{H}}&&\hspace{-8mm}\frac{\epsilon^{\mathrm{\overline{C13}}}_{k}}{\rho_{\mathrm{cu},k}^2+\rho_{\mathrm{d},k}^2}\mathbf{I}_{N+1}-\mathbf{V}\end{bmatrix}\hspace{-1mm}\succeq\hspace{-1mm}\mathbf{0} , \forall k,\notag\\
&\text{and }\overline{\mathrm{C13b}}\hspace{-1mm}: \epsilon^{\mathrm{\overline{C13}}}_{k}\geq0, \forall k.\notag
\end{align}
where $\Delta\mathbf{F}_{\mathrm{eve},j}=\begin{bmatrix}\Delta\mathbf{G}_{\mathrm{ce},j}^{\mathrm{H}}&\Delta\mathbf{h}_{\mathrm{ed},j}\end{bmatrix}$, $\forall j$, for $\Delta\mathbf{F}_{\mathrm{eve},j}\in$
$\Big\{\Delta\mathbf{F}_{\mathrm{eve},j}|\mathrm{Tr}(\frac{1}{\rho_{\mathrm{ce},j}^2+\rho_{\mathrm{ed},j}^2}$$\Delta\mathbf{F}_{\mathrm{eve},j}\Delta\mathbf{F}_{\mathrm{eve},j}^{\mathrm{H}})\leq 1\}$ and $\hat{\mathbf{F}}_{\mathrm{eve},j}=\begin{bmatrix}\hat{\mathbf{G}}_{\mathrm{ce},j}^{\mathrm{H}}&\hat{\mathbf{h}}_{\mathrm{ed},j}\end{bmatrix}$.

\subsection{SCA- and SDR-based Iterative Algorithm }
Next, to tackle the non-convexity of constraints $\mathrm{C4d}$, $\mathrm{C5a}$, $\overline{\mathrm{C9}}$, $\overline{\mathrm{C10}}$, ${\mathrm{C11b}}$, and the objective function in \eqref{optimization_problem_transformed}, we first transform them into a D.C. form such that SCA can be applied to obtain a suboptimal solution.
Firstly, the rank-one constraint ${\mathrm{C11b}}$ is rewritten using \textbf{lemma 3}:
\begin{Lem}
The rank-one constraint $\mathrm{C11b}$ is equivalent to the following D.C. form constraint:
\begin{align}
\widetilde{\mathrm{C11b}}\hspace{-1mm}:\|\mathbf{Y}\|_{*}-\|\mathbf{Y}\|_2\leq0. \label{rank1_v_inequality}
\end{align}
\end{Lem}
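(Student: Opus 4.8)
The plan is to exploit the positive semi-definiteness of $\mathbf{Y}$ enforced by $\mathrm{C11a}$, which makes the singular values of $\mathbf{Y}$ coincide with its eigenvalues. Writing the eigenvalues of $\mathbf{Y}\in\mathbb{H}^{N+2}$ as $\lambda_1\geq\lambda_2\geq\cdots\geq\lambda_{N+2}\geq0$, I would first record the two elementary identities
\begin{align}
\|\mathbf{Y}\|_{*}=\sum_{i=1}^{N+2}\lambda_i \quad\text{and}\quad \|\mathbf{Y}\|_{2}=\lambda_{\max}(\mathbf{Y})=\lambda_1,\notag
\end{align}
so that the left-hand side of $\widetilde{\mathrm{C11b}}$ becomes $\|\mathbf{Y}\|_{*}-\|\mathbf{Y}\|_{2}=\sum_{i=2}^{N+2}\lambda_i$. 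The heart of the argument is then the observation that this quantity is a sum of non-negative eigenvalues, and is therefore always non-negative regardless of the feasible point chosen; this is simply the general matrix-norm inequality that the sum of the singular values dominates the largest one, specialized to the Hermitian PSD case.

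Next I would establish the two directions of the equivalence. For the forward direction, if $\mathrm{Rank}(\mathbf{Y})=1$ then only $\lambda_1$ is nonzero and $\lambda_2=\cdots=\lambda_{N+2}=0$, so $\|\mathbf{Y}\|_{*}-\|\mathbf{Y}\|_{2}=0$ and $\widetilde{\mathrm{C11b}}$ holds. Conversely, because $\|\mathbf{Y}\|_{*}-\|\mathbf{Y}\|_{2}\geq0$ always, the inequality in $\widetilde{\mathrm{C11b}}$ can only be met with equality, which forces $\sum_{i=2}^{N+2}\lambda_i=0$; since each term is non-negative, this yields $\lambda_2=\cdots=\lambda_{N+2}=0$, i.e.\ $\mathrm{Rank}(\mathbf{Y})\leq1$.

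Finally, to upgrade $\mathrm{Rank}(\mathbf{Y})\leq1$ to $\mathrm{Rank}(\mathbf{Y})=1$, I would invoke the structure of $\mathbf{Y}$ in $\mathrm{C11a}$: its bottom-right entry equals $1$, hence $\mathbf{Y}\neq\mathbf{0}$ and $\mathrm{Rank}(\mathbf{Y})\geq1$. Combining the two bounds gives $\mathrm{Rank}(\mathbf{Y})=1$, completing the equivalence. I expect the main obstacle to be conceptual rather than computational: one must justify that for a PSD matrix the gap $\|\mathbf{Y}\|_{*}-\|\mathbf{Y}\|_{2}$ serves as a valid rank-one penalty, i.e.\ that it is non-negative everywhere and vanishes exactly on the set of rank-at-most-one matrices. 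Once this non-negativity is in hand, the ``$\leq0$'' constraint becomes automatically an ``$=0$'' constraint and the remainder follows. Some care is also needed to stay within the Hermitian PSD setting, since for general non-Hermitian matrices the singular values and eigenvalues differ and the clean identity equating $\|\mathbf{Y}\|_{*}$ with $\sum_i\lambda_i$ would no longer hold.
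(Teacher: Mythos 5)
Your proof is correct and follows essentially the same route as the paper's: both rest on the fact that $\|\mathbf{Y}\|_{*}$ is the sum of the singular values while $\|\mathbf{Y}\|_{2}$ is the largest one, so the ``$\leq 0$'' constraint forces all but one singular value to vanish. You are in fact slightly more careful than the paper, which stops at ``unit rank'' without comment, whereas you explicitly use the unit bottom-right entry of $\mathbf{Y}$ from $\mathrm{C11a}$ to rule out $\mathbf{Y}=\mathbf{0}$ and upgrade $\mathrm{Rank}(\mathbf{Y})\leq 1$ to $\mathrm{Rank}(\mathbf{Y})=1$.
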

\begin{proof}
Since $\mathbf{Y}$ is a Hermitian matrix, the inequality $\|\mathbf{Y}\|_{*}=\sum_i \varrho_i\geq\|\mathbf{Y}\|_2=\underset{i}{\max}\{\varrho_i\}$ holds, where the $i$-th singular value of $\mathbf{Y}$ is denoted by $\varrho_i$. So the inequality in \eqref{rank1_v_inequality} holds if and only if $\mathbf{Y}$ has a unit rank\cite{yu2019robust}.
\end{proof}
Now, we aim to propose an iterative algorithm based on SCA to tackle the D.C. form constraint $\widetilde{\mathrm{C11b}}$. To this end, for any feasible point $\mathbf{Y}^{(t)}$, where $(t)$ denotes the iteration index for the proposed algorithm summarized in \textbf{Algorithm \ref{alg}} (to be discussed in detail later), a lower bound of $\|\mathbf{Y}\|_2$ is given by its first-order Taylor approximation:
\begin{align}
\hspace{-3mm}\|\mathbf{Y}\|_2\!\geq \!\!\|\mathbf{Y}^{(t)}\!\|_2\!+\!\!\mathrm{Tr}\Big(\!\bm{\lambda}_{\max}(\mathbf{Y}^{(t)})\!\!\times\!\! \bm{\lambda}_{\max}^{\mathrm{H}}(\mathbf{Y}^{(t)}\!)(\mathbf{Y}\!\!-\!\mathbf{Y}^{(t)})\!\Big).\hspace{-2mm}
\end{align}
By applying SCA, a subset of constraint $\widetilde{\mathrm{C11b}}$ can be obtained which is given by
\begin{align}
\overline{\mathrm{C11b}}\hspace{-1mm}:&\|\mathbf{Y}\|_*-\|\mathbf{Y}^{(t)}\|_2\\
&-\mathrm{Tr}\Big(\bm{\lambda}_{\max}(\mathbf{Y}^{(t)}) \bm{\lambda}_{\max}^{\mathrm{H}}(\mathbf{Y}^{(t)})\times (\mathbf{Y}-\mathbf{Y}^{(t)})\Big)\leq0.\notag
\end{align}
Since $\overline{\mathrm{C11b}}\Rightarrow{\mathrm{C11b}}$, replacing $\mathrm{C11b}$ with $\overline{\mathrm{C11b}}$ can ensure that the former is satisfied when the proposed algorithm converges. On the other hand, it can be observed that the non-convex objective function in \eqref{objective_ND}, non-convex constraint $\overline{{\mathrm{C3c}}}$ in \eqref{C3aoverlineoverline}, and non-convex constraint $\mathrm{C4d}$ in \eqref{C4dC4e} are also in D.C. form that are differentiable. Similarly, by deriving the first-order Taylor expansions corresponding to the non-convex component, for any feasible point $\iota_k^{(t)}$, $\beta_{\mathrm{PR}}^{(t)}$, and $s_{i,n}^{(t)}$, we have the following inequalities
\begin{align}
&D_{\mathrm{obj}}(\iota_k)\leq\!\!\sum_{k\in\mathcal{K}}\!\!\Big( \nabla^{\mathrm{H}}_{\mathbf{\iota}_k}D_{\mathrm{obj}}(\iota_k^{(t)})(\iota_k-\iota_k^{(t)})\Big)\!+\!D_{\mathrm{obj}}(\iota_k^{(t)}),\label{SCA:D}\\
&e^{-a(\beta_{\mathrm{PR}}-q)}\!\geq\!e^{-a(\beta_{\mathrm{PR}}^{(t)}-q)} -ae^{-a(\beta_{\mathrm{PR}}^{(t)}-q}(\beta_{\mathrm{PR}}\!-\!\beta_{\mathrm{PR}}^{(t)})\text{, and }\notag\\
&s_{i,n}^2\geq \,(s_{i,n}^{(t)})^2+2s_{i,n}^{(t)}(s_{i,n}-s_{i,n}^{(t)}),\forall i,n,\notag
\end{align}
respectively, where $\nabla^{\mathrm{H}}_{\mathbf{\iota}_k}D_{\mathrm{obj}}(\iota_k)=\frac{1}{(\ln2)(\sigma^2_k+\iota_k)}$.  As such, a lower bound of the objective function \eqref{objective_ND} and subsets of constraints $\overline{\mathrm{C3c}}$ and $\mathrm{C4d}$ are given by
\begin{align}
&N_{\mathrm{obj}}(\xi_k,\iota_k)\!-\!\!\sum_{k\in\mathcal{K}}\!\!\Big(\! \nabla^{\mathrm{H}}_{\mathbf{\iota}_k}D_{\mathrm{obj}}(\iota_k^{(t)})(\iota_k\!-\!\iota_k^{(t)})\!\Big)\!\!-\!\!D_{\mathrm{obj}}(\iota_k^{(t)}),\\
&\overline{\mathrm{\overline{C3c}}}\hspace{-1mm}:\! -e^{-a(\beta_{\mathrm{PR}}^{(t)}-q)} \!\!+\!ae^{-a(\beta_{\mathrm{PR}}^{(t)}-q)}(\beta_{\mathrm{PR}}\!-\!\beta_{\mathrm{PR}}^{(t)})\!+\!U_n^{\mathrm{ES}}\!\!\leq\!0,\forall n, \notag\\
&\text{and }\overline{\mathrm{C4d}}\hspace{-1mm}:s_{i,n}-(s_{i,n}^{(t)})^2-2s_{i,n}^{(t)}(s_{i,n}-s_{i,n}^{(t)})\leq0,\forall i,n,\notag
\end{align}
respectively.
\begin{table}[t] \vspace*{-4mm}
\scriptsize
\linespread{1.01}
\begin{algorithm} [H]
\caption{Proposed Resource Allocation Scheme } \label{alg}
\begin{algorithmic} [1]

\STATE Initialize the maximum number of iteration $t_{\max}$, the initial iteration index $t=0$, and variables $\iota^{t}_k$, $\beta^{t}_{\mathrm{PR}}$, $s^{t}_{i,n}$, $\mathbf{W}^{t}_k$, $\mathbf{Z}^{t}$, $\mathbf{\Psi}^{t}_{\mathrm{C9},k}$, $\mathbf{\Psi}^{t}_{\mathrm{C10},k}$, $\mathbf{\Psi}^{t}_{\mathrm{eve},j}$, and $\mathbf{Y}^{t},\forall k,j,i,n$.

\REPEAT[Main Loop: SCA]

\STATE Solve problem \eqref{optimization_problem_sca} with given variables $\iota^{t}_k$, $\beta^{t}_{\mathrm{PR}}$, $s^{t}_{i,n}$, $\mathbf{W}^{t}_k$, $\mathbf{Z}^{t}$, $\mathbf{\Psi}^{t}_{\mathrm{C9},k}$, $\mathbf{\Psi}^{t}_{\mathrm{C10},k}$, $\mathbf{\Psi}^{t}_{\mathrm{eve},j}$, and $\mathbf{Y}^{t},\forall k,j,i,n$, to obtain $\iota^{t+1}_k$, $\beta^{t+1}_{\mathrm{PR}}$, $s^{t+1}_{i,n}$, $\mathbf{W}^{t+1}_k$, $\mathbf{Z}^{t+1}$, $\mathbf{\Psi}^{t+1}_{\mathrm{C9},k}$, $\mathbf{\Psi}^{t+1}_{\mathrm{C10},k}$, $\mathbf{\Psi}^{t+1}_{\mathrm{eve},j}$, and $\mathbf{Y}^{t+1},\forall k,j,i,n$;
\STATE Set $t=t+1$ and update variables;
\UNTIL{convergence or $t=t_{\max}$}.
\end{algorithmic}
\end{algorithm}
\vspace*{-12mm}
\end{table}
Similarly, to overcome the non-convexity of  $\mathrm{C5a}$, $\overline{\mathrm{C9}}$, and $\overline{\mathrm{C10}}$ in \eqref{constriant:c5ac9c10}, 
 we construct their corresponding global underestimators by their first-order Taylor approximations, respectively. In particular, for any feasible point $\mathbf{W}_i^{(t)}$, a lower bound of $\mathrm{Tr}(\mathbf{W}_i^\mathrm{H}\mathbf{W}_i)$ is given by
\begin{align}
&\mathrm{Tr}(\mathbf{W}_i^\mathrm{H}\mathbf{W}_i)\geq-\|\mathbf{W}_i^{(t)}\|^2_{\mathrm{F}}+2\mathrm{Tr}\Big((\mathbf{W}_i^{(t)})^{\mathrm{H}}\mathbf{W}_i\Big).
\end{align}
Following the same method, the lower bound of   $\mathrm{Tr}(\mathbf{W}_k^\mathrm{H}\mathbf{W}_k)$,  $\mathrm{Tr}(\mathbf{\mathbf{\Psi}}_{\mathrm{eve},j}^\mathrm{H}\mathbf{\mathbf{\Psi}}_{\mathrm{eve},j})$,  $\mathrm{Tr}(\mathbf{\mathbf{\Psi}}_{\mathrm{C9},k}^\mathrm{H}\mathbf{\mathbf{\Psi}}_{\mathrm{C9},k})$,  $\mathrm{Tr}(\mathbf{Z}^\mathrm{H}\mathbf{Z})$, and $\mathrm{Tr}(\mathbf{\mathbf{\Psi}}_{\mathrm{C10},k}^\mathrm{H}\mathbf{\mathbf{\Psi}}_{\mathrm{C10},k})$ can be found. Therefore, the subsets of $\mathrm{C5a}$, $\overline{\mathrm{C9}}$, and $\overline{\mathrm{C10}}$  in \eqref{constriant:c5ac9c10} are given by
\begin{align}
&\overline{\mathrm{C5a}}\hspace{-1mm}:\|\mathbf{W}_k\!+\!\mathbf{\Psi}_{\mathrm{eve},j}\|^2_{\mathrm{F}}\!+\!\|\mathbf{W}_k^{(t)}\|^2_{\mathrm{F}}\!-\!2\mathrm{Tr}\Big(\!(\mathbf{W}_k^{(t)})^{\mathrm{H}}\mathbf{W}_k\!\Big)\\
&\hspace{7mm}+(2^{\tau_{k,j}})\|\mathbf{\Psi}_{\mathrm{eve},j}^{(t)}\|^2_{\mathrm{F}}-(2^{\tau_{k,j}+1})\mathrm{Tr}\Big((\mathbf{\Psi}_{\mathrm{eve},j}^{(t)})^{\mathrm{H}}\mathbf{\Psi}_{\mathrm{eve},j}\Big)\notag\\
&\hspace{7mm}-(2^{\tau_{k,j}+1}-2)\sigma^2_\mathrm{eve}+(2^{\tau_{k,j}}-1)\Big\{\|\mathbf{Z}-\mathbf{\Psi}_{\mathrm{eve},j} \|^2_{\mathrm{F}}
+ \notag\\
&\hspace{7mm} \|\mathbf{Z}^{(t)}\|^2_{\mathrm{F}}-2\mathrm{Tr}\Big((\mathbf{Z}^{(t)})^{\mathrm{H}}\mathbf{Z}\Big)\Big\}\leq0, \forall k,j,\notag\\
&\overline{\overline{\mathrm{C9}}}\hspace{-1mm}:\xi_k +\!\!\frac{1}{2}\Big(\|\mathbf{W}_k-\mathbf{\Psi}_{\mathrm{C9},k}\|^2_{\mathrm{F}}+\|\mathbf{\Psi}_{\mathrm{C9},k}^{(t)}\|^2_{\mathrm{F}}\!+\!\|\mathbf{W}_k^{(t)}\|^2_{\mathrm{F}}\Big)\!
\\
&\hspace{5mm}-\!\mathrm{Tr}\Big((\mathbf{\Psi}_{\mathrm{C9},k}^{(t)})^{\mathrm{H}}\mathbf{\Psi}_{\mathrm{C9},k}\Big)\!-\!\mathrm{Tr}\Big(\!(\mathbf{W}_k^{(t)})^{\mathrm{H}}\mathbf{W}_k\!\Big)\!\leq \! 0,\forall k,\text{ and }\notag\\
&\overline{\overline{\mathrm{C10}}}\hspace{-1mm}: \hspace{-3mm}\sum_{i\in\mathcal{K}/ \{k\}}\hspace{-3mm}\Big\{\frac{1}{2}\|\mathbf{W}_i+\mathbf{\Psi}_{\mathrm{C10},k}\|^2_{\mathrm{F}}\Big\}
+\frac{1}{2}\|\mathbf{Z}+\mathbf{\Psi}_{\mathrm{C10},k}\|^2_{\mathrm{F}}\\
&+\frac{K}{2}\|\mathbf{\Psi}_{\mathrm{C10},k}^{(t)}\|^2_{\mathrm{F}}\!-
\!K\mathrm{Tr}\Big(\!(\mathbf{\Psi}_{\mathrm{C10},k}^{(t)})^{\mathrm{H}}\mathbf{\Psi}_{\mathrm{C10},k}\!\Big)\!-\!\mathrm{Tr}\Big(\!(\mathbf{Z}^{(t)})^{\mathrm{H}}\mathbf{Z}\!\Big)\notag\\
&+\hspace{-4mm}\sum_{i\in\mathcal{K}/ \{k\}}\hspace{-3mm}\Big\{\frac{1}{2}\|\mathbf{W}_i^{(t)}\|^2_{\mathrm{F}}\!-\!\mathrm{Tr}\Big(\!(\mathbf{W}_i^{(t)})^{\mathrm{H}}\mathbf{W}_i\!\Big)\Big\}\!+\!\frac{1}{2}\|\mathbf{Z}^{(t)}\|^2_{\mathrm{F}} \!\leq\!\iota_k ,\forall k,\notag
\end{align}
respectively, where $\mathbf{W}_k^{(t)}$, $\mathbf{Z}^{(t)}$, $\mathbf{\Psi}_{\mathrm{eve},j}^{(t)}$, $\mathbf{\Psi}_{\mathrm{C9},k}^{(t)}$, and $\mathbf{\Psi}_{\mathrm{C10},k}^{(t)}$ are the solutions obtained in the $t$-th iteration.
For notational simplicity, we define  $\epsilon=\{\epsilon^{\mathrm{C3f}}_{k,n},\epsilon^{\mathrm{\overline{C5b}}}_{j},\epsilon^{\mathrm{\overline{C12}}}_{k},\epsilon^{\mathrm{\overline{C13}}}_{k}\}$, $\mathbf{\Psi}=\{\mathbf{\Psi}_{\mathrm{C9},k},\mathbf{\Psi}_{\mathrm{C10},k},\mathbf{\Psi}_{\mathrm{eve},j}\}$, and $\mathcal{U}=\{U_{n}^{\mathrm{ES}},U_{n}^{\mathrm{BS}}\}$. A lower bound of \eqref{optimization_problem_transformed} can be obtained via solving the following optimization problem:
\begin{align}
\underset{\substack{\mathbf{W}_k\in\mathbb{H}^{M_{\mathrm{t}}}, \\ \mathbf{Z}\in\mathbb{H}^{M_{\mathrm{t}}},\,\\  \mathbf{S}, \mathbf{Y}\in\mathbb{H}^{N+2},\\ \mathbf{V}\in\mathbb{H}^{N+1},\\ \mathbf{v}, \xi_k,\iota_k,\\ \beta_{\mathrm{PR}},\,\mathcal{U},\epsilon,\mathbf{\Psi}}}{\mathrm{maximize}} &\,\, N_{\mathrm{obj}}(\xi_k,\iota_k)\!-\!\!\!\sum_{k\in\mathcal{K}}\! \nabla^{\mathrm{H}}_{\mathbf{\iota}_k}\hspace{-1mm}D_{\mathrm{obj}}(\iota_k^{(t)})(\iota_k\!-\!\iota_k^{(t)})\!-\!D_{\mathrm{obj}}(\iota_k^{(t)})
  \label{optimization_problem_sca} \\[-14mm]
&\mathrm{s.t.}\,\,\mathrm{C1},\mathrm{\overline{\overline{C3a}}},\mathrm{\overline{\overline{C3b}}},\overline{\mathrm{\overline{C3c}}},\mathrm{\overline{C3d}},
\mathrm{\overline{C3e}},\mathrm{\overline{\overline{C3fa}}},\mathrm{\overline{\overline{C3fb}}},\notag\\ &\hspace{6mm}\mathrm{{\overline{C3g}}},\mathrm{\overline{C3h}},
\mathrm{C4a},\mathrm{C4b},\overline{\mathrm{C4d}},\mathrm{C4e},\overline{\mathrm{C5a}},\overline{\mathrm{C5b}},\notag\\ &\hspace{6mm}\overline{\mathrm{C5c}},\mathrm{C6},
\mathrm{C7},\mathrm{C8},\overline{\overline{\mathrm{C9}}},\overline{\overline{\mathrm{C10}}},\mathrm{C11a},
\overline{\mathrm{C11b}},\notag\\ &\hspace{6mm}\overline{\mathrm{C12a}},\overline{\mathrm{C12b}},\overline{\mathrm{C13a}},
\overline{\mathrm{C13b}}.\notag
\end{align}
Since rank-one constraint $\mathrm{C8}$ is the only non-convex part in optimization problem \eqref{optimization_problem_sca}, we apply the SDR technique to drop the rank constraint in $\mathrm{C8}$, i.e., \cancel{$\mathrm{C8}\hspace{-1mm}: \mathrm{Rank}(\mathbf{W}_k)\leq1, \forall k$}, such that the rank constraint relaxed version of \eqref{optimization_problem_sca} can be solved by using standard numerical solver for convex programming. In the following theorem, the tightness of the adopted SDR is revealed.
\begin{Thm}\label{Thm:rank_one_W}
For $P_{\max}>0$ and if \eqref{optimization_problem_sca} is feasible, the rank one constraint of $\mathrm{Rank}(\mathbf{W}_k)\leq1$ in \eqref{optimization_problem_sca} can always be obtained.
\end{Thm}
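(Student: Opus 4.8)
The plan is to establish the tightness of the SDR through the Karush--Kuhn--Tucker (KKT) conditions of the rank-relaxed version of \eqref{optimization_problem_sca}. First I would confirm that, once $\mathrm{C8}$ is dropped, the problem is jointly convex: the objective is concave (it is $N_{\mathrm{obj}}$ minus the affine surrogate of $D_{\mathrm{obj}}$), the $\mathcal{S}$-procedure constraints $\mathrm{\overline{\overline{C3fa}}},\overline{\mathrm{C5b}},\overline{\mathrm{C12a}},\overline{\mathrm{C13a}}$ are LMIs, and the SCA-linearized constraints $\overline{\mathrm{C5a}},\overline{\overline{\mathrm{C9}}},\overline{\overline{\mathrm{C10}}},\overline{\mathrm{C4d}},\overline{\mathrm{C11b}}$ are convex by construction. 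Assuming \eqref{optimization_problem_sca} is feasible and admits a strictly feasible point (so Slater's condition holds), strong duality applies and the KKT conditions are necessary and sufficient for optimality. I would then isolate the stationarity condition with respect to $\mathbf{W}_k$, since this is the only quantity whose rank is at issue.

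Let $\mathbf{D}_k\succeq\mathbf{0}$ be the dual matrix attached to $\mathrm{C7}$ ($\mathbf{W}_k\succeq\mathbf{0}$) and let $\mu\ge0$ be the multiplier of the power budget $\mathrm{C1}$. Collecting the multipliers of the remaining constraints in which $\mathbf{W}_k$ appears, namely $\mathrm{\overline{\overline{C3fa}}}$ (through $\bm{\mathcal{W}}=\mathbf{I}_N\otimes(\sum_i\mathbf{W}_i+\mathbf{Z})$), $\overline{\mathrm{C5a}}$, $\overline{\overline{\mathrm{C9}}}$, and $\overline{\overline{\mathrm{C10}}}$, the condition $\nabla_{\mathbf{W}_k}\mathcal{L}=\mathbf{0}$ can be arranged into the form $\mathbf{D}_k=\mu\mathbf{I}_{M_{\mathrm{t}}}+\mathbf{\Pi}_k-c_k\mathbf{W}_k^{(t)}$, where $\mathbf{\Pi}_k\succeq\mathbf{0}$ gathers the positive semidefinite contributions (the quadratic terms $\|\mathbf{W}_k-\mathbf{\Psi}_{\mathrm{C9},k}\|^2_{\mathrm{F}}$, $\|\mathbf{W}_k+\mathbf{\Psi}_{\mathrm{eve},j}\|^2_{\mathrm{F}}$, the Kronecker-structured LMI term, etc.), and $c_k\ge0$ is the aggregate coefficient of the \emph{only} negative contributions, which all stem from the SCA-linearized terms $-\mathrm{Tr}((\mathbf{W}_k^{(t)})^{\mathrm{H}}\mathbf{W}_k)$ and are therefore scalar multiples of the single fixed matrix $\mathbf{W}_k^{(t)}$. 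Complementary slackness $\mathbf{D}_k\mathbf{W}_k^\star=\mathbf{0}$ then gives $\mathrm{Rank}(\mathbf{W}_k^\star)\le M_{\mathrm{t}}-\mathrm{Rank}(\mathbf{D}_k)$, so it suffices to prove $\mathrm{Rank}(\mathbf{D}_k)\ge M_{\mathrm{t}}-1$.

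To obtain the rank bound I would argue inductively over the SCA iterations (starting from a rank-one initial beamformer) that $\mathbf{W}_k^{(t)}$ is rank one, so that $c_k\mathbf{W}_k^{(t)}$ has rank at most one; the expression $\mathbf{D}_k+c_k\mathbf{W}_k^{(t)}=\mu\mathbf{I}_{M_{\mathrm{t}}}+\mathbf{\Pi}_k$ is then positive definite with full rank $M_{\mathrm{t}}$, and subtracting a rank-one matrix from a full-rank matrix lowers the rank by at most one, yielding $\mathrm{Rank}(\mathbf{D}_k)\ge M_{\mathrm{t}}-1$ and hence $\mathrm{Rank}(\mathbf{W}_k^\star)\le1$ (with $\mathbf{W}_k^\star=\mathbf{0}$ being trivially compliant). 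I expect the main obstacle to be verifying that the dominating part $\mu\mathbf{I}_{M_{\mathrm{t}}}+\mathbf{\Pi}_k$ is \emph{strictly} positive definite rather than merely semidefinite: this is exactly where $P_{\max}>0$ and primal feasibility of \eqref{optimization_problem_sca} enter, typically by ruling out $\mu=0$ (equivalently, showing that allocating power to $\mathbf{W}_k$ strictly improves the objective so the budget is binding). A secondary difficulty is the careful bookkeeping needed to confirm that, after extracting the single rank-one term $c_k\mathbf{W}_k^{(t)}$, every residual contribution to $\mathbf{D}_k$ arising from the coupled Frobenius-norm surrogates and the Kronecker-structured LMI $\mathrm{\overline{\overline{C3fa}}}$ is genuinely positive semidefinite, so that no further rank collapse can occur.
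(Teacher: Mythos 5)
Your overall strategy (drop $\mathrm{C8}$, invoke convexity and Slater's condition, and read the rank of $\mathbf{W}_k^\star$ off the complementary-slackness condition with the dual matrix of $\mathrm{C7}$) is the same as the paper's, but the step you flag as ``secondary bookkeeping'' is in fact the central obstruction, and as stated your decomposition $\mathbf{D}_k=\mu\mathbf{I}_{M_{\mathrm{t}}}+\mathbf{\Pi}_k-c_k\mathbf{W}_k^{(t)}$ with $\mathbf{\Pi}_k\succeq\mathbf{0}$ does not hold. If you differentiate $\overline{\overline{\mathrm{C9}}}$, $\overline{\mathrm{C5a}}$, and $\overline{\overline{\mathrm{C10}}}$ directly, the Frobenius-norm couplings contribute gradients such as $\mathbf{W}_k^\star-\mathbf{\Psi}_{\mathrm{C9},k}^\star$ (from $\tfrac12\|\mathbf{W}_k-\mathbf{\Psi}_{\mathrm{C9},k}\|_{\mathrm{F}}^2$), which are indefinite Hermitian matrices: $\mathbf{\Psi}_{\mathrm{C9},k}$ is a PSD-like quantity bounded by $\mathbf{F}_k\mathbf{V}\mathbf{F}_k^{\mathrm{H}}$, so $-\lambda\,\mathbf{\Psi}_{\mathrm{C9},k}^\star$ is a negative contribution that is neither PSD nor a scalar multiple of the fixed matrix $\mathbf{W}_k^{(t)}$. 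Worse, these gradient terms depend on the unknown primal optimum $(\mathbf{W}_k^\star,\mathbf{\Psi}^\star)$, so your matrix $\mathbf{\Pi}_k$ is not a fixed object you can certify as PSD a priori, and the bound $\mathrm{Rank}(\mathbf{D}_k)\ge M_{\mathrm{t}}-1$ collapses. This is exactly why the paper first rewrites the relaxed problem in the equivalent form \eqref{optimization_problem_proof}: it introduces auxiliary matrices $\mathbf{C}_{\overline{\mathrm{C5a}},k,j}$, $\mathbf{C}_{\overline{\overline{\mathrm{C9}}},k}$, $\mathbf{C}_{\overline{\overline{\mathrm{C10}}},i,k}$ and LMI constraints $\mathrm{C14}$--$\mathrm{C16}$ so that $\mathbf{W}_k$ enters every constraint \emph{linearly}; the stationarity condition \eqref{K3_rewritten_modify} then involves only PSD multiplier matrices with controlled signs and scalar multiples of the fixed $\mathbf{W}_k^{(t)}$. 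Without this (or an equivalent) reformulation, your proof cannot be completed.

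Two further differences are worth noting. First, the paper does not need your induction that $\mathbf{W}_k^{(t)}$ is rank one: instead of a rank-difference bound $\mathrm{Rank}(\mathbf{A}-\mathbf{B})\ge\mathrm{Rank}(\mathbf{A})-\mathrm{Rank}(\mathbf{B})$, it performs a case analysis on $\lambda_{\max}(\mathbf{\Upsilon}_k^\ast)$ versus $\delta_{\mathrm{C1},k}^\ast$ and argues that a bounded dual solution forces the null space of $\mathbf{M}_{\mathrm{C7},k}^\ast=\delta_{\mathrm{C1},k}^\ast\mathbf{I}_{M_{\mathrm{t}}}-\mathbf{\Upsilon}_k^\ast$ to be spanned by the single eigenvector $\mathbf{r}_{\max}$, which works for arbitrary Hermitian $\mathbf{\Upsilon}_k^\ast$ and arbitrary (not necessarily rank-one) $\mathbf{W}_k^{(t)}$; your route ties the theorem to a rank-one initialization that is not part of the statement. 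Second, your identification of $P_{\max}>0$ with ruling out $\mu=0$ matches the paper's argument that $\delta_{\mathrm{C1},k}^\ast=0$ would make the dual problem unbounded, so that part of your plan is sound.
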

\begin{proof}
Please refer to Appendix \ref{Proof:rankoneW} for the proof of Theorem \ref{Thm:rank_one_W}.
\end{proof}

Due to the use of SCA, solving the problem in \eqref{optimization_problem_sca} provides a lower bound for the problem in \eqref{proposed_formulation_origion}. To tighten the obtained performance lower bound, we iteratively update the feasible solution by solving the optimization problem in \eqref{optimization_problem_sca} in the $t$-th iteration. The proposed SCA-based algorithm is shown in \textbf{Algorithm \ref{alg}} and the proof of its convergence to a suboptimal solution can be found in \cite{opial1967weak} which is omitted here for brevity.
\begin{table*}[t]
\scriptsize
\centering
\caption{Computational complexity comparison between different algorithms}\vspace{-0mm}
\label{tab:CC_R}
\begin{tabular}{|m{3cm}|m{10cm}|}
\hline
\textbf{Algorithm }             & \textbf{Computational Complexity}\cite{polik2010interior}   \\ \hline
Exhaustive search&
\begin{align}
\mathcal{O}\Big(2^NB^N\frac{P_{\max}}{\Delta}^{M_t^2K+M_t^2}\Big)\label{eq:BigO_ES}\\[-7mm]\notag
\end{align}\\
\hline
AO algorithm &
\begin{align}
\mathcal{O}\Bigg(\!\!\Big(\!N_{\mathcal{O},\mathrm{AO1}}M_{\mathcal{O},\mathrm{AO1}}^3
\!+\!M_{\mathcal{O},\mathrm{AO1}}^2N_{\mathcal{O},\mathrm{AO1}}^2\!+\!N_{\mathcal{O},\mathrm{AO1}}^3\!\Big)\label{eq:BigO_AO}\\[-2mm]
\!\times t_{\max}^{\mathrm{AO1}}\sqrt{M_{\mathcal{O},\mathrm{AO1}}}\log\frac{1}{\varrho_1}\!+\!N_{\mathcal{O},\mathrm{AO2}}^3 M_{\mathcal{O},\mathrm{AO2}}^2\times t_{\max}^{\mathrm{AO2}}\log\frac{1}{\varrho_2}\Bigg)\notag\\[-7mm]\notag
\end{align}\\
\hline
The proposed algorithm &
\begin{align}
\mathcal{O}\Bigg(\!\!\Big(\!N_{\mathcal{O},\mathrm{P}}M_{\mathcal{O},\mathrm{P}}^3\!+\!
M_{\mathcal{O},\mathrm{P}}^2N_{\mathcal{O},\mathrm{P}}^2\!+\!N_{\mathcal{O},\mathrm{P}}^3\!\Big)
\!\!\times\!\!\sqrt{M_{\mathcal{O},\mathrm{P}}}\log\frac{1}{\varrho}\!\!\Bigg)\label{eq:BigO_PS}\\[-7mm]\notag
\end{align}\\
\hline
\end{tabular}
\end{table*}

In practice, several methods \cite{boyd2004convex} can be used to solve the convex problem in \eqref{proposed_formulation_origion}. Table \ref{tab:CC_R} shows the computational complexity and performance comparison of the exhaustive search method, the AO approach, and the proposed algorithm of the optimization problem in \eqref{proposed_formulation_origion}. Since the precoder adopted at the AP, $\mathbf{w}_k, \forall k$, and artificial noise vector at the AP, $\mathbf{z}$, are continuous variables, to solve the problem in \eqref{proposed_formulation_origion}, the exhaustive search is performed by a grid search with a grid resolution of $\Delta>0$. In particular, the exhaustive search enumerates all the variable combinations within the feasible set of problem in \eqref{proposed_formulation_origion} and then returns the variable combination with the maximum system sum-rate. As such, the computational complexity of the exhaustive search method for the problem in \eqref{proposed_formulation_origion} is given by \eqref{eq:BigO_ES} in Table \ref{tab:CC_R}, where $\mathcal{O}$ is the big-O notation.
 Although the exhaustive search can obtain an optimal solution of the problem in \eqref{proposed_formulation_origion}, it is computationally intractable even for a moderate size system. In the following, we analyze the computational complexity of the AO algorithm and the proposed algorithm in details.\\
 On the other hand, the computational complexity of each iteration of the conventional AO algorithm to solve problem \eqref{proposed_formulation_origion} is given by \eqref{eq:BigO_AO} in Table \ref{tab:CC_R} \cite{polik2010interior}, where
 \begin{align}
M_{\mathcal{O},\mathrm{AO1}}=&KJ+7K+2J+2\text{ and } \\ N_{\mathcal{O},\mathrm{AO1}}=&(3K+J+1)M_t^2+3K+2J\notag
\end{align}
are the number of inequalities and the number of variables (dominated terms) of subproblem 1 of the AO algorithm, respectively. Besides,
\begin{align}
M_{\mathcal{O},\mathrm{AO2}}\hspace{-1mm}=&3\hspace{-1mm}+\hspace{-1mm}6N\hspace{-1mm}+\hspace{-1mm}4BN\hspace{-1mm}+\hspace{-1mm}2K\hspace{-1mm}+\hspace{-1mm}KJ\text{ and }\\
N_{\mathcal{O},\mathrm{AO2}}\hspace{-1mm}=&N^2\hspace{-1mm}+\hspace{-1mm}4N\hspace{-1mm}+\hspace{-1mm}2\hspace{-1mm}+\hspace{-1mm}BN\hspace{-1mm}+\hspace{-1mm}4K\hspace{-1mm}+\hspace{-1mm}J\hspace{-1mm}
+\hspace{-1mm}(2K\hspace{-1mm}+\hspace{-1mm}J)\hspace{-0mm}M_t^2\notag
\end{align} are the number of inequalities and the number of variables of subproblem 2 of \eqref{eq:BigO_ES} the AO algorithm, respectively. Note that $\varrho_1$ and $\varrho_2$ is the threshold of convergence tolerance of the subproblems $1$ and $2$ of the AO algorithm, respectively. Besides, $t_{\max}^{\mathrm{AO1}}$ and $t_{\max}^{\mathrm{AO2}}$ are the maximum iteration times of the subproblems $1$ and $2$ of the AO algorithm, respectively.
Moreover, the computational complexity of each iteration of the proposed algorithm is
given by \eqref{eq:BigO_PS} in Table \ref{tab:CC_R} \cite{polik2010interior},
where $M_{\mathcal{O},\mathrm{P}}=5+11N+2BN+KJ+2J+7K$ and $N_{\mathcal{O},\mathrm{P}}= (3K+J+1)M_t^2+7N+BN+N^2+4K+2$ are the number of inequalities and the number of variables of the proposed scheme, respectively. Besides, $\varrho$ is the threshold of convergence tolerance of the proposed algorithm. {Note that the computational complexity of the proposed algorithm and the AO method both enjoy a polynomial time computational complexity which is known as fast algorithms for implementation \cite{cormen2009introduction} and the conclusion of their complexity comparison depends on the adopted parameters. It is worth to mention that the performance of the proposed algorithm outperforms that of the AO algorithm, which will be shown in the following simulations. Indeed, rather than separately and alternatively optimizing variables in each subproblem, the proposed algorithm can jointly optimize all optimization variables in every iteration for a better exploitation of the problem structure.}

\section{Simulation Results}
\subsection{Simulation Setup}
\begin{figure}[t]
  \centering
  \includegraphics[width=3.4in]{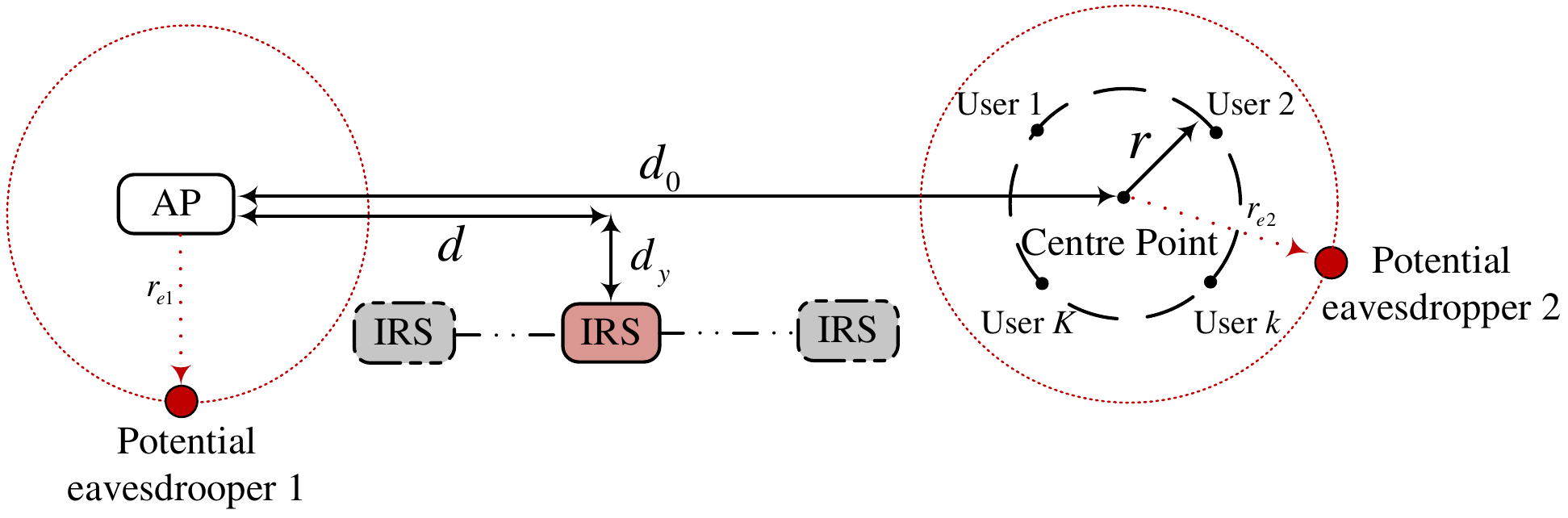}
  \caption{Simulation setup.}
  \label{distance_model} 
\end{figure}
This section evaluates the system performance of the proposed self-sustainable IRS scheme via simulation. The system setup is shown in Fig. \ref{distance_model}. The $K$ users and $J-1$ the potential eavesdroppers are randomly distributed on the circumference of two concentric circles with radii of $r = 1$ m and $r_{\mathrm{e2}} = 20$ m, respectively. The other potential eavesdropper is located on the circumference of a circle centered at the AP with a radii of $r_{\mathrm{e1}} = 80$ m. The AP and the center point of the circle formed by users are located $d_0 = 60$ m apart. Also, the IRS lies on a horizontal line that is in parallel to the one that connects the AP and the centre point of the users formed concentric circle. The vertical distance between these two lines is  $d_y = 1$ m and the horizontal distance between the AP and the IRS is denoted by $d$. The distance-dependent path loss model\cite{goldsmith2005wireless} is adopted with $10$ m reference distance.
For ease of presentation, we define $\kappa_{\mathrm{cu},k}, \kappa_{\mathrm{G}}, \kappa_{\mathrm{d},k}, \kappa_{\mathrm{ce},j}$, and $\kappa_{\mathrm{ed},j}$ as the maximum normalized estimation error for the channel $\mathbf{G}_{\mathrm{cu},k}, \mathbf{G}, \mathbf{h}_{\mathrm{d},k}, \mathbf{G}_{\mathrm{ce},j}$, and $\mathbf{h}_{\mathrm{ed},j}$, i.e., $\kappa_{\mathrm{cu},k}=\frac{\rho_{\mathrm{cu},k}}{\|\mathbf{G}_{\mathrm{cu},k}\|_{\mathrm{F}}}, \kappa_{\mathrm{G}}=\frac{\rho_{\mathrm{G}}}{\|\mathbf{G}\|_{\mathrm{F}}}, \kappa_{\mathrm{d},k}=\frac{\rho_{\mathrm{d},k}}{\|\mathbf{h}_{\mathrm{d},k}\|_{2}}, \kappa_{\mathrm{ce},j}=\frac{\rho_{\mathrm{ce},j}}{\|\mathbf{G}_{\mathrm{ce},j}\|_{\mathrm{F}}}$, and $\kappa_{\mathrm{ed},j}=\frac{\rho_{\mathrm{ed},j}}{\|\mathbf{h}_{\mathrm{ed},j}\|_{2}}$, respectively. Unless further specified, we fix $\kappa=\kappa_{\mathrm{cu},k}=\kappa_{\mathrm{G}}=\kappa_{\mathrm{d},k}=\kappa_{\mathrm{ce},j}=\kappa_{\mathrm{ed},j}$ and $\kappa^2 = 0.1$, which is the normalized mean square error based on channel estimation method in \cite{hu2019two}. The maximum of tolerable channel capacity of potential eavesdroppers is $\tau =\tau_{k,j}=1.5$ bits/s/Hz. Other important parameters are summarized in Table \ref{Table:system_para}.
\begin{table*}[]
\caption{System parameters}
\label{Table:system_para}
\scriptsize
\center
\begin{tabular}{|m{8cm}|m{5cm}|}
\hline
Antenna gains at the AP, the IRS and receivers & $20$ dBi, $0$ dBi, and $0$ dBi \\ \hline
Systen carrier centre frequency & $2.4$ GHz \\ \hline
Number of antennas at AP $M$ &  $10$\\ \hline
Number of the IRS elements $N$ &  $50$\\ \hline
Number of users $K$ and the number of eavesdroppers $J$ & $K=2$ and $J=2$ \\ \hline
\hspace{-2mm}\begin{tabular}{l}Path loss exponents of AP-user links, AP-Eve links,\\ AP-IRS link, IRS-user links, and IRS-Eve links\end{tabular} & \hspace{-2mm}\begin{tabular}{l}$\alpha_{\mathrm{AU}}=\alpha_{\mathrm{AE}}=3.6$\cite{mardeni2010optimised},\\$
\alpha_{\mathrm{AI}}=\alpha_{\mathrm{IU}}=\alpha_{\mathrm{IE}}=2.2$ \end{tabular}\\ \hline
\hspace{-2mm}\begin{tabular}{l}Rician factors of of AP-user links, AP-Eve links, \\ AP-IRS link, IRS-user links, and IRS-Eve links \end{tabular} &  \hspace{-2mm}\begin{tabular}{l}$
\beta_{\mathrm{AU}}=\beta_{\mathrm{AE}}=0$,\\$\beta_{\mathrm{AI}}=\beta_{\mathrm{IU}}=\beta_{\mathrm{IE}}=3$ \end{tabular}\\ \hline
Non-linear energy harvesting circuit parameters &\hspace{-2mm}\begin{tabular}{l} $M_{\mathrm{p}}=80$ mW, $a = 1\hspace{-0.5mm},\hspace{-0.5mm}500$, \\and $q = 0.0022$ \cite{guo2012improved} \end{tabular}\\ \hline
Maximum power budget at the AP $P_{\max}$ &  $P_{\max} = 38$ dBm\\ \hline
Noise power at users and eavesdroppers & $\sigma_{k}^2=\sigma_{\mathrm{eve},j}^2=-90$ dBm \\ \hline
Power consumption consumed by power conversion circuit $P_\mathrm{c}$ & $P_\mathrm{c} = 2.1$ $\mu W$\cite{yao2008fully} \\ \hline
Phase shifter bit resolution of IRS elements $b$& $b=3$ bits \\ \hline
Power consumption of each IRS reflection element $P_{\mathrm{IRS}}(b)$ & $P_{\mathrm{IRS}}(b)=1.5$ mW for $b=3$ \cite{huang2019reconfigurable} \\ \hline
The maximum of tolerable channel capacity of eavesdroppers & $\tau =\tau_{k,j}=1.5$ bits/s/Hz \\ \hline
\end{tabular}
\end{table*}

For comparison, we evaluate the system performance of five baseline schemes: 1) ``Upper bound $1$" has the same setting as the considered model expect that all IRS elements in this scheme adopt continuous phase shifters but without consuming any energy; 2) ``Upper bound $2$" is achieved by the IRS-assisted system with an energy harvesting IRS but without potential eavesdroppers; 3) ``Baseline scheme $1$" is a non-robust design for the case when the IRS is not deployed and the maximum ratio transmission (MRT) with respect to users is adopted as the precoder at the AP. In particular, this scheme treats the imperfect CSI as the perfect one and the direction of the precoder for the $k$-th user is fixed to $\frac{\mathbf{h}_{\mathrm{d},k}^{\mathrm{H}}}{\|\mathbf{h}_{\mathrm{d},k}\|}$. Then, we optimize the power allocation for each user subject to constraint C1 in problem \eqref{proposed_formulation_origion} for the maximization of system sum-rate; 4) ``Baseline scheme $2$" is the system with a self-sustainable IRS adopting the same MRT precoder at the AP as in baseline scheme $1$ for non-robust transmission. Yet, its phase shifts of the IRS and the power allocation at the AP are jointly optimized by applying the method developed by the proposed scheme; 5) ``Baseline scheme $3$" is the same as the proposed scheme expect that it treats the estimated channel information as perfect channel information; 6) ``Baseline scheme $4$" focuses on the same setting as the proposed scheme except that all IRS elements in the former adopt random phase shifts but without consuming any energy; 7) ``Baseline scheme 5'' is the same as the proposed scheme expect that it does not take the existence of eavesdroppers into account. In other words, there is no communication security guaranteed; 8) ``AO scheme'' is the conventional algorithm which has been widely used in the literature \cite{lyu2020optimized,zou2020wireless,abeywickrama2020intelligent,wu2019intelligent}. It has the same setting as the considered model except that AO is applied which splits the original problem into two sub-problems and optimizes the grouped variables $\{\mathbf{w}_k$, $\mathbf{Z}\}$ and $\{\theta_n,$ $\alpha_n\}$ in the subproblems, respectively, via an alternating manner. The resource allocation for the eight baselines scheme can be obtained by modifying the proposed scheme accordingly. Note that for all the schemes, if the joint design of precoder and phase shifts of IRS are unable to meet the requirements of constraints in \eqref{proposed_formulation_origion}, we set the system sum-rate for that channel realization to zero to account the penalty for the corresponding failure.
\subsection{Average System Sum-Rate versus the Distance Between AP and IRS}
\begin{figure}[t] 
\centering
  \includegraphics[width=\linewidth]{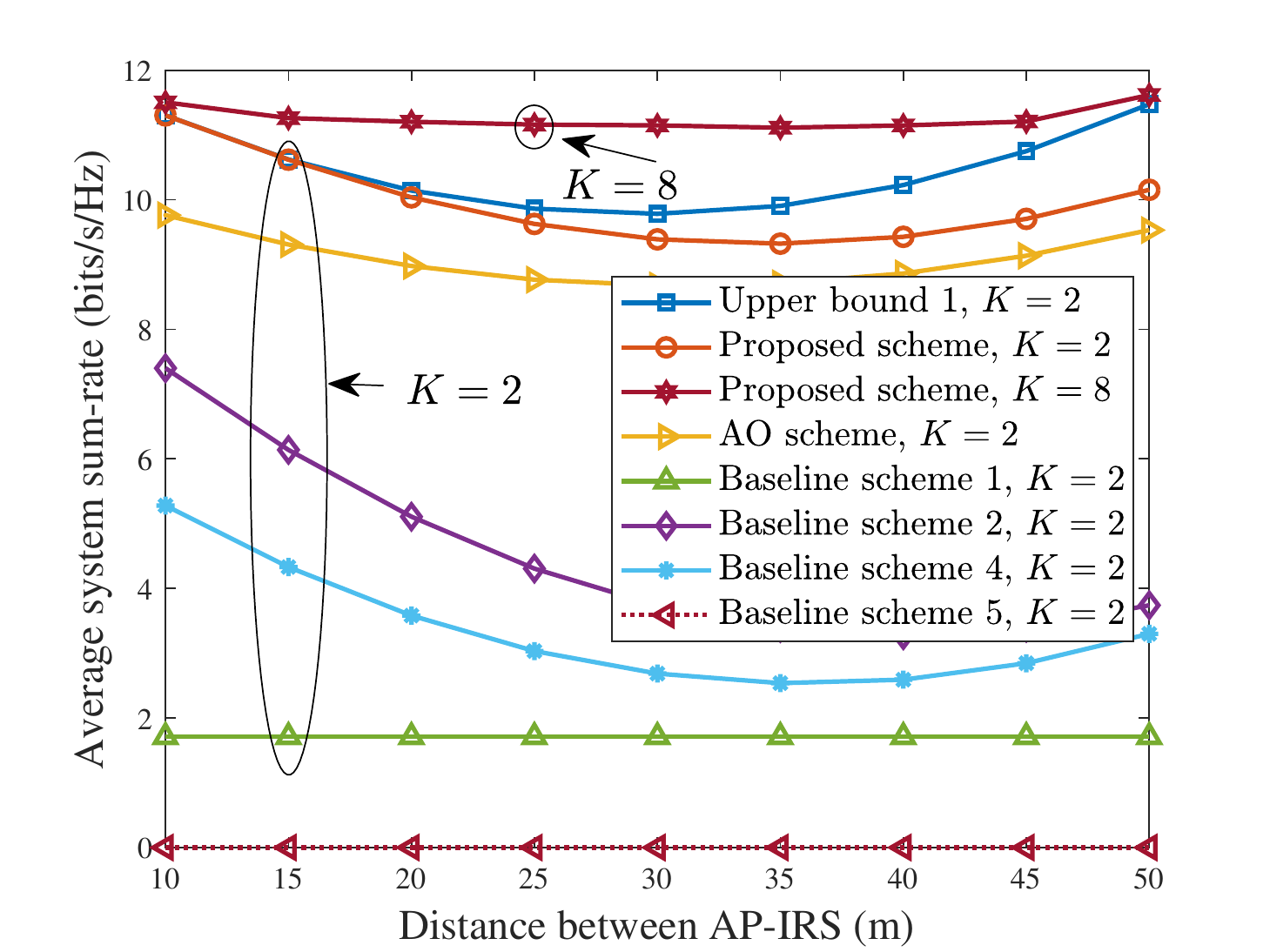}
  \caption{Average system sum-rate (bits/s/Hz) versus the distance between the AP and the IRS (m) for $J=2$.}
  \label{results:distanceSR} 
\end{figure}

\begin{figure}[t] 
\centering
  \includegraphics[width=\linewidth]{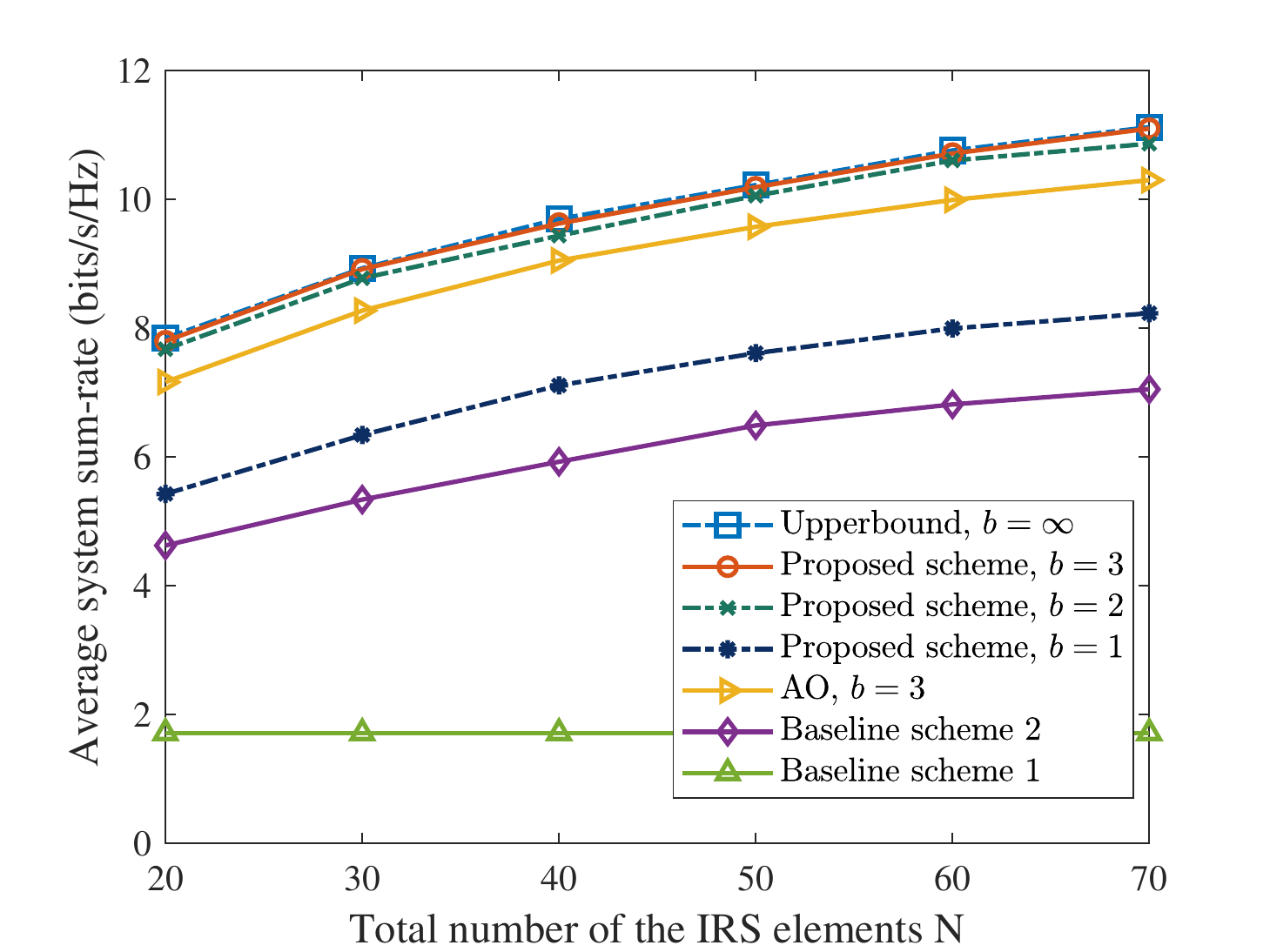}
  \caption{Average system sum-rate (bits/s/Hz) versus the number of IRS elements $N$ for $P_{\max}=36$ dBm.}
  \label{results:numN} 
\end{figure}

Fig. \ref{results:distanceSR} depicts the average system sum-rate versus the horizontal distance between the AP and the IRS for different schemes.
It can be observed that the proposed scheme can achieve a substantially higher sum-rate than that of baseline scheme $1$, which the IRS is not available. Indeed, the IRS provides an additional path gain, $\mathbf{h}_{\mathrm{r},k}^{\mathrm{H}}\mathbf{\Theta}\mathbf{G},\forall k$, which can be exploited and optimized by the proposed scheme to improve the system performance. Besides, due to the joint optimization of the transmit beamforming and phase shifts, the proposed scheme can achieve a considerable performance gain compared with baseline scheme $2$ adopting a fixed beamforming direction.
In addition, we note that baseline scheme $4$ has unsatisfactory performance compared with the proposed scheme. Indeed, since baseline scheme $4$ adopts random phase shifts, the information beams reflected by the IRS do not always align with the direction of the desired channels such that the additional path gain cannot be fully exploited. Furthermore, since baseline scheme 5 does not take the existence of eavesdroppers into account while optimizing the precoder and IRS phase shifts, there is no power allocated to the artificial noise vector, $\mathbf{z}$, which generally results in a high channel capacity at the potential eavesdroppers. Therefore, the resulting joint design of baseline scheme 5 is unable to fulfill the security QoS requirement, i.e., constraint C5 of the problem in \eqref{proposed_formulation_origion}, which leads to an unsatisfactory performance. Moreover, the proposed scheme outperforms the AO scheme. Indeed, AO splits the original problem in \eqref{proposed_formulation_origion} into two subproblems and optimizes the grouped  variables $\{\mathbf{w}_k$, $\mathbf{Z}\}$ and $\{\theta_n,$ $\alpha_n\}$ in the two subproblems, respectively, via an alternating manner. In particular, it is well-known that the AO method can be easily trapped in inefficient stationary points over iterations in practice, which leads to possible severe performance degradation. In contrast, our proposed algorithm is able to jointly optimize all these optimization variables in each iteration, which can better exploit the structure of the design problem resulting in a higher sum-rate than the conventional AO approach. Furthermore, since introducing more users to the system offers more multi-user diversity \cite{tse2005fundamentals,wei2017optimal} for the proposed resource allocation design to exploit, the system sum-rate of the case $K=8$ and $J=2$ is higher than that of $K=2$ and $J=2$.\\
On the other hand, for all schemes with IRS deployed except baseline scheme 5, the average system sum-rate is at its lowest when the IRS is close to the middle between the AP and the center point of users. In fact, when the IRS is neither close to the AP nor the users, both the AP-IRS path and the IRS-users paths would experience significant attenuations that decrease the capability of the IRS in focusing the reflected signals to the desired users. Besides, we can also observe from Fig. \ref{results:distanceSR} that when the IRS is in close proximity to the AP, the performance of the proposed scheme approaches that of the upper bound. In contrast, as the distance $d$ further increases, the sum-rate gap between the proposed scheme and the upper bound is slightly enlarged. This is because as $d$ increases, each IRS element would harvest less power on average. As can be expected from constraint ${\mathrm{C}3}$ in \eqref{proposed_formulation_origion}, more IRS elements are switched to the power harvesting mode to maintain the sustainability of the IRS resulting in a less number of IRS elements for improving the system sum-rate via signal reflection.
\subsection{Average System Sum-Rate versus the Number of IRS Elements}
\begin{figure}[t]
    \centering
     \begin{subfigure}[b]{0.485\textwidth}
      \centering
      \includegraphics[width=\linewidth]{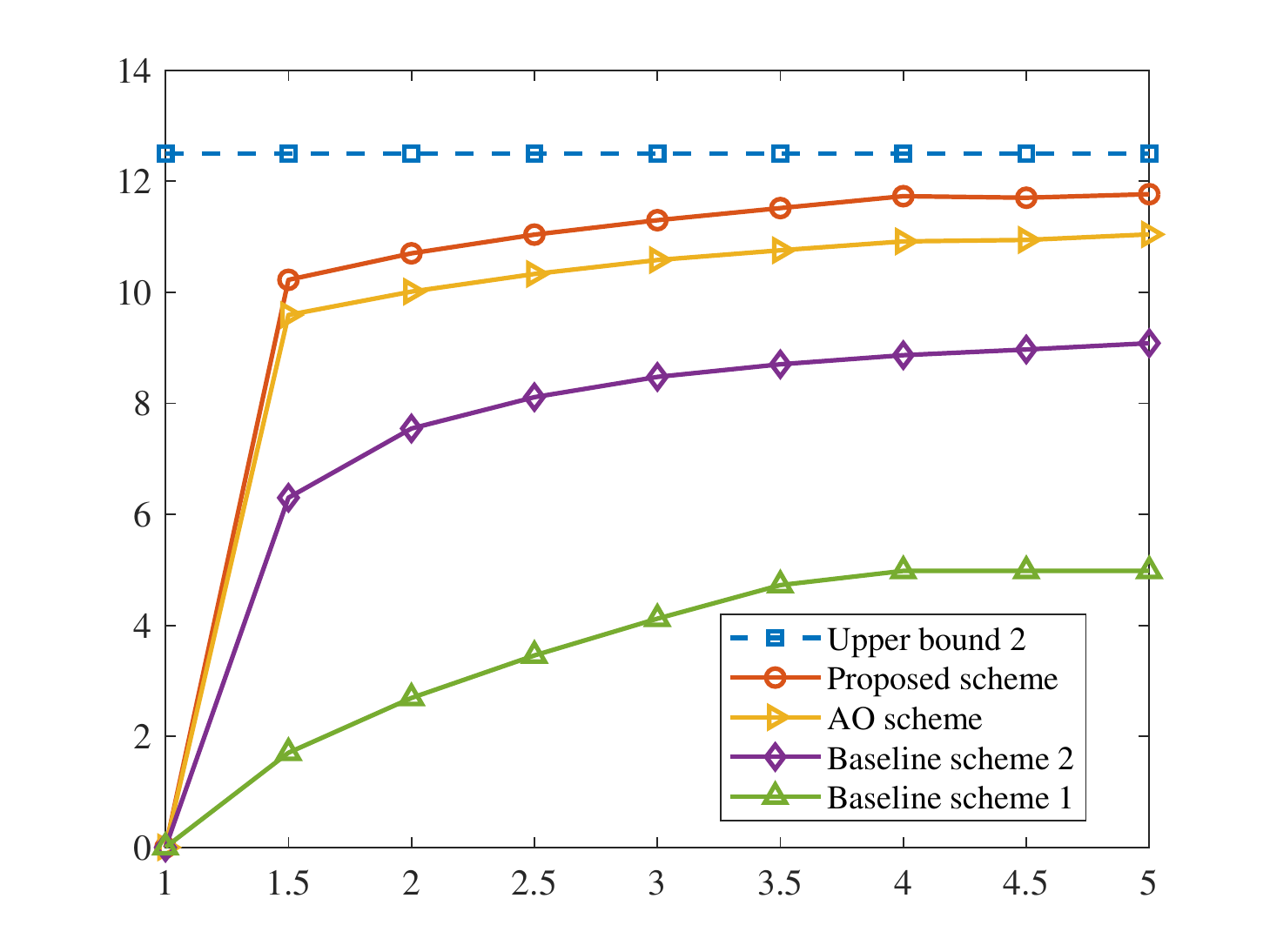}
      \caption{\scriptsize Average system sum-rate (bits/s/Hz) versus the maximum tolerable channel capacity of the potential eavesdroppers (bits/s/Hz).}
      \label{Result:RCVsTau}
    \end{subfigure}\hfill
    \begin{subfigure}[b]{0.485\textwidth}
      \centering
      \includegraphics[width=\linewidth]{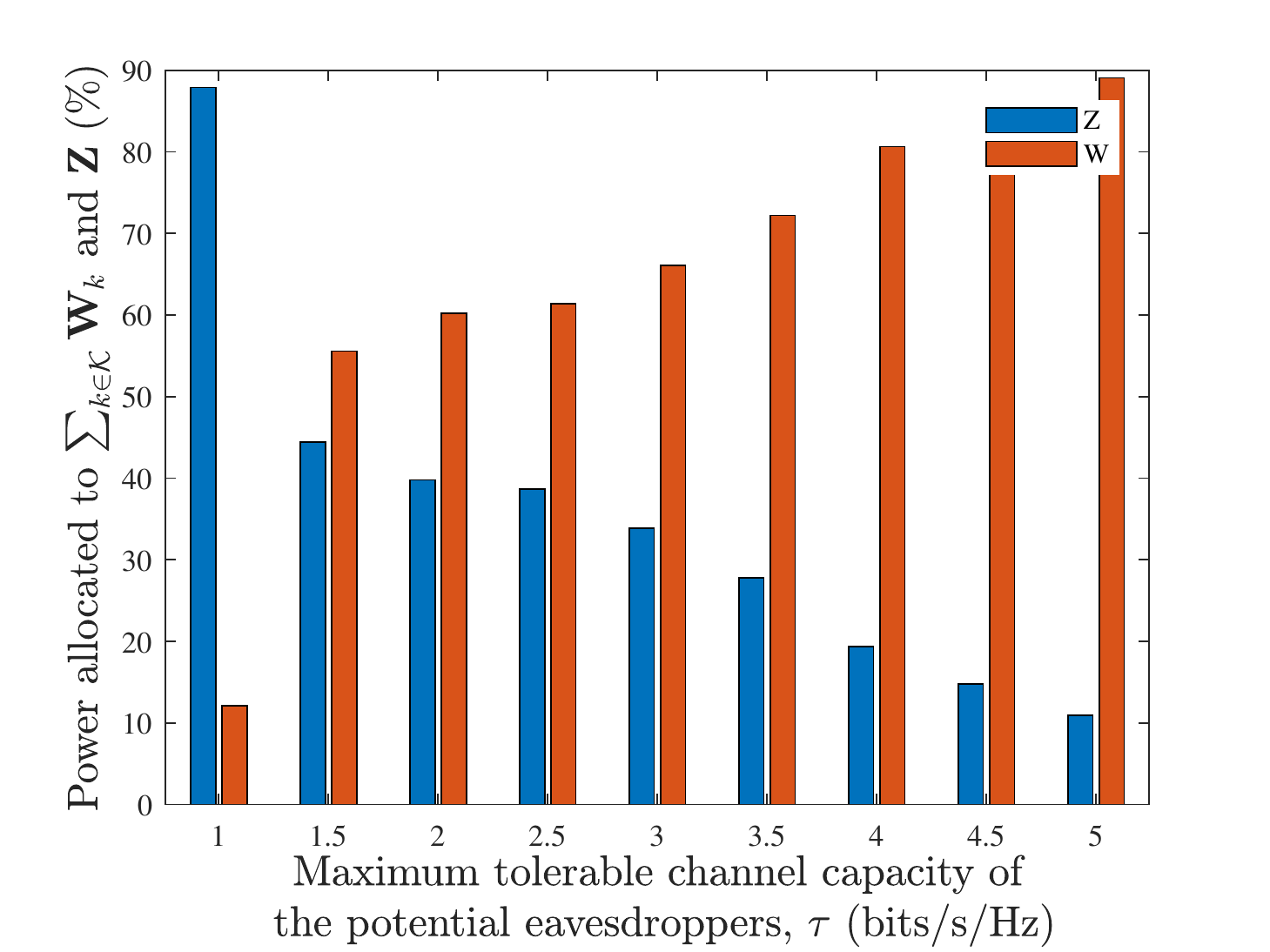}
      \caption{\scriptsize Power allocated to information beamforming and AN of the proposed scheme versus the maximum tolerable channel capacity of the potential eavesdroppers (bits/s/Hz).}
      \label{Result:PrecentageVsTau}
    \end{subfigure}
    \caption{The system parameters are set as $N=20, J=2, K=2, d=15$ m, $\kappa^2=0.1$, and $P_{\max} = 36$ dBm.}
\end{figure}
Fig. \ref{results:numN} shows the variation of average system sum-rate with different numbers of IRS elements $N$ at $d = 15$ m. It can be observed that with an increasing number of IRS elements $N$, the average system sum-rate of the proposed scheme increases. 
Indeed, the extra spatial degrees of freedom offered by the increased number of reflecting IRS elements provides a high flexibility in beamforming to enhance the channel quality of the end-to-end AP-IRS-user link for improving the system sum-rate. More interestingly, the system sum-rate is gradually saturated as the number of IRS elements $N$ increases. In practice, the number of activated IRS elements for reflection is constrained by both the total amount of energy harvested from the AP and the hardware limitation of the energy harvesting circuit. Although the increase of IRS elements can collect more energy at the input of the energy harvesting circuit, an exceeding large power can saturate the circuit due to its non-linearity. As such, the harvestable energy reaches its maximum which is unable to further support more activated IRS reflection elements. Also, Fig. \ref{results:numN} compares the performance of the proposed scheme with different bit resolutions, $b$, of each IRS phase shifter. Note that the upper bound in Fig. \ref{results:numN} is the previously mentioned upper bound scheme $1$ but with $b=\infty$. It can be observed that the performance of the IRS phase shifts with a bit resolution of $b = 2,3$ bits can approach that of the upper bound. In particular, increasing the bit resolution above $2$ bits would only provide a marginal improvement in system sum-rate. In fact, the IRS-user links are dominated by LoS components in Rician fading channels. Thus, a small bit resolution of phase shifts is sufficient to facilitate the beamformer aligning the desired signals with the dominant channels.
\subsection{Sum-Rate Versus the Maximum Tolerable Channel Capacity of the Eavesdroppers}
Fig. \ref{Result:RCVsTau} depicts the average system sum-rate versus the maximum tolerable channel capacity of the potential eavesdroppers, $\tau_{k,j}$. As can be observed, thanks to the optimization of the transmit beanformer at the AP and the phase shifts at the IRS, the average system sum-rate of the proposed scheme outperforms that of the both baseline schemes $1$ and $2$. Moreover, the performance gap between the proposed scheme and upper bound $2$ is generally small for $\tau>1.5$, which confirms the effectiveness of the proposed scheme to strike a balance between system sum-rate and security provisioning. Furthermore, the system sum-rate grows as $\tau$ increases. Indeed, as shown in Fig. \ref{Result:PrecentageVsTau}, when $\tau$ is small, a large portion of the transmit power is allocated to transmitting AN to deteriorate the capacity of the potential eavesdroppers. As a result, less power is allocated to maximize the system sum-rate. As for large $\tau$, the information leakage constraints are less stringent, thus more power can be assigned to the precoder $\mathbf{W}_k$, which can effectively exploit the DoF brought by the AP antennas and IRS elements to create powerful and sharp beamforming for improving the system sum-rate.
\subsection{CSI Uncertainties}
\begin{figure}[t] 
  \centering
  \includegraphics[width=1\linewidth]{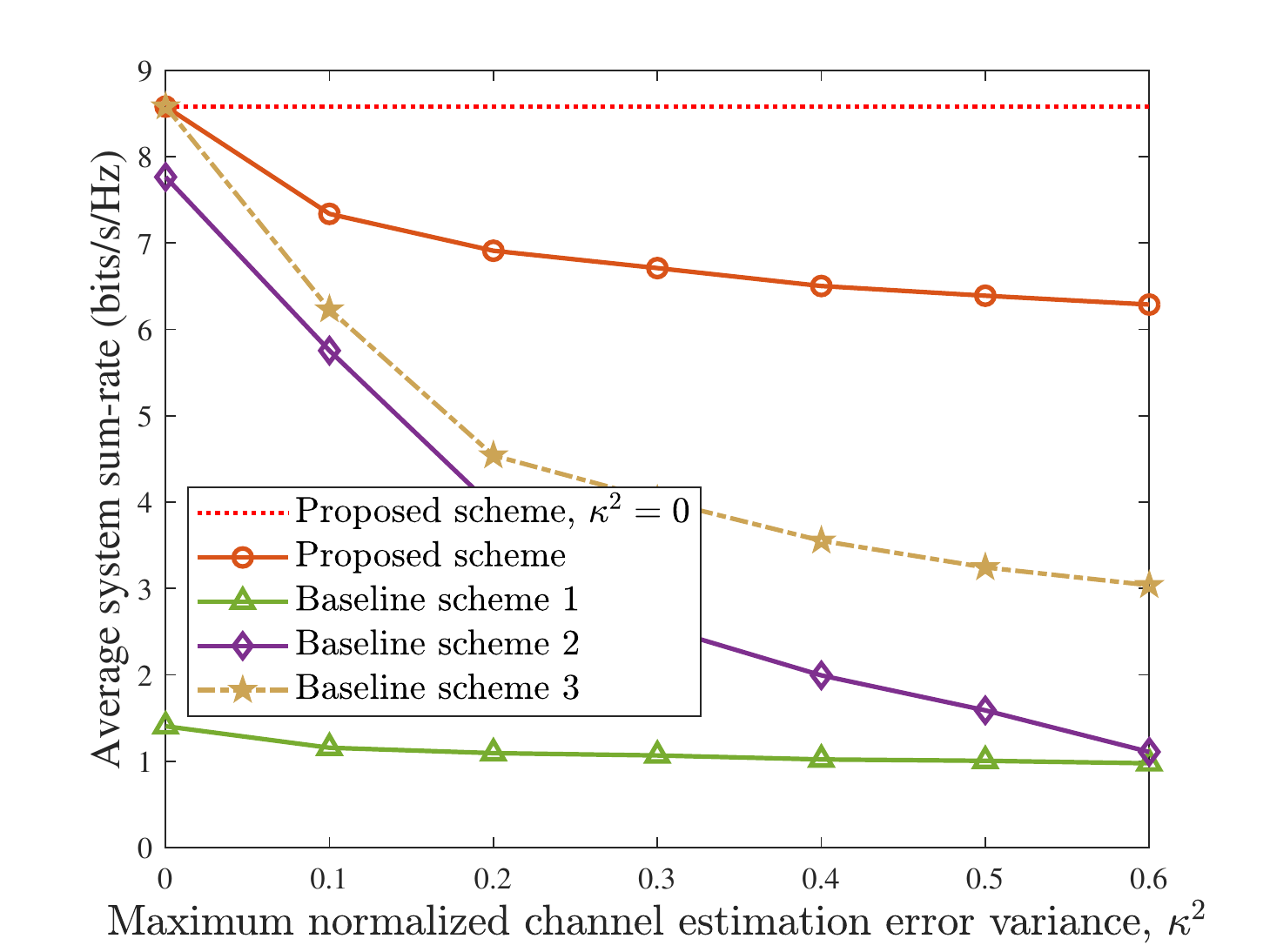}
  \caption{Average system sum-rate (bits/s/Hz) versus the maximum normalized channel estimation error variance, $\kappa^2$.}
  \label{Result:RsVserror} 
\end{figure}

Fig. \ref{Result:RsVserror} shows the average system sum-rate against different maximum normalized channel estimation error variance $\kappa^2$ when $d=10$ m and $P_{\max} = 30$ dBm. As can be observed, for all the schemes, the average sum-rates decline as the quality of CSI is degraded. This is because with the deteriorated CSI quality, both the AP and the IRS become less capable in creating energy-focused information carrying beams
and more power is allocated to AN jamming to satisfy the secrecy constraint. On the other hand, since the proposed scheme can jointly optimize the IRS phase shifters and precoders, which exploits the spatial DoF more efficiently than that of baseline schemes $2$ and $3$, the performance of the proposed scheme significantly outperforms all baseline schemes over the entire range of the different channel error estimation variances, which confirms the robustness against the CSI imperfectness. Moreover, although the performance of baseline scheme 3 is as good as the proposed scheme at $\kappa^2=0$,  its sum-rate decreases rapidly as the value of $\kappa^2$ increases. Since baseline 3 treats imperfect CSI as the prefect one for beamforming design, the resource allocation mismatches are magnified and it is unable to fulfill the QoS requirement leading to unsatisfactory performance.
\section{Conclusion}
In this paper, we formulated the resource allocation algorithm design for robust and secure multiuser MISO downlink systems with self-sustainable IRS as a non-convex optimization problem. To maximize the system sum-rate, the transmit beamformers, AN covariance matrix, IRS phase shifts, and the power harvesting schedule were jointly optimized while satisfying secrecy constraints for the potential eavesdroppers and taking into account the imperfection of  CSI of all channels. The proposed algorithm can achieve an effective solution to the formulated non-convex design problem. Simulation results demonstrated that the proposed scheme offers significant performance gain compared with the conventional MISO systems without IRS. Moreover, our results also unveiled the non-trivial trade-off between achieving IRS self-sustainability and the system sum-rate. In particular, the system sum-rate of a IRS-assisted system is gradually saturated as the number of energy harvesting IRS elements increased. Additionally,  we confirmed that a small bit resolution of the IRS phase shifters can achieve a considerable average system sum-rate of the ideal case with continuous phase shifters. Lastly, our results verified the potential of IRSs to improve the physical layer security and confirmed that the robustness of the proposed scheme against the CSI imperfectness.
\appendices
\section{Proof of Theorem \ref{Thm:rank_one_W}}\label{Proof:rankoneW}
To facilicate the proof, we first introduce three slack optimization variables, i.e.  $\mathbf{C}_{\overline{\mathrm{C5a}},k,j}$, $\mathbf{C}_{\overline{\overline{\mathrm{C9}}},k}$, and $\mathbf{C}_{\overline{\overline{\mathrm{C10}}},i,k}$, and three constraints, i.e. $\mathrm{C14}$, $\mathrm{C15}$, and $\mathrm{C16}$. Then, the rank constraint relaxed version of problem \eqref{optimization_problem_sca} can be transformed into its equivalent form, as shown in \eqref{optimization_problem_proof} at the top of next page.
\begin{figure*}
\begin{align}
&\underset{\substack{\mathbf{W}_k\in\mathbb{H}^{M_{\mathrm{t}}}, \mathbf{Z}\in\mathbb{H}^{M_{\mathrm{t}}},\, \mathbf{S}, \mathbf{Y}\in\mathbb{H}^{N+2},\mathbf{V}\in\mathbb{H}^{N+1},\mathbf{v},  \xi_k,\iota_k,\beta_{\mathrm{PR}},\,\mathcal{U},\epsilon,\mathbf{\Psi}}}{\mathrm{maximize}} \,\,N_{\mathrm{obj}}(\xi_k,\iota_k)\!\!-\!\!\!\!\sum_{k\in\mathcal{K}}\!\! \nabla^{\mathrm{H}}_{\mathbf{\iota}_k}\hspace{-1mm}D_{\mathrm{obj}}(\iota_k^{(t)}\!)(\iota_k\!-\!\iota_k^{(t)}\!)\!\!-\!\!D_{\mathrm{obj}}(\iota_k^{(t)}\!)
  \label{optimization_problem_proof} \\
&\mathrm{s.t.}\,\,\mathrm{C1},\mathrm{\overline{\overline{C3a}}},\mathrm{\overline{\overline{C3b}}},\overline{\mathrm{\overline{C3c}}},\mathrm{\overline{C3d}},
\mathrm{\overline{C3e}},\mathrm{\overline{\overline{C3fa}}},\mathrm{\overline{\overline{C3fb}}},\mathrm{{\overline{C3g}}},\mathrm{\overline{C3h}},
\mathrm{C4a},\mathrm{C4b},\overline{\mathrm{C4d}},\mathrm{C4e},\overline{\mathrm{C5b}},\overline{\mathrm{C5c}},\mathrm{C6},\mathrm{C7},\mathrm{C11a},\overline{\mathrm{C11b}},\overline{\mathrm{C12a}},\notag\\
&\hspace{6mm}\overline{\mathrm{C12b}},\overline{\mathrm{C13a}},\overline{\mathrm{C13b}}.\notag\\
&\widetilde{\overline{\mathrm{C5a}}}\hspace{-1mm}:\|\mathbf{C}_{\overline{\mathrm{C5a}},k,j}\|^2_{\mathrm{F}}+\|\mathbf{W}_k^{(t)}\|^2_{\mathrm{F}}-2\mathrm{Tr}\Big((\mathbf{W}_k^{(t)})^{\mathrm{H}}\mathbf{W}_k\Big)+(2^{\tau_{k,j}})
\|\mathbf{\Psi}_{\mathrm{eve},j}^{(t)}\|^2_{\mathrm{F}}
-(2^{\tau_{k,j}+1})\mathrm{Tr}\Big((\mathbf{\Psi}_{\mathrm{eve},j}^{(t)})^{\mathrm{H}}\mathbf{\Psi}_{\mathrm{eve},j}\Big)-(2^{\tau_{k,j}+1}-2)\sigma^2_\mathrm{eve}\notag\\
&\hspace{7mm}+(2^{\tau_{k,j}}-1)\Big\{\|\mathbf{Z}-\mathbf{\Psi}_{\mathrm{eve},j} \|^2_{\mathrm{F}}
+ \|\mathbf{Z}^{(t)}\|^2_{\mathrm{F}}-2\mathrm{Tr}\Big((\mathbf{Z}^{(t)})^{\mathrm{H}}\mathbf{Z}\Big)\Big\}\leq0, \forall k,j,\notag\\
&\widetilde{\overline{\overline{\mathrm{C9}}}}\hspace{-1mm}:\xi_k+ \frac{1}{2}\|\mathbf{C}_{\overline{\overline{\mathrm{C9}}},k}\|^2_{\mathrm{F}}
+\frac{1}{2}\|\mathbf{\Psi}_{\mathrm{C9},k}^{(t)}\|^2_{\mathrm{F}}-\mathrm{Tr}\Big((\mathbf{\Psi}_{\mathrm{C9},k}^{(t)})^{\mathrm{H}}\mathbf{\Psi}_{\mathrm{C9},k}\Big)+\frac{1}{2}\|\mathbf{W}_k^{(t)}\|^2_{\mathrm{F}}-\mathrm{Tr}\Big((\mathbf{W}_k^{(t)})^{\mathrm{H}}\mathbf{W}_k\Big)\leq 0,\forall k,\notag\\
&\widetilde{\overline{\overline{\mathrm{C10}}}}\hspace{-1mm}:  \sum_{i\in\mathcal{K}/ \{k\}}\Big\{\frac{1}{2}\|\mathbf{C}_{\overline{\overline{\mathrm{C10}}},i,k}\|^2_{\mathrm{F}}\Big\}
+\frac{1}{2}\|\mathbf{Z}+\mathbf{\Psi}_{\mathrm{C10},k}\|^2_{\mathrm{F}}+\frac{K}{2}\|\mathbf{\Psi}_{\mathrm{C10},k}^{(t)}\|^2_{\mathrm{F}}-K\mathrm{Tr}\Big((\mathbf{\Psi}_{\mathrm{C10},k}^{(t)})^{\mathrm{H}}\mathbf{\Psi}_{\mathrm{C10},k}\Big)\notag\\
&\hspace{7mm}+\sum_{i\in\mathcal{K}/ \{k\}}\Big\{\frac{1}{2}\|\mathbf{W}_i^{(t)}\|^2_{\mathrm{F}}-\mathrm{Tr}\Big((\mathbf{W}_i^{(t)})^{\mathrm{H}}\mathbf{W}_i\Big)\Big\}+\frac{1}{2}\|\mathbf{Z}^{(t)}\|^2_{\mathrm{F}}-\mathrm{Tr}\Big((\mathbf{Z}^{(t)})^{\mathrm{H}}\mathbf{Z}\Big) \leq\iota_k ,\forall k, \notag\\
& \mathrm{C14}\hspace{-1mm}:\,\, \mathbf{C}_{\overline{\mathrm{C5a}},k,j}\succeq \mathbf{W}_k+\mathbf{\Psi}_\mathrm{eve,j},\forall k,j,\notag\\
&\mathrm{C15}\hspace{-1mm}:\,\, \mathbf{C}_{\overline{\overline{\mathrm{C9}}},k}\succeq \mathbf{W}_k-\mathbf{\Psi}_{\overline{\overline{\mathrm{C9}}},k},\forall k,\hspace{7mm}\mathrm{C16}\hspace{-1mm}:\,\, \mathbf{C}_{\overline{\overline{\mathrm{C10}}},i,k}\succeq \mathbf{W}_i+\mathbf{\Psi}_{\mathrm{C10},k},\forall i,k.\notag
\end{align}
\hrulefill
\end{figure*}

Since \eqref{optimization_problem_proof} is jointly convex with respect to the optimization variables and satisfies Slater's constraint qualification, its strong duality holds. The Lagrangian function of problem \eqref{optimization_problem_proof} in terms of $\mathbf{W}_k$ is given in \eqref{Lagrangian_modify}.
\begin{figure*}
\begin{align}
{\mathcal{L}} = & -\delta_{\mathrm{C1}}\mathrm{Tr}(\sum_{k\in\mathcal{K}}\mathbf{W}_k)+
\sum_{n\in\mathcal{N}}\mathrm{Tr}(\bm{\mathcal{S}}_{{{\mathrm{C3f}}},n}\mathbf{D}_{{{\mathrm{C3f}}},n})+\sum_{k\in\mathcal{K}}\sum_{j\in\mathcal{J}}2\delta_{\widetilde{\overline{\mathrm{C5a}}},k,j}
\mathrm{Tr}((\mathbf{W}_k^{(t)})^{\mathrm{H}}\mathbf{W}_k)+\sum_{k\in\mathcal{K}}\mathrm{Tr}(\mathbf{M}_{\mathrm{C7},k}\mathbf{W}_k)\label{Lagrangian_modify}\\
&+\sum_{k\in\mathcal{K}}\delta_{\widetilde{\overline{\overline{\mathrm{C9}}}},k}\mathrm{Tr}((\mathbf{W}_k^{(t)})^{\mathrm{H}}\mathbf{W}_k)
+\sum_{k\in\mathcal{K}}\delta_{\widetilde{\overline{\overline{\mathrm{C10}}}},k}\sum_{i\in\mathcal{K}/\{k\}}\mathrm{Tr}((\mathbf{W}_i^{(t)})^{\mathrm{H}}\mathbf{W}_i)\Big]-\hspace{-1mm}\sum_{j\in\mathcal{J}}\sum_{k\in\mathcal{K}}\mathrm{Tr}
(\mathbf{M}_{\mathrm{C14},k,j}\mathbf{W}_k)\hspace{-1mm}\notag\\
&-\hspace{-1mm}\sum_{k\in\mathcal{K}}\mathrm{Tr}
(\mathbf{M}_{\mathrm{C15},k}\mathbf{W}_k)-\sum_{i\in\mathcal{K}}\sum_{k\in\mathcal{K}}
\mathrm{Tr}(\mathbf{M}_{\mathrm{C16},i,k}\mathbf{W}_k)+\Delta.\notag
\end{align}
\hrulefill
\end{figure*}
 $\Delta$ in \eqref{Lagrangian_modify} denotes the collection of terms that are irrelevant to $\mathbf{W}_k$. In \eqref{Lagrangian_modify}, $\delta_{\mathrm{C1},k}, \delta_{\widetilde{\overline{\mathrm{C5a}}},k,j}, \delta_{\widetilde{\overline{\overline{\mathrm{C9}}}},k}$, and $\delta_{\widetilde{\overline{\overline{\mathrm{C10}}}},k}$ are the Lagrange multiplier for the constraints $\mathrm{C1}, \widetilde{\overline{\mathrm{C5a}}}, \widetilde{\overline{\overline{\mathrm{C9}}}}$, and $\widetilde{\overline{\overline{\mathrm{C10}}}}$, respectively. $\mathbf{D}_{{{\mathrm{C3f}}},n}$, $\mathbf{M}_{\mathrm{C7},k}$, $\mathbf{M}_{\mathrm{C14},k,j}$, $\mathbf{M}_{\mathrm{C15},k}$, and $\mathbf{M}_{\mathrm{C16},i,k}$ are the Lagrange multiplier matrices corresponding to constraints $\overline{\overline{\mathrm{C3f}}}$, $\mathrm{C7}$, $\mathrm{C14}$, $\mathrm{C15}$, and $\mathrm{C16}$ respectively.
Then, examining the Karush-Kuhn-Tucker (KKT) conditions for problem \eqref{optimization_problem_proof} yields
\begin{align}
\mathrm{K1}\hspace{-1mm}:&\, \mathbf{D}_{{{\mathrm{C3f}}},n}^*\!\succeq\!\mathbf{0}, \mathbf{M}_{\mathrm{C7},k}^*\!\succeq\!\mathbf{0}, \mathbf{M}_{\mathrm{C14},k,j}^*\!\succeq\!\mathbf{0},\mathbf{M}_{\mathrm{C15},k}^*\!\succeq\!\mathbf{0},\label{KKT:K1_modify}\\
&\mathbf{M}_{\mathrm{C16},i,k}^*\!\succeq\!\mathbf{0},\delta_{\mathrm{C1},k}\!\geq\!0, \delta_{\overline{\mathrm{C5a}},k,j}\!\geq\!0, \delta_{\overline{\overline{\mathrm{C9}}},k}\geq0,\delta_{\overline{\overline{\mathrm{C10}}},k}\!\geq\!0  \notag\\
\mathrm{K2}\hspace{-1mm}:&\,\mathbf{M}_{\mathrm{C7},k}^*\mathbf{W}_k^*\!=\!\mathbf{0}, \mathbf{M}_{\mathrm{C14},k,j}^*\mathbf{W}_k^*\!=\!\mathbf{0}, \mathbf{M}_{\mathrm{C15},k}^*\mathbf{W}_k^*\!=\!\mathbf{0}, \label{KKT:K2_modify}\\ &\mathbf{M}_{\mathrm{C16},i,k}^*\mathbf{W}_k^*=\mathbf{0}, \forall i,k\text{, and }\notag\\
\mathrm{K3}\hspace{-1mm}:&\,\nabla_{\mathbf{W}_k}\mathcal{L} = \mathbf{0}. \label{KKT:K3_modify}
\end{align}
For ease of presentation, the K3 in \eqref{KKT:K3_modify} can be recast as:
\begin{align}
 \delta_{\mathrm{C1},k}^*\mathbf{I}_{M_t}-\mathbf{\Upsilon}^*_k = \mathbf{M}_{\mathrm{C7},k}^*, \label{K3_rewritten_modify}
\end{align}
where $\mathbf{\Upsilon}^*_k = \mathbf{B}^* +2\sum_{k\in\mathcal{K}}\sum_{j\in\mathcal{J}}\delta_{\overline{\mathrm{C5a}},k,j}^*\mathbf{W}_k^{(t)}-\sum_{k\in\mathcal{K}}\mathbf{M}_{\mathrm{C15},k}^*+ \sum_{k\in\mathcal{K}}\delta_{\overline{\overline{\mathrm{C9}}},k}^*\mathbf{W}_k^{(t)}+\delta_{\overline{\overline{\mathrm{C10}}},k}^* \sum_{i\in\mathcal{K}/\{k\}}\mathbf{W}_i^{(t)}-\sum_{j\in\mathcal{J}}\sum_{k\in\mathcal{K}}\mathbf{M}_{\mathrm{C14},k,j}^*-\sum_{i\in\mathcal{K}}\sum_{k\in\mathcal{K}}\mathbf{M}_{\mathrm{C16},i,k}^*$
and $\mathbf{B}^*=\mathrm{Tr}(\mathbf{O}_{\hat{\mathbf{g}}_n}\mathbf{D}_{{{\mathrm{C3f}}},n}^*\mathbf{O}_{\hat{\mathbf{g}}_n}^{\mathrm{H}})\mathbf{I}_{M_t}$.

From \eqref{KKT:K2_modify}, we know that matrix $\mathbf{W}^*_k$ lies in the null space of $\mathbf{M}_{\mathrm{C7},k}^*$. To reveal the rank of $\mathbf{W}^*_k$, we first investigate the structure of $\mathbf{M}_{\mathrm{C7},k}^*$.
When $\delta_{\mathrm{C1},k}^* =0$, $\mathbf{\Upsilon}^*_k \preceq\mathbf{0}$ which leads to the dual problem of problem \eqref{optimization_problem_proof}  unbounded.
Thus $\delta_{\mathrm{C1},k}^* >0$ holds. According to \eqref{K3_rewritten_modify}, if $\lambda_{\max}(\mathbf{\Upsilon}^*_k )> \delta_{\mathrm{C1},k}^*$, then $\lambda_{\max}(\mathbf{M}_{\mathrm{C7},k}^*)<0$, which contradicts $\mathrm{K1}$ in \eqref{KKT:K1_modify}. If $\lambda_{\max}(\mathbf{\Upsilon}^*_k )< \delta_{\mathrm{C1},k}^*$, $\mathbf{M}_{\mathrm{C7},k}^*$ is a positive definite matrix with full rank, which yields the solution $\mathbf{W}^*_k=\mathbf{0}$ and $\mathrm{Rank}(\mathbf{W}^*_k)=0$. If $\lambda_{\max}(\mathbf{\Upsilon}^*_k )=\delta_{\mathrm{C1},k}^*$ holds at the optimal solution, in order to have a bounded optimal dual solution, the null space of  $\mathbf{M}_{\mathrm{C7},k}^*$ is spanned by vector $\mathbf{r}_{\max}\in\mathbb{C}^{M_t+1}$, which is a unit-norm eigenvector of $\mathbf{\Upsilon}^*_k$ associated with eigenvalue  $\lambda_{\max}(\mathbf{\Upsilon}^*_k)$. As a result, there exists the optimal beamforming matrix given by
$\mathbf{W}^*_k = \nu \mathbf{r}_{\max} \mathbf{r}_{\max}^{\mathrm{H}}$,
where $\nu$ is a parameter which makes the power consumption of transmitter satisfies constraint C1.

\vspace{-2mm}
\bibliographystyle{IEEEtran}\vspace{-2mm}
\bibliography{IRS-EH}
\vspace{-5mm}
\begin{IEEEbiography}[{\includegraphics[width=0.95in,height=1.3in]{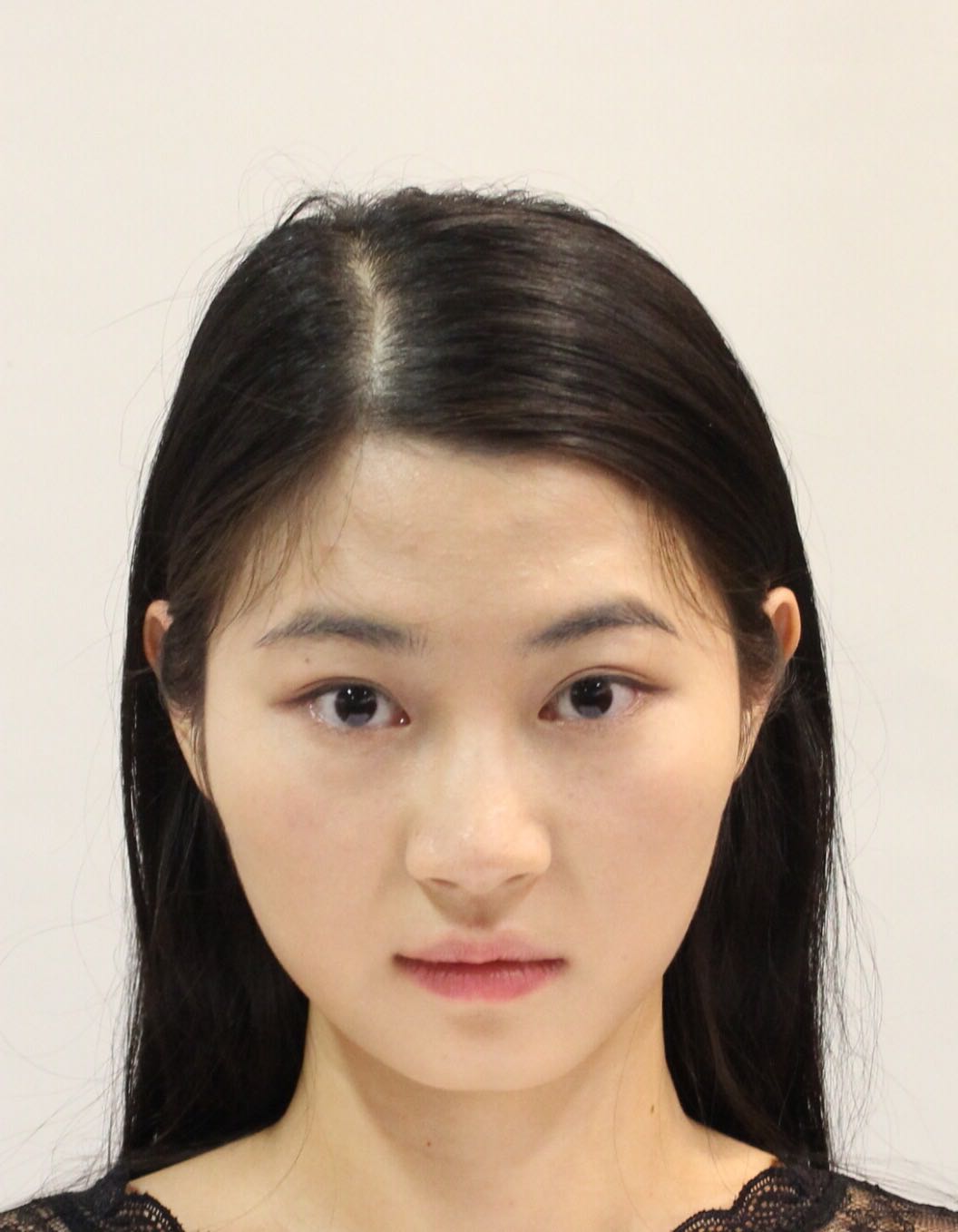}}]{Shaokang Hu (S'20)} received the B.S. degree in Engineering and Telecommunication from the University of New South Wales, Australia, Sydney in 2017 and the M.E. degree in Electrical Engineering and Telecommunication from the University of New South Wales, Sydney, Australia, in 2018. She is currently pursuing the Ph.D. degree in Telecommunication at the University of New South Wales, Sydney, Australia. Her current research interests include convex and non-convex optimization, IRS-assisted communication, resource allocation, physical-layer security, and green (energy-efficient) wireless communications.
\end{IEEEbiography}
\vspace{-5mm}
\begin{IEEEbiography}[{\includegraphics[width=1in,height=1.1in]{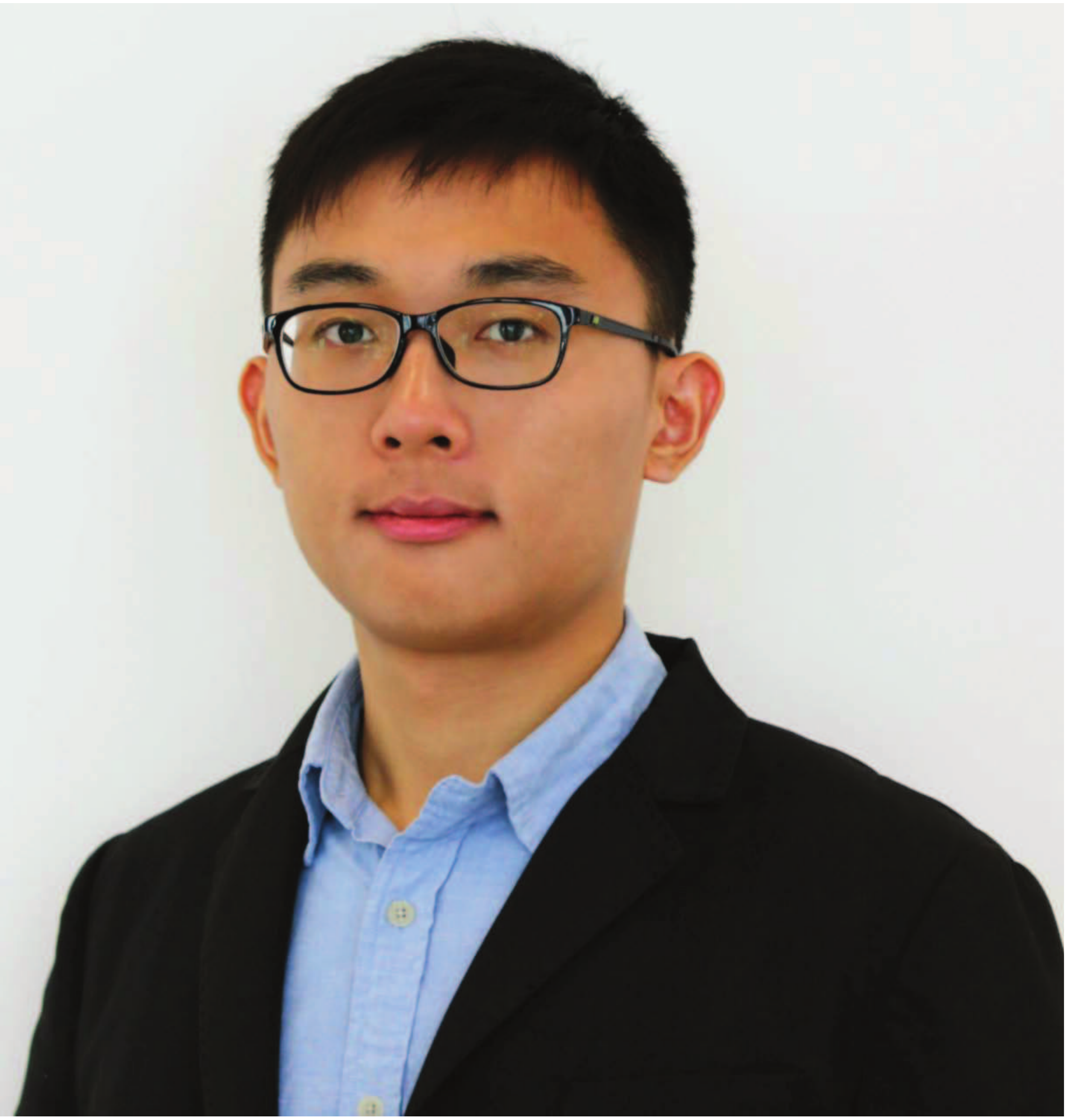}}]{Zhiqiang Wei (S’16–M’19)} received the B.E. degree in information engineering from Northwestern Polytechnical University (NPU), Xi’an, China, in 2012, and the Ph.D. degree in electrical engineering and telecommunications from the University of New South Wales, Sydney, Australia, in 2019. From 2019 to 2020, he was a Postdoctoral Research Fellow with the University of New South Wales. He is currently a Humboldt Postdoctoral Research Fellow with the Friedrich-Alexander University Erlangen-Nuremberg. His current research interests include convex/non-convex optimization, resource allocation design, intelligent reflecting surface, millimeter-wave communications, and orthogonal time-frequency space (OTFS) modulation. He received the Best Paper Award from the IEEE International Conference on Communications (ICC), in 2018.
\end{IEEEbiography}
\vspace{-5mm}
\begin{IEEEbiography}[{\includegraphics[width=0.95in,height=1.3in]{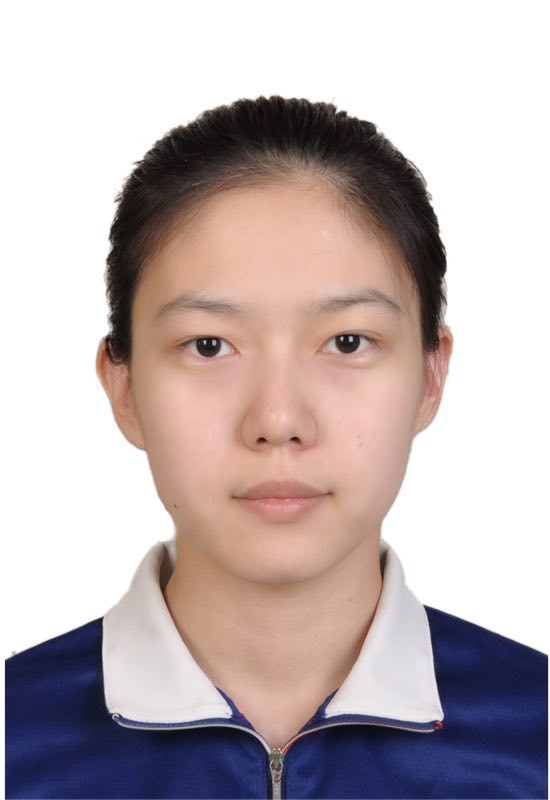}}]{Yuanxin Cai (S'19)} received the B.S. degree in Optical Information Science and Technology from the University of Electronic and Technology of China, Sichuan, China, in 2015 and the M.E. degree in Electrical Engineering and Telecommunication from the University of New South Wales, Sydney, Australia, in 2018. She is currently pursuing the Ph.D. degree in Telecommunication with the University of New South Wales, Sydney, Australia. Her current research interests include convex and non-convex optimization, UAV-assisted communication, resource allocation, physical-layer security, and green (energy-efficient) wireless communications.
\end{IEEEbiography}
\vspace{-5mm}
\begin{IEEEbiography}[{\includegraphics[width=0.95in,height=1.3in]{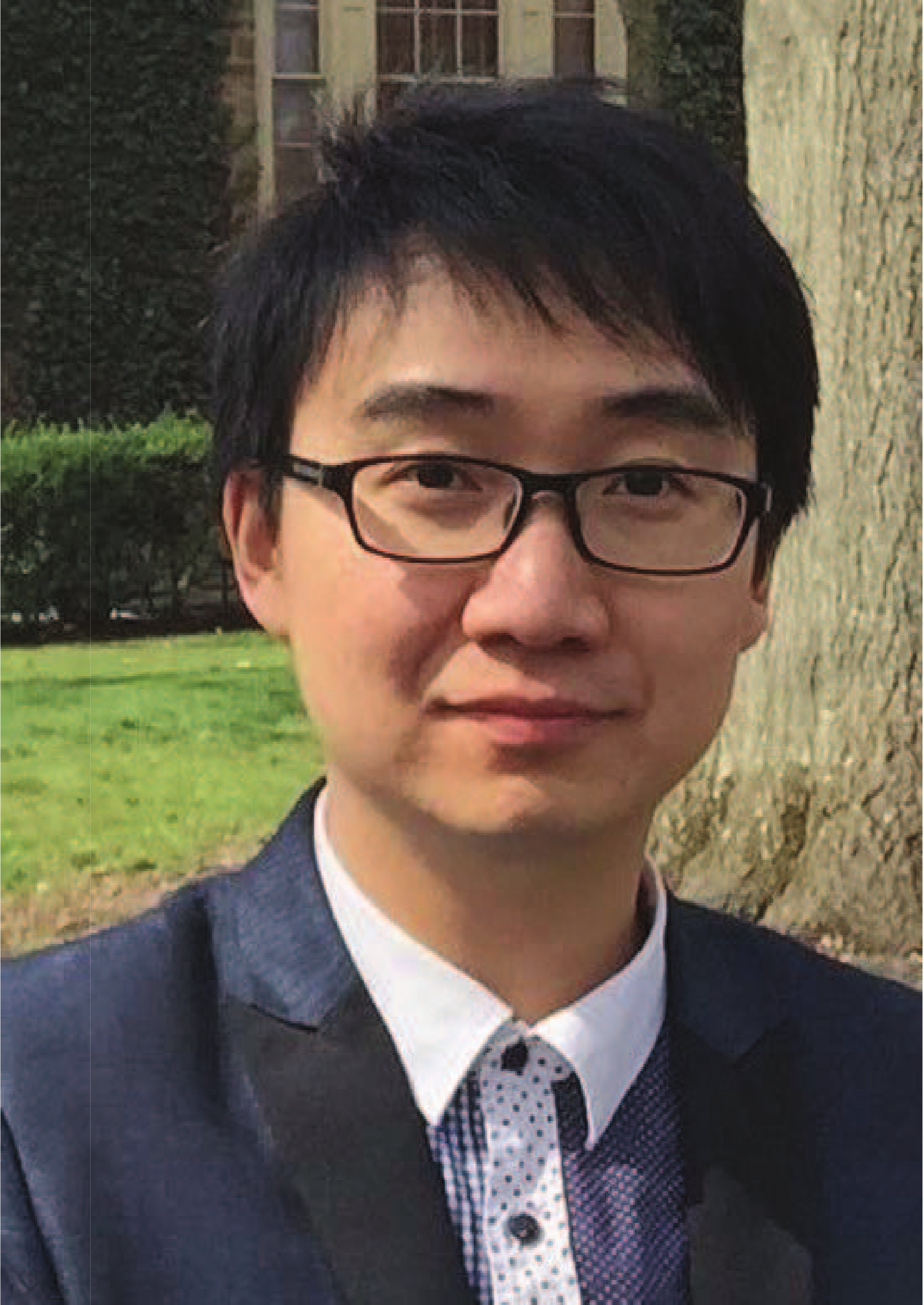}}]{Chang Liu (M’19)} received the B.E. degree in electronic information engineering from Dalian Maritime University, Dalian, China, in 2012, and the Ph.D. degree in signal and information processing from Dalian University of Technology, China, in 2017. He is currently a Postdoctoral Research Fellow with the University of New South Wales, Sydney, Australia. He was a visiting scholar in department of electrical engineering and computer science at University of Tennessee, Knoxville, USA, and a postdoctoral research fellow with the National Key Laboratory of Science and Technology on Communications, University of Electronic Science and Technology of China, Chengdu, China. His research interests include Machine Learning for Communications, Statistical Signal Processing, IRS-assisted Communications, UAV-assisted Communications, Internet-of-Things, and Spectrum Sensing and Sharing in Cognitive Radio.
\end{IEEEbiography}
\vspace{-5mm}
\begin{IEEEbiography}[{\includegraphics[width=0.95in,height=1.4in]{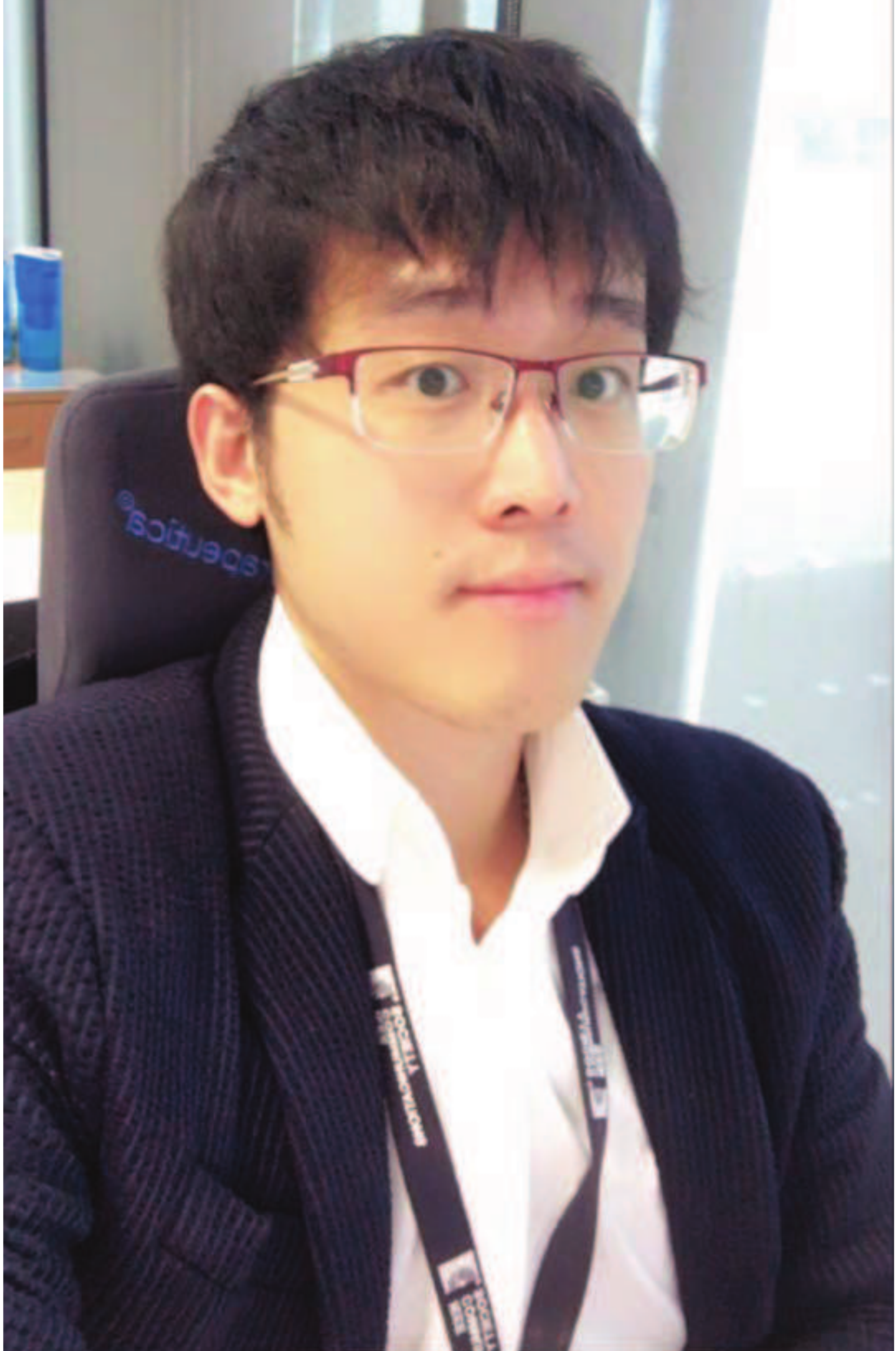}}]{Derrick
Wing Kwan Ng (S'06-M'12-SM'17-F'21)  } received the bachelor degree with first-class honors and the Master of Philosophy (M.Phil.) degree in electronic engineering from the Hong Kong University of Science and Technology (HKUST) in 2006 and 2008, respectively. He received his Ph.D. degree from the University of British Columbia (UBC) in 2012. He was a senior postdoctoral fellow at the Institute for Digital Communications, Friedrich-Alexander-University Erlangen-N\"urnberg (FAU), Germany. He is now working as a Senior Lecturer and a Scientia Fellow at the University of New South Wales, Sydney, Australia.  His research interests include convex and non-convex optimization, physical layer security, IRS-assisted communication, UAV-assisted communication, wireless information and power transfer, and green (energy-efficient) wireless communications.

Dr. Ng received the Australian Research Council (ARC) Discovery Early Career Researcher Award 2017,   the Best Paper Awards at the WCSP 2020,  IEEE TCGCC Best Journal Paper Award 2018, INISCOM 2018, IEEE International Conference on Communications (ICC) 2018,  IEEE International Conference on Computing, Networking and Communications (ICNC) 2016,  IEEE Wireless Communications and Networking Conference (WCNC) 2012, the IEEE Global Telecommunication Conference (Globecom) 2011, and the IEEE Third International Conference on Communications and Networking in China 2008.  He has been serving as an editorial assistant to the Editor-in-Chief of the IEEE Transactions on Communications from Jan. 2012 to Dec. 2019. He is now serving as an editor for the IEEE Transactions on Communications,  the IEEE Transactions on Wireless Communications, and an area editor for the IEEE Open Journal of the Communications Society. Also, he has been listed as a Highly Cited Researcher by Clarivate Analytics since 2018.
\end{IEEEbiography}
\vspace{-5mm}
\begin{IEEEbiography}[{\includegraphics[width=0.95in,height=1.3in]{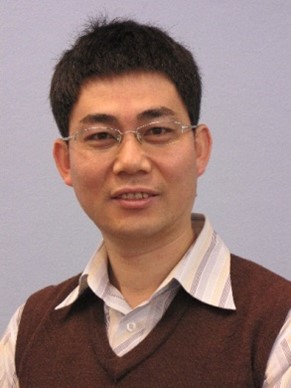}}]{Jinhong Yuan (M'02--SM'11--F'16)} received the B.E. and Ph.D. degrees in electronics engineering from the Beijing Institute of Technology, Beijing, China, in 1991 and 1997, respectively. From 1997 to 1999, he was a Research Fellow with the School of Electrical Engineering, University of Sydney, Sydney, Australia. In 2000, he joined the School of Electrical Engineering and Telecommunications, University of New South Wales, Sydney, Australia, where he is currently a Professor and Head of Telecommunication Group with the School. He has published two books, five book chapters, over 300 papers in telecommunications journals and conference proceedings, and 50 industrial reports. He is a co-inventor of one patent on MIMO systems and two patents on low-density-parity-check codes. He has co-authored four Best Paper Awards and one Best Poster Award, including the Best Paper Award from the IEEE International Conference on Communications, Kansas City, USA, in 2018, the Best Paper Award from IEEE Wireless Communications and Networking Conference, Cancun, Mexico, in 2011, and the Best Paper Award from the IEEE International Symposium on Wireless Communications Systems, Trondheim, Norway, in 2007. He is an IEEE Fellow and currently serving as an Associate Editor for the IEEE Transactions on Wireless Communications and IEEE Transactions on Communications. He served as the IEEE NSW Chapter Chair of Joint Communications/Signal Processions/Ocean Engineering Chapter during 2011-2014 and served as an Associate Editor for the IEEE Transactions on Communications during 2012-2017. His current research interests include error control coding and information theory, communication theory, and wireless communications.
\end{IEEEbiography}

\end{document}